\newcommand\blfootnote[1]{%
  \begingroup
  \renewcommand\thefootnote{}\footnote{#1}%
  \addtocounter{footnote}{-1}%
  \endgroup
}
\theoremstyle{mythm}
\newtheorem{theorem}{Theorem}%[section]
\newtheorem{example}{Example}
\newtheorem{remark}{Remark}
\newtheorem{prop}{Proposition}%[section]
\newtheorem{coro}{Corollary}%[section]
\newtheorem{lemma}{Lemma}%[section]
\begin{document}
\title{On the Non-Adaptive Zero-Error Capacity of the Discrete Memoryless Two-Way Channel}

\author{
Yujie Gu 
\thanks{
Y. Gu is with the Faculty of Information Science and Electrical Engineering, Kyushu University, Fukuoka, Japan \{email: gu@inf.kyushu-u.ac.jp\}. This work was done while the author was with the
Department of EE -- Systems, Tel Aviv University, Tel Aviv, Israel.} 
and Ofer Shayevitz
\thanks{
O. Shayevitz is with the
Department of EE -- Systems, Tel Aviv University, Tel Aviv, Israel \{email: ofersha@eng.tau.ac.il\}. } 

\blfootnote{This work was supported by an ERC grant no. 639573, ISF grant no. 1495/18, and JSPS grant no. 21K13830, and presented in part at the 2019 IEEE International Symposium on Information Theory~\cite{GS2019}.}
}

\date{}
\maketitle

\begin{abstract}
We study the problem of communicating over a discrete memoryless two-way channel using non-adaptive schemes, under a zero probability of error criterion. We derive single-letter inner and outer bounds for the zero-error capacity region, based on random coding, linear programming, linear codes, and the asymptotic spectrum of graphs. Among others, we provide a single-letter outer bound based on a combination of Shannon's vanishing-error capacity region and a two-way analogue of the linear programming bound for point-to-point channels, which in contrast to the one-way case, is generally better than both. Moreover, we establish an outer bound for the zero-error capacity region of a two-way channel via the asymptotic spectrum of graphs and show that this bound could be achieved for certain cases.
\end{abstract}

%\begin{IEEEkeywords}
%    Zero-error capacity, discrete memoryless two-way channel, Shannon capacity of a graph
%\end{IEEEkeywords}

\section{Introduction}

The problem of reliable communication over a {\em discrete memoryless two-way channel (DM-TWC)} was originally introduced and investigated by Shannon \cite{Shannon1961}, in a seminal paper that has marked the inception of multi-user information theory. A DM-TWC is characterized by a quadruple of finite input and output alphabets $\mathcal{X}_1$, $\mathcal{X}_2$, $\mathcal{Y}_1$, $\mathcal{Y}_2$, and
a conditional probability distribution $P_{Y_1,Y_2|X_1,X_2}(y_1,y_2|x_1,x_2)$, where
$x_1\in \mathcal{X}_1$, $x_2\in \mathcal{X}_2$, $y_1\in \mathcal{Y}_1$, $y_2\in \mathcal{Y}_2$. The channel is memoryless in the sense that
channel uses are independent, i.e., for any $i$,
\begin{align*}\label{channel-memoryless}
    &P_{Y_{1i},Y_{2i}|X^i_1,X^i_2,Y_1^{i-1},Y_2^{i-1}}(y_{1i},y_{2i}|x^i_1,x^i_2,y_1^{i-1},y_2^{i-1})= P_{Y_1,Y_2|X_1,X_2}(y_{1i},y_{2i}|x_{1i},x_{2i}).
\end{align*}

In \cite{Shannon1961}, Shannon provided inner and outer bounds for the vanishing-error capacity region of the DM-TWC, in the general setting where the users are allowed to adapt their transmissions on the fly based on past observations. We note that Shannon's inner bound is tight for non-adaptive schemes, namely when the users map their messages to codewords in advance. The non-adaptive DM-TWC is also called the \textit{restricted} DM-TWC in \cite{Shannon1956}.
Shannon's inner and outer bounds have later been improved by utilizing auxiliary random variables techniques~\cite{Han1984}, \cite{HW1989}, \cite{Zhang1986},
and sufficient conditions under which his bounds coincide have been obtained~\cite{Weng2018,Weng2019}. However, despite much effort, the capacity region of a general DM-TWC under the vanishing-error criterion remains elusive. In fact, a strong indicator for the inherent difficulty of the problem can be observed in Blackwell's binary multiplying channel, a simple, deterministic, common-output channel whose capacity remains unknown hitherto \cite{HW1989}, \cite{SP2018}, \cite{Scha1982}, \cite{Scha1983}, \cite{Zhang1986}.

In yet another seminal work, Shannon proposed and studied the zero-error capacity of the point-to-point discrete memoryless channel~\cite{Shannon1956}, also known as the Shannon capacity of a graph. This problem has been extensively studied by others, most notably in~\cite{Haemers1979} and~\cite{Lovasz1979}, yet remains generally unsolved. 
In this paper, we consider the problem of zero-error communication over a DM-TWC. We limit our discussion to the case of non-adaptive schemes, for which the capacity region is known in the vanishing-error case~\cite{Shannon1961}. Despite the obvious difficulty of the problem (the point-to-point zero-error capacity is a special case), its two-way nature adds a new combinatorial dimension that renders it interesting to study. To the best of our knowledge, this problem has not been studied before, except in  the special case of the binary multiplying channel, where upper and lower bounds on non-adaptive zero-error sum capacity have been obtained~\cite{HK1995}, \cite{J2018}, \cite{Tolhuizen}. Our bounds are partially based on generalizations of these ideas.

The problem of non-adaptive communication over a DM-TWC
can be formulated as follows. Alice and Bob would like to simultaneously convey messages $m_1\in [2^{nR_1}]$ and $m_2\in [2^{nR_2}]$ respectively to each other, over $n$ uses of the DM-TWC $P_{Y_1,Y_2|X_1,X_2}$. To that end, Alice maps her message to an input sequence (codeword)  $x^n_1\in \mathcal{X}^n_1$ using an encoding function $f_1: [2^{nR_1}]\to \mathcal{X}^n_1$, and Bob maps his message into an input sequence (codeword) $x^n_2\in \mathcal{X}^n_2$ using an encoding function $f_2: [2^{nR_2}]\to \mathcal{X}^n_2$. We call the pair of codeword collections $( f_1([2^{nR_1}]), f_2([2^{nR_2}]) )$ a {\em codebook pair}. Note that the encoding functions depend only on the messages, and not on the observed outputs during the transmission, hence the name \textit{non-adaptive}. When transmissions end, Alice and Bob observe the resulting (random) channel outputs $Y^n_1\in \mathcal{Y}^n_1$ and $Y^n_2\in \mathcal{Y}^n_2$ respectively, and attempt to decode the message sent by their counterpart, without error. When this is possible, i.e., when there exist decoding functions $\phi_1: [2^{nR_1}]\times \mathcal{Y}^n_1\to [2^{nR_2}]$ and $\phi_2: [2^{nR_2}]\times \mathcal{Y}^n_2\to [2^{nR_1}]$ such that $m_2 = \phi_1(m_1,Y_1^n)$ and $m_1 =\phi_2(m_2,Y_2^n)$, for all $m_1,m_2$, with probability one, then the codebook pair (or the encoding functions) is called {\em $(n,R_1,R_2)$ uniquely decodable}. A rate pair $(R_1,R_2)$ is \textit{achievable} for the DM-TWC if an $(n,R_1,R_2)$ uniquely decodable code exists for some $n$. The \textit{non-adaptive zero-error capacity region} of a DM-TWC $P_{Y_1,Y_2|X_1,X_2}$ is the closure of the set of all achievable rate pairs, and is denoted here by $\mathscr{C}_{ze}(P_{Y_1,Y_2|X_1,X_2})$.
Moreover, the \textit{non-adaptive zero-error sum-capacity} of a DM-TWC $P_{Y_1,Y_2|X_1,X_2}$, denoted by $\mathscr{C}^{sum}_{ze}(P_{Y_1,Y_2|X_1,X_2})$, is the supremum of sum-rate $R_1+R_2$ over all achievable rate pairs.

The main objective of this paper is to provide several single-letter outer and inner bounds on the non-adaptive zero-error capacity region of the DM-TWC.
The remainder of this paper is organized as follows.
In Section \ref{Sec-2}, we consider the confusion graphs of a DM-TWC, and discuss their performances with respect to the graph homomorphisms and the one-shot zero-error communication.
Section \ref{sec-outer-bd} is devoted to three general outer bounds of the zero-error capacity region of DM-TWC, which are based on
Shannon's vanishing-error non-adaptive capacity region, a two-way analogue of the linear programming bound for point-to-point channels, and 
the Shannon capacity of a graph. 
In Section \ref{sec-inner-bd}, we provide two general inner bounds using random coding and random linear codes respectively.
In Section \ref{sec-certain-TWC}, we
establish outer bounds for certain types of DM-TWC via 
the asymptotic spectra of graphs, and also explicitly construct the uniquely decodable codebook pairs achieving the outer bound.
Some concluding remarks are given in Section \ref{sec-conclusion}.

\section{Preliminaries}
\label{Sec-2}
%{ ***************************************************************}

\subsection{Shannon capacity of a graph}
\label{subsec-shannon-capacity-graph}

First we briefly review the
Shannon capacity of a graph which measures the zero-error capacity of a discrete memoryless point-to-point channel.
Throughout the paper all logarithms are taken to base 2.

Let $G = (V, E)$ be a graph with vertex set $V$ and edge set $E$. Two vertices $v_1, v_2$ are \textit{adjacent}, denoted as $v_1\sim v_2$, if there is an edge
between $v_1$ and $v_2$, i.e., $\{v_1,v_2\}\in E$.
An \textit{independent set} in $G$ is a subset of pairwise non-adjacent vertices.
A \textit{maximum independent set} is an independent set with the  largest possible number of vertices.
And this number is called the \textit{independence number} of $G$, denoted by $\alpha(G)$.
The \textit{complement} of graph $G$, denoted by $\overline{G}$, is a graph on the same vertices such that two distinct vertices of $\overline{G}$ are adjacent if and only if they are not adjacent in $G$.
We write $K_n$ and $\overline{K}_n$ as the complete and empty graph over $n$ vertices respectively.

Let $G = (V(G), E(G))$ and $H = (V(H), E(H))$ be two graphs. The \textit{strong product} (or normal product) $G\boxtimes H$ of the graphs $G$ and $H$ is a graph such that
\begin{itemize}
  \item[(1)] the vertex set of $G\boxtimes H$ is the Cartesian product $V(G)\times V(H)$;
  \item[(2)] two vertices $(u,u')$ and $(v,v')$ are adjacent if and only if one of the followings holds: (a) $u=v$ and $u'\sim v'$; (b) $u\sim v$ and $u'=v'$; (c) $u\sim v$ and $u'\sim v'$.
\end{itemize}
The $n$-fold strong product of graph $G$ with itself is denoted as $G^{n}$. The \textit{Shannon capacity of graph} $G$ was defined in \cite{Shannon1956} as
\begin{equation*}
\Theta(G)\triangleq \sup_{n} \frac{1}{n}\log \alpha(G^{n})=\lim_{n\to \infty} \frac{1}{n}\log \alpha(G^{n}),
\end{equation*}
where the limit exists by Fekete's lemma.

The \textit{disjoint union} $G\sqcup H$ of the graphs $G$ and $H$ is a graph such that $V(G\sqcup H)=V(G)\sqcup V(H)$ and $E(G\sqcup H)=E(G)\sqcup E(H)$.
A \textit{graph homomorphism} from $G$ to $H$, denoted as $G\to H$, is a mapping $\varphi:\, V(G)\to V(H)$ such that if $g_1\sim g_2$ in $G$, then $\varphi(g_1)\sim \varphi(g_2)$ in $H$.
Denote $G\preccurlyeq H$ if there is a graph homomorphism $\overline{G}\to \overline{H}$ from the complement of $G$ to the complement of $H$.

In \cite{Zuiddam}, Zuiddam introduced the asymptotic spectrum of graphs and provided a dual characterisation of the Shannon capacity of a graph
by applying Strassen's theory of asymptotic spectra, which includes the
Lov\'{a}sz theta number~\cite{Lovasz1979},
the fractional clique cover number,
the complement of the fractional orthogonal rank~\cite{CMR2014},
and the fractional Haemers' bound over any field
\cite{Bla2013, BC2018, Haemers1979} as specific elements of the asymptotic spectrum (also called spectral points).

\begin{theorem}[\cite{Zuiddam}]\label{thm-asym-spec}
Let $\mathcal{G}$ be a collection of graphs which is closed under the disjoint union $\sqcup$ and the strong product $\boxtimes$, and also contains the graph
with a single vertex $K_1$.
Define the \textit{asymptotic spectrum}  $\Delta(\mathcal{G})$ as the set of all mappings $\eta: \mathcal{G}\to \mathbb{R}_{\ge 0}$ such that for all
$G,H\in \mathcal{G}$
\begin{itemize}
  \item[(1)] if $G\preccurlyeq H$, then $\eta(G)\le \eta(H)$;
  \item[(2)] $\eta(G\sqcup H)=\eta(G)+\eta(H)$;
  \item[(3)] $\eta(G\boxtimes H)=\eta(G)\eta(H)$;
  \item[(4)] $\eta(K_1)=1$.
\end{itemize}
Then
$\Theta(G)=\min\limits_{\eta\in \Delta(\mathcal{G})} \log \eta(G)$.
In other words, $\min\limits_{\eta\in \Delta(\mathcal{G})} \eta(G)=2^{\Theta(G)}$ 
and $\alpha(G)\le \min\limits_{\eta\in \Delta(\mathcal{G})} \eta(G)$.
\end{theorem}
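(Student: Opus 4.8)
The plan is to deduce this from Strassen's theory of asymptotic spectra, applied to the semiring of graphs. First I would check that (isomorphism classes of) finite graphs form a commutative semiring: disjoint union $\sqcup$ is the addition, with the empty graph as the zero element; the strong product $\boxtimes$ is the multiplication, with $K_1$ as the unit; and the distributive law $(G\sqcup H)\boxtimes K\cong(G\boxtimes K)\sqcup(H\boxtimes K)$ follows directly from the definitions of the two products. The natural numbers sit inside this semiring via $m\mapsto\overline{K}_m$, compatibly with the operations, since $\overline{K}_m\boxtimes G$ is precisely the disjoint union of $m$ copies of $G$; the collection $\mathcal{G}$ in the statement is a sub-semiring of this kind.

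Second, I would verify that the cohomomorphism relation $\preccurlyeq$ is a \emph{Strassen preorder} on $\mathcal{G}$, i.e., that it is reflexive and transitive, compatible with $\sqcup$ and $\boxtimes$, restricts to the usual total order on the embedded copy of $\mathbb{N}$, and is Archimedean. Reflexivity and transitivity are immediate from identity maps and composition of graph homomorphisms. For compatibility one notes that $\overline{G\sqcup H}$ is the join of $\overline{G}$ and $\overline{H}$ while $\overline{G\boxtimes H}$ is the co-normal (disjunctive) product of $\overline{G}$ and $\overline{H}$, and that homomorphisms $\overline{G_1}\to\overline{H_1}$, $\overline{G_2}\to\overline{H_2}$ induce homomorphisms on these combined graphs coordinatewise. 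The restriction to $\mathbb{N}$ is the trivial fact that $\overline{K}_m\preccurlyeq\overline{K}_{m'}$ iff $m\le m'$. For the Archimedean property — for every $G$ and every nonempty $H$ there is $r\in\mathbb{N}$ with $G\preccurlyeq r\cdot H$ — one can take $r=|V(G)|$ and map the $i$-th vertex of $\overline{G}$ into the $i$-th copy of $\overline{H}$ in $\overline{r\cdot H}$; every edge of $\overline{G}$ joins two distinct copies and hence maps to an edge of the join $\overline{r\cdot H}$. With these properties established, Strassen's theorem on asymptotic spectra applies and yields two things: the asymptotic spectrum $\Delta(\mathcal{G})$ defined by conditions (1)--(4) is nonempty, and for nonempty $H$ one has $G\lesssim H$ — meaning $G^{\boxtimes n}\preccurlyeq H^{\boxtimes(n+o(n))}$ — if and only if $\eta(G)\le\eta(H)$ for every $\eta\in\Delta(\mathcal{G})$.

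Third, I would translate between this abstract picture and $\Theta$. The bridge is the identity $\alpha(G)=\max\{m:\overline{K}_m\preccurlyeq G\}$, which holds because a homomorphism from $K_m=\overline{\overline{K}_m}$ into $\overline{G}$ is exactly a choice of $m$ pairwise non-adjacent vertices of $G$. The ``easy'' inequality $2^{\Theta(G)}\le\min_{\eta}\eta(G)$ does not even need the duality: from $\overline{K}_{\alpha(G^{\boxtimes n})}\preccurlyeq G^{\boxtimes n}$, conditions (1)--(4) give $\alpha(G^{\boxtimes n})=\eta(\overline{K}_{\alpha(G^{\boxtimes n})})\le\eta(G^{\boxtimes n})=\eta(G)^n$, so $\alpha(G^{\boxtimes n})^{1/n}\le\eta(G)$ for all $n$, and $n\to\infty$ gives $2^{\Theta(G)}\le\eta(G)$ for each $\eta$. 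For the reverse inequality I would fix $k$ and any integer $m<(\min_{\eta}\eta(G))^{k}=\min_{\eta}\eta(G^{\boxtimes k})$; then $\eta(\overline{K}_m)=m\le\eta(G^{\boxtimes k})$ for all $\eta$, so by the duality $\overline{K}_m\lesssim G^{\boxtimes k}$, which unwound means $\alpha(G^{\boxtimes k(n+o(n))})\ge m^{n}$, hence $2^{\Theta(G)}=\lim_{N}\alpha(G^{\boxtimes N})^{1/N}\ge m^{1/k}$; taking the supremum over such $m$ and then letting $k\to\infty$ gives $2^{\Theta(G)}\ge\min_{\eta}\eta(G)$. Combining the two bounds yields $\min_{\eta\in\Delta(\mathcal{G})}\eta(G)=2^{\Theta(G)}$, equivalently $\Theta(G)=\min_{\eta\in\Delta(\mathcal{G})}\log\eta(G)$.

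The entire difficulty is concentrated in the single invoked black box, Strassen's theorem (nonemptiness of $\Delta(\mathcal{G})$ together with the $\lesssim$-characterisation): its proof is a separation argument — Hahn--Banach / a real Positivstellensatz for the preordered semiring, plus a compactness step to pass from satisfying finitely many spectral points to satisfying all of them. By contrast, the semiring axioms, the four closure properties of $\preccurlyeq$, the Archimedean bound, and the dictionary between $\alpha$, $\Theta$, and the (asymptotic) subrank are all routine.
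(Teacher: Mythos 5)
Your outline is correct and is essentially the argument of the cited source: the paper itself imports Theorem~\ref{thm-asym-spec} from Zuiddam without proof, and Zuiddam's proof proceeds exactly as you describe --- verify that graphs under $\sqcup$ and $\boxtimes$ form a commutative semiring with $\mathbb{N}$ embedded as the empty graphs $\overline{K}_m$, check that the cohomomorphism relation $\preccurlyeq$ is a Strassen preorder (including the Archimedean bound you give), invoke Strassen's duality for the asymptotic preorder, and translate via $\alpha(G)=\max\{m:\overline{K}_m\preccurlyeq G\}$. The only point worth flagging is that writing ``$\min$'' rather than ``$\inf$'' additionally uses the compactness of $\Delta(\mathcal{G})$ (a closed subset of $\prod_G[0,|V(G)|]$), which is also part of the Strassen/Zuiddam package you invoke as a black box.
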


As remarked in \cite{Zuiddam}, $2^{\Theta(G)}$ is in general not an element of $\Delta(\mathcal{G})$.
In fact, $2^{\Theta(G)}$ is not additive under $\sqcup$ by a result of Alon~\cite{Alon98}, and also not multiplicative under $\boxtimes$ by a result of Haemers~\cite{Haemers1979}.
In Section \ref{subsec-outer-bd-spectrum-graphs}, to derive an outer bound for zero-error capacity of a DM-TWC, we will employ the multiplicativity of $\eta(G)$ for $\eta\in \Delta(\mathcal{G})$ under the $\boxtimes$ operation.

\subsection{Confusion graphs of channels}
In this subsection
we characterize the zero-error capacity region of a discrete memoryless point-to-point channel and a DM-TWC with respect to their corresponding confusion graphs.

A point-to-point channel consists of a finite input alphabet $\mathcal{X}$, a finite output alphabet $\mathcal{Y}$, and a conditional probability distribution $P_{Y|X}(y|x)$, where $x\in \mathcal{X}$, $y\in \mathcal{Y}$.
The channel is \textit{memoryless} in the sense that
$P_{Y_i|X^i,Y^{i-1}} (y_i|x^i,$ $y^{i-1}) = P_{Y|X} (y_i|x_i)$ for the $i$th channel use.
A transmitter would like to convey a message $m\in [2^{nR}]$ to a receiver over the channel.
To that end, the transmitter sends an input sequence $x^n\in \mathcal{X}^n$ using an encoding function $f: [2^{nR}]\to \mathcal{X}^n$ and
the receiver observes the channel outputs $y^n\in \mathcal{Y}^n$ and decodes using a decoding function $\phi: \mathcal{Y}^n\to [2^{nR}]$.
This pair $(f,\phi)$ is called an \textit{$(n,R)$ code} and such code is \textit{uniquely decodable} if $m=\phi(y^n)$ holds for any $m\in [2^{nR}]$ and correspondingly possible $y^n$.
A rate $R$ is \textit{achievable} if an $(n,R)$ uniquely decodable code exists for some $n$.
The \textit{zero-error capacity} of the channel is the supremum of all achievable rates.

In terms of the zero-error capacity, the channel $P_{Y|X}$ could be characterized by its \textit{confusion graph} $G$, whose vertex set is the input alphabet $\mathcal{X}$, and two vertices $x,x'\in \mathcal{X}$ are adjacent, denoted as $x\sim x'$, if and only if there exists $y\in \mathcal{Y}$ such that $P_{Y|X}(y|x)P_{Y|X}(y|x')>0$.
It is easy to verify that $\mathcal{C}$ is an $(n,R)$ uniquely decodable code if and only if $\mathcal{C}$ is an independent set of the graph $G^n$, which is the $n$-fold strong product of graph $G$. Consequently, the zero-error capacity of a point-to-point channel is equal to the Shannon capacity of its confusion graph $G$.

In the following we would like to characterize a DM-TWC using a collection of confusion graphs.
Indeed, we notice that in the two-way communication when Alice sends a letter $x_1\in \mathcal{X}_1$, the zero-error communication from Bob to Alice can be regarded as a point-to-point channel $P_{Y_1|X_1=x_1,X_2}$,
which corresponds to a confusion graph $G_{x_1}$ such that the vertex set is $\mathcal{X}_2$ and two vertices $x_2,x'_2\in \mathcal{X}_2$ are adjacent, also denoted as $x_2\overset{x_1}{\sim} x'_2$, if and only if there exists some $y_1\in \mathcal{Y}_1$ such that
\begin{equation*}
P_{Y_1|X_1,X_2}(y_1|x_1,x_2)P_{Y_1|X_1,X_2}(y_1|x_1,x'_2)>0,
\end{equation*}
where
\begin{equation*}
    P_{Y_1|X_1,X_2}(y_1|x_1,x_2)\triangleq\sum_{y_2\in \mathcal{Y}_2}P_{Y_1,Y_2|X_1,X_2}(y_1,y_2|x_1,x_2).
\end{equation*}
Symmetrically, when Bob sends a letter $x_2\in \mathcal{X}_2$, the zero-error communication from Alice to Bob could be characterized by a confusion graph $H_{x_2}$
such that the vertex set is $\mathcal{X}_1$ and two vertices $x_1,x'_1\in \mathcal{X}_1$ are adjacent, denoted as $x_1\overset{x_2}{\sim} x'_1$, if and only if there exists some $y_2\in \mathcal{Y}_2$ such that
\begin{equation*}
P_{Y_2|X_1,X_2}(y_2|x_1,x_2)P_{Y_2|X_1,X_2}(y_2|x'_1,x_2)>0,
\end{equation*}
where
\begin{equation*}
    P_{Y_2|X_1,X_2}(y_2|x_1,x_2)\triangleq\sum_{y_1\in \mathcal{Y}_1}P_{Y_1,Y_2|X_1,X_2}(y_1,y_2|x_1,x_2).
\end{equation*}
Based on the foregoing, a DM-TWC $P_{Y_1,Y_2|X_1,X_2}$ could be decomposed into a collection of discrete memoryless  point-to-point channels and thus be characterized by a sequence of confusion graphs, denoted by  $[G_1,\ldots,G_{|\mathcal{X}_1|};H_1,$ $\ldots,H_{|\mathcal{X}_2|}]$, where $V(G_1)=\cdots =V(G_{|\mathcal{X}_1|})=\mathcal{X}_2$ and $V(H_1)=\cdots =V(H_{|\mathcal{X}_2|})=\mathcal{X}_1$.
The following observation is immediate and useful.

\begin{prop}\label{prop-udc-is}
A codebook pair $(\mathcal{A},\mathcal{B})$ is uniquely decodable with respect to the channel $[G_1,\ldots,G_{|\mathcal{X}_1|};$ $ H_1,\ldots,$ $H_{|\mathcal{X}_2|}]$ if and only if
for any $a^n=(a_1,\ldots,a_n)\in \mathcal{A}$ and $b^n=(b_1,\ldots,b_n)\in \mathcal{B}$, we have that
$\mathcal{B}$ is an independent set of $G_{a_1}\boxtimes \cdots \boxtimes G_{a_n}$, and in the meanwhile $\mathcal{A}$ is an independent set of $H_{b_1}\boxtimes \cdots \boxtimes H_{b_n}$.
\end{prop}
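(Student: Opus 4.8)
The plan is to unwind the definitions of unique decodability and of the confusion graphs, and to observe that the zero-error requirement for Bob's decoding, \emph{conditioned on} the knowledge of Alice's transmitted codeword, is exactly an independent-set condition in the strong product of the per-letter confusion graphs indexed by that codeword's letters. By symmetry the same argument with the roles of Alice and Bob exchanged gives the second claim.

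\textbf{Step 1 (set up the contrapositive at the codeword level).} Fix $a^n=(a_1,\ldots,a_n)\in\mathcal{A}$. Suppose, towards a contradiction, that $\mathcal{B}$ is \emph{not} an independent set of $G_{a_1}\boxtimes\cdots\boxtimes G_{a_n}$. Then there exist two distinct codewords $b^n=(b_1,\ldots,b_n)$ and $\tilde b^n=(\tilde b_1,\ldots,\tilde b_n)$ in $\mathcal{B}$ that are adjacent in this strong product. By the definition of the strong product (and the fact that adjacency in each coordinate graph $G_{a_i}$ is symmetric), adjacency of $b^n$ and $\tilde b^n$ means: for every coordinate $i$, either $b_i=\tilde b_i$ or $b_i\overset{a_i}{\sim}\tilde b_i$ in $G_{a_i}$; and since $b^n\neq\tilde b^n$, at least one coordinate satisfies $b_i\overset{a_i}{\sim}\tilde b_i$.

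\textbf{Step 2 (translate adjacency into a common output with positive probability).} For each coordinate $i$: if $b_i=\tilde b_i$, pick any $y_{1i}$ with $P_{Y_1|X_1,X_2}(y_{1i}\mid a_i,b_i)>0$ (such a $y_{1i}$ exists since the conditional distribution sums to one); if $b_i\overset{a_i}{\sim}\tilde b_i$, the definition of $G_{a_i}$ yields some $y_{1i}\in\mathcal{Y}_1$ with $P_{Y_1|X_1,X_2}(y_{1i}\mid a_i,b_i)\,P_{Y_1|X_1,X_2}(y_{1i}\mid a_i,\tilde b_i)>0$. Setting $y_1^n=(y_{11},\ldots,y_{1n})$ and using the memoryless property, both $\prod_i P_{Y_1|X_1,X_2}(y_{1i}\mid a_i,b_i)$ and $\prod_i P_{Y_1|X_1,X_2}(y_{1i}\mid a_i,\tilde b_i)$ are strictly positive. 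Hence, when Alice sends the codeword $a^n$, the output sequence $y_1^n$ has positive probability both when Bob sends $b^n$ and when Bob sends $\tilde b^n$.

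\textbf{Step 3 (contradiction with unique decodability).} Let $m_1$ be the message with $f_1(m_1)=a^n$, and let $m_2,\tilde m_2$ be the (distinct) messages with $f_2(m_2)=b^n$, $f_2(\tilde m_2)=\tilde b^n$. On the event $\{X_1^n=a^n,\,Y_1^n=y_1^n\}$, which has positive probability under both $(m_1,m_2)$ and $(m_1,\tilde m_2)$, Alice's decoder must output $\phi_1(m_1,y_1^n)$, a single fixed value; it cannot equal both $m_2$ and $\tilde m_2$. This contradicts the requirement that $m_2=\phi_1(m_1,Y_1^n)$ with probability one for every message pair. Therefore $\mathcal{B}$ is an independent set of $G_{a_1}\boxtimes\cdots\boxtimes G_{a_n}$. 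Exchanging the roles of Alice and Bob and of the graphs $G_{(\cdot)}$ and $H_{(\cdot)}$ gives that $\mathcal{A}$ is an independent set of $H_{b_1}\boxtimes\cdots\boxtimes H_{b_n}$ for every $b^n\in\mathcal{B}$.

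This argument is essentially definitional, so there is no serious obstacle; the only point requiring mild care is Step 1, namely confirming that ``adjacent in the strong product'' together with ``distinct'' unpacks to ``adjacent or equal in every coordinate, and strictly adjacent in at least one,'' and that this is precisely what produces, in Step 2, a single output sequence confusing the two codewords under the fixed side information $a^n$.
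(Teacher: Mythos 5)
Your proof is correct: the paper states this proposition without proof (calling it immediate), and your argument is precisely the intended definitional unwinding --- adjacency of two distinct codewords of $\mathcal{B}$ in $G_{a_1}\boxtimes\cdots\boxtimes G_{a_n}$ yields, via memorylessness, a single output sequence $y_1^n$ of positive probability under both, contradicting $m_2=\phi_1(m_1,Y_1^n)$ with probability one, and symmetrically for $\mathcal{A}$. No issues to flag.
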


Notice that in principle  $\mathscr{C}_{ze}(P_{Y_1,Y_2|X_1,X_2})$ only depends on its corresponding confusion graphs $[G_1,\ldots,G_{|\mathcal{X}_1|};H_1,$ $\ldots,H_{|\mathcal{X}_2|}]$.
Hence in the sequel we shall also use $\mathscr{C}_{ze}([G_1,\ldots,G_{|\mathcal{X}_1|};H_1,\ldots,H_{|\mathcal{X}_2|}])$ and $\mathscr{C}^{sum}_{ze}([G_1,\ldots,G_{|\mathcal{X}_1|};H_1,$ $\ldots,H_{|\mathcal{X}_2|}])$ to represent $\mathscr{C}_{ze}(P_{Y_1,Y_2|X_1,X_2})$ and $\mathscr{C}^{sum}_{ze}(P_{Y_1,Y_2|X_1,X_2})$ respectively.
For brevity, we also write the channel as $[\{G_i\};\{H_j\}]$ when it is clear from the context.
The following simple observation is analogues to the point-to-point case, and its proof is omitted.
\begin{prop}\label{prop-same-adjacency}
If $P_{Y_1,Y_2|X_1,X_2}$ and $Q_{Y_1,Y_2|X_1,X_2}$ have the same confusion graphs %adjacency
up to
some relabeling on input symbols,
%permutations of their input alphabets,
then $\mathscr{C}_{ze}(P_{Y_1,Y_2|X_1,X_2}) = \mathscr{C}_{ze}(Q_{Y_1,Y_2|X_1,X_2})$.
\end{prop}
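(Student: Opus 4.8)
The plan is to reduce everything to the observation that unique decodability is a purely combinatorial property of the confusion graphs. Concretely, I would first complement Proposition~\ref{prop-udc-is} with its converse: a codebook pair $(\mathcal{A},\mathcal{B})$ arising from encoders $f_1,f_2$ is $(n,R_1,R_2)$ uniquely decodable for $[\{G_i\};\{H_j\}]$ if and only if $f_1$ and $f_2$ are injective and, for every $a^n\in\mathcal{A}$, $\mathcal{B}$ is an independent set of $G_{a_1}\boxtimes\cdots\boxtimes G_{a_n}$, while for every $b^n\in\mathcal{B}$, $\mathcal{A}$ is an independent set of $H_{b_1}\boxtimes\cdots\boxtimes H_{b_n}$. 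Indeed, for a fixed $a^n$ Alice can recover $m_2$ from $Y_1^n$ exactly when no two distinct codewords $b^n,b'^n\in\mathcal{B}$ admit a common channel output, i.e.\ some $y_1^n$ with $\prod_i P_{Y_1|X_1,X_2}(y_{1i}\mid a_i,b_i)>0$ and $\prod_i P_{Y_1|X_1,X_2}(y_{1i}\mid a_i,b_i')>0$; by the definition of the confusion graphs and of the strong product, such a common output exists precisely when $b^n$ and $b'^n$ are adjacent in $G_{a_1}\boxtimes\cdots\boxtimes G_{a_n}$ (they are distinct, so ``equal or adjacent'' forces adjacency). The symmetric argument handles Bob's decoding, and injectivity of $f_1,f_2$ is clearly necessary.

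Next I would make the hypothesis explicit. Writing $[G_1,\ldots,G_{|\mathcal{X}_1|};H_1,\ldots,H_{|\mathcal{X}_2|}]$ and $[G_1',\ldots,G_{|\mathcal{X}_1|}';H_1',\ldots,H_{|\mathcal{X}_2|}']$ for the confusion graphs of $P_{Y_1,Y_2|X_1,X_2}$ and $Q_{Y_1,Y_2|X_1,X_2}$, ``same confusion graphs up to relabeling on input symbols'' means that there are bijections $\pi_1\colon\mathcal{X}_1\to\mathcal{X}_1$ and $\pi_2\colon\mathcal{X}_2\to\mathcal{X}_2$ such that $\pi_2$ is a graph isomorphism from $G_{x_1}$ onto $G'_{\pi_1(x_1)}$ for every $x_1\in\mathcal{X}_1$, and $\pi_1$ is a graph isomorphism from $H_{x_2}$ onto $H'_{\pi_2(x_2)}$ for every $x_2\in\mathcal{X}_2$. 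Extending $\pi_1,\pi_2$ coordinatewise to bijections $\Pi_1\colon\mathcal{X}_1^n\to\mathcal{X}_1^n$ and $\Pi_2\colon\mathcal{X}_2^n\to\mathcal{X}_2^n$, and using that adjacency in a strong product is decided coordinate by coordinate, one gets that $\Pi_2$ is a graph isomorphism from $G_{a_1}\boxtimes\cdots\boxtimes G_{a_n}$ onto $G'_{\pi_1(a_1)}\boxtimes\cdots\boxtimes G'_{\pi_1(a_n)}$ for every $a^n\in\mathcal{X}_1^n$, and likewise $\Pi_1$ maps $H_{b_1}\boxtimes\cdots\boxtimes H_{b_n}$ isomorphically onto $H'_{\pi_2(b_1)}\boxtimes\cdots\boxtimes H'_{\pi_2(b_n)}$ for every $b^n\in\mathcal{X}_2^n$.

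Finally, given an $(n,R_1,R_2)$ uniquely decodable codebook pair $(\mathcal{A},\mathcal{B})$ for $P_{Y_1,Y_2|X_1,X_2}$, I would set $\mathcal{A}'=\Pi_1(\mathcal{A})$ and $\mathcal{B}'=\Pi_2(\mathcal{B})$. Since $\Pi_1,\Pi_2$ are bijective, the codebook sizes, and hence the rates, are preserved, and the composed encoders remain injective. For any $a'^n=\Pi_1(a^n)\in\mathcal{A}'$, the set $\mathcal{B}$, which is independent in $G_{a_1}\boxtimes\cdots\boxtimes G_{a_n}$ by the characterization of the first step, is carried by the isomorphism $\Pi_2$ to the set $\mathcal{B}'$, which is therefore independent in $G'_{a'_1}\boxtimes\cdots\boxtimes G'_{a'_n}$; symmetrically, $\mathcal{A}'$ is independent in $H'_{b'_1}\boxtimes\cdots\boxtimes H'_{b'_n}$ for every $b'^n\in\mathcal{B}'$. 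Hence $(\mathcal{A}',\mathcal{B}')$ is $(n,R_1,R_2)$ uniquely decodable for $Q_{Y_1,Y_2|X_1,X_2}$, giving $\mathscr{C}_{ze}(P_{Y_1,Y_2|X_1,X_2})\subseteq\mathscr{C}_{ze}(Q_{Y_1,Y_2|X_1,X_2})$; running the same argument with $(\pi_1^{-1},\pi_2^{-1})$, which relabels the confusion graphs of $Q$ into those of $P$, yields the reverse inclusion and thus equality. The only genuine point requiring care is the converse half of the combinatorial characterization in the first step; after that is in place, the argument is merely transporting independent sets along an isomorphism of strong products and checking that block length and codebook sizes are untouched.
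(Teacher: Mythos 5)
Your proof is correct, and it is exactly the argument the paper leaves implicit when it omits the proof of this proposition as ``analogous to the point-to-point case'': unique decodability is a purely adjacency-theoretic condition (the converse of Proposition~\ref{prop-udc-is} you supply), so relabeling-isomorphisms of the confusion graphs transport uniquely decodable codebook pairs in both directions without changing blocklength or sizes. Note also that your final step is the same transport argument as the paper's lemma on dual graph homomorphisms, of which your statement is the special case where $(\pi_1,\pi_2)$ and $(\pi_1^{-1},\pi_2^{-1})$ give dual homomorphisms between the complemented channels in both directions.
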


This immediately implies
\begin{prop}
$\mathscr{C}_{ze}(P_{Y_1,Y_2|X_1,X_2})$ depends only on the conditional marginal distributions
$P_{Y_1|X_1,X_2}$ and $P_{Y_2|X_1,X_2}$.
\end{prop}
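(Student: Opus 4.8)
The plan is to observe that the entire confusion-graph list $[G_1,\ldots,G_{|\mathcal{X}_1|};H_1,\ldots,H_{|\mathcal{X}_2|}]$ attached to a DM-TWC is a function of the pair $(P_{Y_1|X_1,X_2},P_{Y_2|X_1,X_2})$ alone, and then to invoke the fact --- already recorded in Section~\ref{Sec-2} and formalized in Proposition~\ref{prop-same-adjacency} --- that $\mathscr{C}_{ze}$ is determined by this list (even up to relabeling of input symbols, although here no relabeling will be needed).

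First I would unwind the definitions of the confusion graphs. By construction $G_{x_1}$ has vertex set $\mathcal{X}_2$, and its adjacency relation $x_2\overset{x_1}{\sim}x_2'$ holds exactly when there is some $y_1\in\mathcal{Y}_1$ with $P_{Y_1|X_1,X_2}(y_1|x_1,x_2)\,P_{Y_1|X_1,X_2}(y_1|x_1,x_2')>0$; hence $G_{x_1}$ depends on the joint law $P_{Y_1,Y_2|X_1,X_2}$ only through the conditional marginal $P_{Y_1|X_1,X_2}$ (and the fixed index $x_1$). Symmetrically, each $H_{x_2}$ depends only on $P_{Y_2|X_1,X_2}$. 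Consequently, if $P_{Y_1,Y_2|X_1,X_2}$ and $Q_{Y_1,Y_2|X_1,X_2}$ satisfy $P_{Y_1|X_1,X_2}=Q_{Y_1|X_1,X_2}$ and $P_{Y_2|X_1,X_2}=Q_{Y_2|X_1,X_2}$, then their confusion-graph lists coincide term by term (as literally the same graphs, not merely up to isomorphism).

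It then remains to apply Proposition~\ref{prop-same-adjacency} with the identity relabeling on $\mathcal{X}_1$ and $\mathcal{X}_2$, which gives $\mathscr{C}_{ze}(P_{Y_1,Y_2|X_1,X_2})=\mathscr{C}_{ze}(Q_{Y_1,Y_2|X_1,X_2})$ --- precisely the assertion that $\mathscr{C}_{ze}$ depends only on $P_{Y_1|X_1,X_2}$ and $P_{Y_2|X_1,X_2}$. (Alternatively one can bypass Proposition~\ref{prop-same-adjacency} and appeal directly to the characterization of unique decodability via independent sets in strong products of the $G_{x_1}$'s and $H_{x_2}$'s: the sets of feasible output sequences $y_1^n$ at Alice and $y_2^n$ at Bob factor through the one-sided marginals, so nothing beyond the marginals enters the decodability condition.) There is no real obstacle here --- the argument is entirely definitional --- and the only point worth stating explicitly is that the adjacency rules defining the $G_{x_1}$'s and $H_{x_2}$'s never reference the joint conditional law except through its two one-sided marginals, which is immediate from the displayed definitions above. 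Hence the claim follows as an immediate corollary of Proposition~\ref{prop-same-adjacency}.
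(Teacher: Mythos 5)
Your argument is correct and is exactly the paper's (implicit) route: the confusion graphs $G_{x_1}$ and $H_{x_2}$ are defined solely through the conditional marginals $P_{Y_1|X_1,X_2}$ and $P_{Y_2|X_1,X_2}$, so the claim follows at once from Proposition~\ref{prop-same-adjacency} with the identity relabeling. Nothing is missing; the paper itself treats this as an immediate consequence and gives no further detail.
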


The strong product of two DM-TWCs $[G_1,\ldots,G_{|\mathcal{X}_1|};H_1,\ldots,H_{|\mathcal{X}_2|}]$ and $[G'_1,\ldots,G'_{|\mathcal{X}'_1|};H'_1,\ldots,H'_{|\mathcal{X}'_2|}]$, denoted by $[\{G_i\}; \{H_j\}]\boxtimes [\{G'_i\};\{H'_j\}]$, refers to a DM-TWC having input alphabets $\mathcal{X}_1\times \mathcal{X}'_1$ and $\mathcal{X}_2\times \mathcal{X}'_2$, as well as confusion graphs $$[\{G_i\boxtimes G'_{i'}:\, i\in \mathcal{X}_1, i'\in \mathcal{X}'_1 \}; \{H_j\boxtimes H'_{j'}:\, j\in \mathcal{X}_2, j'\in \mathcal{X}'_2 \}].$$    
Considering the zero-error sum-capacity with respect to the strong product, we have the lemma below.

\begin{lemma}\label{lemma-product-sumcapacity}
$\mathscr{C}^{sum}_{ze}\Big([\{G_i\}; \{H_j\}]\boxtimes [\{G'_i\};\{H'_j\}]\Big)\ge \mathscr{C}^{sum}_{ze}\Big([\{G_i\};\{H_j\}]\Big) + \mathscr{C}^{sum}_{ze}\Big([\{G'_i\};\{H'_j\}]\Big)$. 
\end{lemma}

\begin{proof}
To prove this lemma, it is sufficient to prove that for any $(n,R_1,R_2)$ (resp.\,$(n,R'_1,R'_2)$) uniquely decodable codebook pair $(\mathcal{A},\mathcal{B})$ (resp. $(\mathcal{A}',\mathcal{B}')$) with respect to channel $[\{G_i\}; \{H_j\}]$ (resp. $[\{G'_i\}; \{H'_j\}]$), we could have an $(n, R_1+R'_1,R_2+R'_2)$ uniquely decodable codebook pair for their product channel $[\{G_i\}; \{H_j\}]\boxtimes [\{G'_i\};\{H'_j\}]$. Indeed, let 
\begin{align*}
    \mathcal{A}^*&=\{((a_1,a'_1),\ldots,(a_n,a'_n)): a^n\in \mathcal{A}, a'^{n}\in \mathcal{A}'\},\\
    \mathcal{B}^*&=\{((b_1,b'_1),\ldots,(b_n,b'_n)): b^n\in \mathcal{B}, b'^{n}\in \mathcal{B}'\}.
\end{align*}
It is easy to verify that $(\mathcal{A}^*,\mathcal{B}^*)$ is uniquely decodable according to the product channel. Moreover, $|\mathcal{A}^*|=|\mathcal{A}||\mathcal{A}'|=2^{n(R_1+R'_1)}$ and $|\mathcal{B}^*|=|\mathcal{B}||\mathcal{B}'|=2^{n(R_2+R'_2)}$. The lemma follows. 
\end{proof}

\subsection{Dual graph homomorphisms}

In this subsection we study the properties of the zero-error capacity of a DM-TWC under graph homomorphisms, extending a similar analysis in the point-to-point channel~\cite{Shannon1956}.
Let $[\{G_i\}; \{H_j\}]$ and $[\{G'_i\};\{H'_j\}]$ be two sequences of confusion graphs corresponding to two DM-TWCs such that $V(G_i)=V(G)$,
$V(H_j)=V(H)$, $V(G'_i)=V(G')$ and $V(H'_j)=V(H')$.
A \textit{dual graph homomorphism} from
$[\{G_i\};\{H_j\}]$ to $[\{G'_i\};\{H'_j\}]$, denoted by $[\{G_i\};\{H_j\}]\to [\{G'_i\};\{H'_j\}]$,
is a pair of mappings $(\varphi,\psi)$, where $\varphi: V(H)\to V(H')$ and $\psi: V(G)\to V(G')$, such that
\begin{itemize}
  \item[(1)]
  if $v_1\sim v_2$ in $G_i$, then $\psi(v_1)\sim \psi(v_2)$ in $G'_{\varphi(i)}$; and
  \item[(2)]
  if $u_1\sim u_2$ in $H_j$, then $\varphi(u_1)\sim \varphi(u_2)$ in $H'_{\psi(j)}$.
\end{itemize}
It is easy to see that the dual graph homomorphism is a natural  generalization of the standard graph homomorphism of two graphs in the sense that they are both adjacency preserving.
Denote
$[\{G_i\};\{H_j\}]\preceq [\{G'_i\};\{H'_j\}]$ if there exists a dual graph homomorphism from
$[\{\overline{G}_i\};\{\overline{H}_j\}]$ to $[\{\overline{G}'_i\};\{\overline{H}'_j\}]$.
Now we have the following lemma.

\begin{lemma}
If $[\{G_i\};\{H_j\}]\preceq [\{G'_i\};\{H'_j\}]$, and $\overline{G}_i$ and $\overline{H}_j$ do not include self-loops, then
\begin{equation*}
\mathscr{C}_{ze}([\{G_i\};\{H_j\}])
\subseteq \mathscr{C}_{ze}([\{G'_i\};\{H'_j\}]).
\end{equation*}
\end{lemma}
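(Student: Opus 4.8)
The plan is to show that every rate pair achievable for $[\{G_i\};\{H_j\}]$ is also achievable for $[\{G'_i\};\{H'_j\}]$; since each region is the closure of its set of achievable pairs, the stated inclusion follows at once. Fix a dual graph homomorphism $(\varphi,\psi)$ from $[\{\overline{G}_i\};\{\overline{H}_j\}]$ to $[\{\overline{G}'_i\};\{\overline{H}'_j\}]$, so that $\varphi$ maps vertices of the $H$-graphs to vertices of the $H'$-graphs and indices of the $G$-family to indices of the $G'$-family, while $\psi$ does the same with the roles of $G$ and $H$ interchanged. The first step is to record the combinatorial characterization of unique decodability: a codebook pair consisting of $2^{nR_1}$ distinct length-$n$ codewords $\mathcal{A}$ and $2^{nR_2}$ distinct length-$n$ codewords $\mathcal{B}$ is $(n,R_1,R_2)$ uniquely decodable if and only if $\mathcal{B}$ is an independent set of $G_{a_1}\boxtimes\cdots\boxtimes G_{a_n}$ for every $a^n\in\mathcal{A}$, and $\mathcal{A}$ is an independent set of $H_{b_1}\boxtimes\cdots\boxtimes H_{b_n}$ for every $b^n\in\mathcal{B}$. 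One direction is Proposition~\ref{prop-udc-is} together with the observation that the encoders must be injective (otherwise the counterpart cannot resolve the message); the other direction follows, exactly as in the point-to-point case, by letting each user decode to the unique codeword of the counterpart whose output support contains the observed sequence.

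Next I would rephrase independence in a strong product through the complement graphs: unwinding the definition of $\boxtimes$, $\mathcal{B}$ is an independent set of $G_{a_1}\boxtimes\cdots\boxtimes G_{a_n}$ precisely when every pair of distinct $b^n,\tilde b^n\in\mathcal{B}$ has some coordinate $k$ with $b_k\neq\tilde b_k$ and $b_k\not\sim\tilde b_k$ in $G_{a_k}$, i.e.\ with $b_k\sim\tilde b_k$ in $\overline{G}_{a_k}$; and symmetrically for $\mathcal{A}$ and the graphs $\overline{H}_{b_k}$. Now, starting from an $(n,R_1,R_2)$ uniquely decodable pair $(\mathcal{A},\mathcal{B})$ for $[\{G_i\};\{H_j\}]$, I would push the codebooks forward coordinatewise: $\mathcal{A}'=\{(\varphi(a_1),\dots,\varphi(a_n)):a^n\in\mathcal{A}\}$ and $\mathcal{B}'=\{(\psi(b_1),\dots,\psi(b_n)):b^n\in\mathcal{B}\}$. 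These maps are injective on $\mathcal{A}$ and on $\mathcal{B}$ respectively: for distinct $b^n,\tilde b^n\in\mathcal{B}$, pick any $a^n\in\mathcal{A}$ (the set is nonempty) and, using the reformulation, a coordinate $k$ with $b_k\sim\tilde b_k$ in $\overline{G}_{a_k}$; property~(1) of the dual homomorphism then yields $\psi(b_k)\sim\psi(\tilde b_k)$ in $\overline{G}'_{\varphi(a_k)}$, so in particular $\psi(b_k)\neq\psi(\tilde b_k)$ and the two images differ. Injectivity of the map on $\mathcal{A}$ is obtained symmetrically via property~(2). Hence $|\mathcal{A}'|=2^{nR_1}$ and $|\mathcal{B}'|=2^{nR_2}$.

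It then remains to verify, via the characterization above, that $(\mathcal{A}',\mathcal{B}')$ is uniquely decodable for $[\{G'_i\};\{H'_j\}]$. Given $a'^n=(\varphi(a_1),\dots,\varphi(a_n))\in\mathcal{A}'$ and distinct $b'^n,\tilde b'^n\in\mathcal{B}'$ with (unique) preimages $b^n\neq\tilde b^n\in\mathcal{B}$, unique decodability of $(\mathcal{A},\mathcal{B})$ supplies a coordinate $k$ with $b_k\sim\tilde b_k$ in $\overline{G}_{a_k}$, and property~(1) promotes this to $\psi(b_k)\sim\psi(\tilde b_k)$ in $\overline{G}'_{\varphi(a_k)}=\overline{G}'_{a'_k}$; thus $\mathcal{B}'$ is an independent set of $G'_{a'_1}\boxtimes\cdots\boxtimes G'_{a'_n}$. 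The mirror-image argument using property~(2) shows that $\mathcal{A}'$ is an independent set of $H'_{b'_1}\boxtimes\cdots\boxtimes H'_{b'_n}$ for each $b'^n\in\mathcal{B}'$, so $(\mathcal{A}',\mathcal{B}')$ is an $(n,R_1,R_2)$ uniquely decodable codebook pair and $(R_1,R_2)$ is achievable for $[\{G'_i\};\{H'_j\}]$; taking closures then finishes the proof. I expect the only genuinely delicate point to be pinning down the ``if'' direction of the unique-decodability characterization (the role of encoder injectivity and the disjointness of the output supports of distinct codewords); once that is in place, the remainder is a routine and fully symmetric ``chase'' through the two adjacency-preserving conditions of the dual homomorphism.
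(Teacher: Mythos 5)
Your proof is correct and follows essentially the same route as the paper's: push the codebooks forward coordinatewise through $(\varphi,\psi)$ and use the adjacency-preservation in the complement graphs to show the images remain a uniquely decodable pair of the same sizes. You are merely more explicit than the paper about the independence-in-strong-product characterization of unique decodability and about injectivity of the pushforward maps, which the paper treats as evident.
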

\begin{proof}
Suppose $(\varphi,\psi): [\{\overline{G}_i\};\{\overline{H}_j\}] \to [\{\overline{G}'_i\};\{\overline{H}'_j\}]$
and
$(\mathcal{A},\mathcal{B})$ is a uniquely decodable codebook pair of length $n$ with respect to the DM-TWC
$[\{G_i\};\{H_j\}]$.
Let
\begin{align*}
\Phi(\mathcal{A})&=\{\varphi(a^n)=(\varphi(a_1),\ldots,\varphi(a_n)): a^n\in \mathcal{A}\},\\
\Psi(\mathcal{B})&=\{\psi(b^n)=(\psi(b_1),\ldots,\psi(b_n)): b^n\in \mathcal{B}\}.
\end{align*}
Now we would like to show that
$(\Phi(\mathcal{A}),\Psi(\mathcal{B}))$ is a uniquely decodable codebook pair for the DM-TWC
$[\{G'_i\};\{H'_j\}]$ such that $|\Phi(\mathcal{A})|=|\mathcal{A}|$ and  $|\Psi(\mathcal{B})|=|\mathcal{B}|$.

It suffices to show that for any distinct $a^n,\widetilde{a}^n\in \mathcal{A}$ and $b^n,\widetilde{b}^n\in \mathcal{B}$, we have
\begin{equation}\label{eq-nsim-pp}
\begin{split}
\varphi(a^n)&\nsim \varphi(\widetilde{a}^n)\ \text{in}\ H_{\psi(b_1)}\boxtimes \cdots \boxtimes H_{\psi(b_n)},  \\
 \psi(b^n)&\nsim \psi(\widetilde{b}^n)\ \text{in}\ G_{\varphi(a_1)}\boxtimes \cdots \boxtimes G_{\varphi(a_n)}.
\end{split}
\end{equation}
Indeed, since $(\mathcal{A},\mathcal{B})$ is a uniquely decodable codebook pair, there exist coordinates $i,j\in [n]$ such that
$a_i\nsim \widetilde{a}_i$ in $H_{b_i}$ and $b_j\nsim \widetilde{b}_j$ in $G_{a_j}$.
By the definition of $(\varphi,\psi)$, we have that
$\varphi(a_i)\nsim \varphi(\widetilde{a}_i)$ in $H_{\psi(b_i)}$ and $\psi(b_j)\nsim \psi(\widetilde{b}_j)$ in $G_{\varphi(a_j)}$,
implying (\ref{eq-nsim-pp}).
And then it is evident that  $|\Phi(\mathcal{A})|=|\mathcal{A}|$ and  $|\Psi(\mathcal{B})|=|\mathcal{B}|$.
The lemma follows by taking the union over all uniquely decodable codebook pairs $(\mathcal{A},\mathcal{B})$ with respect to
$[\{G_i\};\{H_j\}]$.
\end{proof}

\subsection{One-shot zero-error communication}
\label{subsec-oneshot}

%\ofer{verify this paragraph}
In this subsection we consider zero-error communication over a DM-TWC with only single channel use by the two parties (i.e., $n=1$). We refer to the associated set of achievable rate pairs as the \textit{one-shot zero-error capacity region}, and the associated sum-rate as the \textit{one-shot zero-error sum-capacity}. Recall that the one-shot zero-error capacity of a point-to-point channel is simply the logarithm of the independence number of its confusion graph; this quantity yields a lower bound on the (unrestricted) zero-error capacity of the channel, and also provides an infinite-letter expression for the capacity when evaluated over the product graph. It is therefore interesting to study the analogue of the independence number in the two-way case, which in particular would yield an inner bound on the zero-error capacity region of the DM-TWC. For simplicity of exposition, we will focus here on the one-shot zero-error sum-capacity only.

For convenience we define some notions first.
Let $[\{G_i\}; \{H_j\}]$ be a DM-TWC such that $V(G_i)=\mathcal{X}_2$ and $V(H_j)=\mathcal{X}_1$.
A pair $(S,T)$ of subsets $S\subseteq \mathcal{X}_1$ and $T\subseteq \mathcal{X}_2$ is called a \textit{dual clique pair} of the DM-TWC if $t\overset{s}{\sim}t'$ and $s\overset{t}{\sim}s'$ for any distinct $s,s'\in S$ and distinct $t,t'\in T$, that is,
$S$ is a clique in each
$H_t$ for $t\in T$, and $T$ is a clique in each $G_s$ for $s\in S$.
A pair $(S,T)$ of subsets $S\subseteq \mathcal{X}_1$ and $T\subseteq \mathcal{X}_2$ is called a \textit{dual independent pair} of the DM-TWC
if $T$ is an independent set of graph $G_s$ for each $s\in S$, and in the meanwhile, $S$ is an independent set of graph $H_t$ for each $t\in T$.
A \textit{maximum dual independent pair} is a dual independent pair $(S,T)$ with the largest possible size product $|S||T|$.
This product is called the \textit{independence product} of $[\{G_i\}; \{H_j\}]$,
denoted by $\pi(\{G_i\}; \{H_j\})$. According to the definition, 
the one-shot zero-error sum-capacity of the DM-TWC is $\log \pi(\{G_i\}; \{H_j\})$.
It is also readily seen that if two channels have the same confusion graphs up to some relabeling on input symbols, then they have the same collections of dual clique pairs and dual independent pairs, and hence the same one-shot zero-error sum-capacity.

For two graphs $G_1$ and $G_2$, let $G_1\cup G_2$ be the \textit{union} of $G_1$ and $G_2$ such that
$V(G_1\cup G_2)=V(G_1)\cup V(G_2)$ and $E(G_1\cup G_2)=E(G_1)\cup E(G_2)$.
Notice that the graph disjoint union $\sqcup$ in Section \ref{subsec-shannon-capacity-graph} is a special case of the union $\cup$, when the vertex sets of $G_1$ and $G_2$ are disjoint.
For notation convenience, in the rest of this subsection we let $|\mathcal{X}_1|=m_1$ and $|\mathcal{X}_2|=m_2$. The following simple observations are now in order.

\begin{prop}\label{prop-dual-indep}
Suppose $(S,T)$ is a dual independent pair of $[G_1,\ldots,G_{m_1}; H_1,\ldots,H_{m_2}]$.
\begin{itemize}
  \item[(1)]  If $|S|=1$, then $|T|\le \max\limits_{1\le i\le m_1} \alpha(G_i)$. The equality holds by taking $S=\{s\}$ and $T$ be a maximum independent set of $G_s$, where $s\in\arg\max_{1\le i\le m_1} \alpha(G_i)$.
  %\item[(2)]  If $|T|=1$, then $|S|\le \max\limits_{1\le j\le m_2} \alpha(H_j)$. The equality holds by taking $T=\{t\}$ and $S$ be a maximum independent set of $H_t$, where $t\in\arg\max_{1\le j\le m_2} \alpha(H_j)$. \ofer{isn't this redundant give the previous claim, by symmetry? If so it can be removed.}
  %\yujie{Yes, it is the same with (1) by symmetry.} \ofer{so let's remove this, unless oyu have  great reason not to.}
  \item[(2)]  $|S|\le \min\limits_{t\in T} \alpha(H_t)$.
  %, $|T|\le \min\limits_{s\in S} \alpha(G_s) $.
  \item[(3)]
 % $|S|\le \alpha(\bigcup_{t\in T}H_t)$,
 % $|T|\le \alpha(\bigcup_{s\in S} G_s)$.
  $S$ is an independent set of $\bigcup_{t\in T}H_t$.
  %, and $T$ is an independent set of $\bigcup_{s\in S} G_s$.
\end{itemize}
\end{prop}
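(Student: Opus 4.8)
The plan is to unwind the definition of a dual independent pair; each of the three items then follows directly, the only part requiring an actual construction being the equality claim in~(1). For the inequality in~(1), I would note that when $S=\{s\}$ is a singleton, the second defining condition of a dual independent pair (that $S$ be an independent set of $H_t$ for every $t\in T$) is vacuous, since a one-element vertex set is automatically independent; hence the only surviving constraint is that $T$ be an independent set of $G_s$, which gives $|T|\le\alpha(G_s)\le\max_{1\le i\le m_1}\alpha(G_i)$. For the equality, I would pick $s\in\arg\max_{1\le i\le m_1}\alpha(G_i)$ and take $T$ to be a maximum independent set of $G_s$; then $(\{s\},T)$ is a dual independent pair --- the $\{G_i\}$-side holds by construction and the $\{H_j\}$-side holds vacuously --- and it achieves $|T|=\max_{1\le i\le m_1}\alpha(G_i)$.

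For item~(2), I would fix an arbitrary $t\in T$; by the definition of a dual independent pair, $S$ is an independent set of $H_t$, so $|S|\le\alpha(H_t)$, and minimizing over $t\in T$ yields $|S|\le\min_{t\in T}\alpha(H_t)$. For item~(3), I would recall that all the graphs $H_t$ with $t\in T$ share the vertex set $\mathcal{X}_1$, so by the definition of the union $\cup$ the graph $\bigcup_{t\in T}H_t$ has vertex set $\mathcal{X}_1$ and edge set $\bigcup_{t\in T}E(H_t)$; for any distinct $s,s'\in S$ the dual-independence condition gives $s\nsim s'$ in every $H_t$ with $t\in T$, hence $\{s,s'\}\notin\bigcup_{t\in T}E(H_t)$, so $S$ is an independent set of $\bigcup_{t\in T}H_t$.

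Since every part is an immediate consequence of the definitions, I do not expect any genuine obstacle here; the only place that warrants a moment's care is verifying, in the equality case of~(1), that the constructed pair $(\{s\},T)$ really satisfies both defining conditions of a dual independent pair, which it does --- trivially on the $\{H_j\}$ side and by construction on the $\{G_i\}$ side.
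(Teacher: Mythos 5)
Your proposal is correct and follows exactly the route the paper takes: the paper's proof simply states that all three items follow directly from the definition of a dual independent pair, and your write-up is a faithful (and more explicit) unwinding of that, including the observation that the $\{H_j\}$-side condition is vacuous for a singleton $S$ in the equality case of (1).
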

\begin{proof}
The results follow directly from the definition of dual independent pairs.
\end{proof}

\begin{lemma} 
\label{lemma-properties}
Let $[G_1,\ldots,G_{m_1}; H_1,\ldots,H_{m_2}]$ be a DM-TWC and $G$, $H$ be graphs such that $V(G)=\mathcal{X}_2$, $V(H)=\mathcal{X}_1$. Then we have
\begin{itemize}
  \item[(1)]
  %\begin{equation*}\label{eq-pi-le-ge}
  $\max \Big\{\max\limits_{1\le i\le m_1} \alpha(G_i), \max\limits_{1\le j\le m_2} \alpha(H_j) \Big\}\le
  \pi(G_1,\ldots,G_{m_1}; H_1,\ldots,H_{m_2}) \le \max\limits_{1\le i\le m_1} \alpha(G_i)\cdot \max\limits_{1\le j\le m_2}  \alpha(H_j).$
  %\end{equation*}
  \item[(2)] $\pi(G,\ldots,G; H,\ldots,H)=\alpha(G) \alpha(H).$
  \item[(3)] $\pi(\overline{K}_{m_2},G,\ldots,G; \overline{K}_{m_1}, H,\ldots,H)=\max \{\alpha(G) \alpha(H), m_1, m_2\}.$
  \item[(4)] $\pi(G_1,\ldots,G_{m_1}; K_{m_1},\ldots,K_{m_1})=\max\limits_{1\le i\le m_1} \alpha(G_i)$.
\end{itemize}
\end{lemma}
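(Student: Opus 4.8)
The plan is to handle each of the four items by the same two-pronged strategy: for the upper bound on $\pi$, bound $|S||T|$ for an arbitrary dual independent pair $(S,T)$ via a short case analysis resting on Proposition~\ref{prop-dual-indep}; for the lower bound, exhibit an explicit dual independent pair attaining the claimed value. Throughout, the elementary facts that do all the work are that a singleton (or the empty set) is an independent set of every graph, that every vertex subset is independent in an edgeless graph $\overline{K}_n$, and that only singletons and the empty set are independent in a complete graph $K_n$.

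For item~(1), if $(S,T)$ is a dual independent pair with both parts nonempty, then fixing any $s\in S$ shows $T$ is independent in $G_s$, so $|T|\le\alpha(G_s)\le\max_{1\le i\le m_1}\alpha(G_i)$, and symmetrically $|S|\le\max_{1\le j\le m_2}\alpha(H_j)$; the product bound follows (and is trivial when $S$ or $T$ is empty). For the lower bound, take $S=\{s\}$ with $s\in\arg\max_{1\le i\le m_1}\alpha(G_i)$ and $T$ a maximum independent set of $G_s$: this is a valid dual independent pair since the singleton $S$ is independent in every $H_t$, and it gives $|S||T|=\max_{1\le i\le m_1}\alpha(G_i)$; the mirror-image construction yields $\max_{1\le j\le m_2}\alpha(H_j)$. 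Item~(2) is then immediate from item~(1) (for the upper bound) together with the pair $(S,T)$ in which $S$ is a maximum independent set of $H$ and $T$ a maximum independent set of $G$, which is valid because $G_s\equiv G$ and $H_t\equiv H$, and attains $\alpha(G)\alpha(H)$.

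For item~(3), index $\mathcal{X}_1=\{1,\dots,m_1\}$ and $\mathcal{X}_2=\{1,\dots,m_2\}$ so that $G_1=\overline{K}_{m_2}$, $G_2=\dots=G_{m_1}=G$, $H_1=\overline{K}_{m_1}$, $H_2=\dots=H_{m_2}=H$. The lower bound uses three dual independent pairs: $(S,T)$ with $S$ a maximum independent set of $H$ and $T$ a maximum independent set of $G$ (value $\alpha(G)\alpha(H)$, valid since the only possibly-constraining graphs among the $G_s,H_t$ are copies of $G$ and $H$); $(\mathcal{X}_1,\{1\})$ (value $m_1$, valid since $H_1$ is edgeless and $T$ is a singleton); and $(\{1\},\mathcal{X}_2)$ (value $m_2$, by symmetry). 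For the upper bound, given any dual independent pair $(S,T)$, split into three exhaustive cases: if $S\subseteq\{1\}$ then $|S||T|\le m_2$; if $T\subseteq\{1\}$ then $|S||T|\le m_1$; otherwise $S$ contains some $s\ge2$ and $T$ some $t\ge2$, whence $T$ is independent in $G_s=G$ and $S$ is independent in $H_t=H$, so $|S||T|\le\alpha(G)\alpha(H)$. Item~(4) is the quickest: since $H_j=K_{m_1}$ is complete, any dual independent pair with $T\neq\emptyset$ forces $|S|\le1$, so $|S||T|\le\alpha(G_s)\le\max_{1\le i\le m_1}\alpha(G_i)$ for the unique $s\in S$; and the pair $(\{s\},T)$ with $s\in\arg\max_{1\le i\le m_1}\alpha(G_i)$ and $T$ a maximum independent set of $G_s$ attains this.

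I do not expect any genuine obstacle: the entire lemma is bookkeeping with the definition of a dual independent pair and Proposition~\ref{prop-dual-indep}. The only point deserving a line of care is item~(3), where one must check that the three candidate pairs really are dual independent pairs — the edgeless factors $\overline{K}_{m_1},\overline{K}_{m_2}$ render the relevant constraints vacuous — and that the upper-bound case split ($S\subseteq\{1\}$ / $T\subseteq\{1\}$ / neither) is exhaustive.
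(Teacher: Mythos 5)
Your proof is correct and follows essentially the same route as the paper: upper bounds via the definition of a dual independent pair (Proposition \ref{prop-dual-indep}) with the same three-case split in item (3), and lower bounds via the same explicit pairs (singleton paired with a maximum independent set, the two maximum independent sets for item (2), and the edgeless-graph pairs giving $m_1$, $m_2$ in item (3)). The only cosmetic differences are that you prove item (4) directly rather than citing item (1), and you spell out the validity checks the paper leaves implicit.
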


\begin{proof}
(1)
The lower bound follows from Proposition \ref{prop-dual-indep} (1) and the symmetry of $S$ and $T$.
Also from Proposition \ref{prop-dual-indep} (2), we have
\begin{align*}
%\begin{split}
|S|&\le \min_{t\in T} \alpha(H_t)\le \max_{1\le j\le m_2} \alpha(H_j),\\
|T|&\le \min_{s\in S} \alpha(G_s) \le \max_{1\le i\le m_1} \alpha(G_i),
%\end{split}
\end{align*}
yielding the upper bound.

(2) From item (1), we have $\pi(G,\ldots,G; H,\ldots,H)\le \alpha(G) \alpha(H)$. The equality holds by taking $S$ and $T$ as the maximum independent sets of $H$ and $G$ respectively. 

(3) From items (1) and (2), we have $\pi(\overline{K}_{m_2},G,\ldots,G; \overline{K}_{m_1}, H,\ldots,H)\ge \max \{\alpha(G) \alpha(H), m_1, m_2\}$. On the other hand, suppose $(S,T)$ is a dual independent pair, then we have the following three cases.
(i) If $|S|=1$ then by Proposition \ref{prop-dual-indep} (1) we have $|S||T|\le m_2$.
(ii) If $|T|=1$, similar to case (i), we have $|S||T|\le m_1$.
(iii) If $|S|\ge 2$ and $|T|\ge 2$, then by Proposition \ref{prop-dual-indep} (2) we obtain $|S||T|\le \alpha(G) \alpha(H)$.
Thus $\pi(\overline{K}_{m_2},G,\ldots,G; \overline{K}_{m_1}, H,\ldots,H)\le \max \{\alpha(G) \alpha(H), m_1, m_2\}$.

(4) is a direct consequence of item (1).
The lemma follows.
\end{proof}

By graph homomorphisms we immediately have
\begin{prop}
If $[\{G_i\};\{H_j\}]\preceq [\{G'_i\};\{H'_j\}]$, then
\begin{equation*}
\pi(\{G_i\};\{H_j\})\leq \pi(\{G'_i\};\{H'_j\}).
\end{equation*}
\end{prop}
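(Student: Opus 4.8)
The plan is to mimic the argument of the lemma on dual graph homomorphisms given earlier in the excerpt, but tracking sizes rather than the full capacity region. Suppose $[\{G_i\};\{H_j\}]\preceq [\{G'_i\};\{H'_j\}]$, and let $(\varphi,\psi)$ be the witnessing dual graph homomorphism from $[\{\overline{G}_i\};\{\overline{H}_j\}]$ to $[\{\overline{G}'_i\};\{\overline{H}'_j\}]$, so that $\varphi:V(H)\to V(H')$ and $\psi:V(G)\to V(G')$. Let $(S,T)$ be a \emph{maximum} dual independent pair of $[\{G_i\};\{H_j\}]$, so $|S||T|=\pi(\{G_i\};\{H_j\})$. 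The key step is to show that $(\varphi(S),\psi(T))$ is a dual independent pair of $[\{G'_i\};\{H'_j\}]$ and that $\varphi$ is injective on $S$ and $\psi$ is injective on $T$, whence $\pi(\{G'_i\};\{H'_j\})\ge|\varphi(S)|\,|\psi(T)|=|S||T|=\pi(\{G_i\};\{H_j\})$.

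To verify that $(\varphi(S),\psi(T))$ is a dual independent pair, I would unwind the two defining conditions. A dual independent pair means $T$ is an independent set of $G_s$ for every $s\in S$, equivalently $T$ is a \emph{clique} in $\overline{G}_s$ for every $s$; and symmetrically $S$ is a clique in $\overline{H}_t$ for every $t\in T$. Fix $s\in S$ and distinct $t,t'\in T$. Then $t\sim t'$ in $\overline{G}_s$, and since $(\varphi,\psi)$ is a dual graph homomorphism on the complement graphs, condition~(1) of its definition gives $\psi(t)\sim\psi(t')$ in $\overline{G}'_{\varphi(s)}$; in particular $\psi(t)\neq\psi(t')$, establishing injectivity of $\psi$ on $T$, and showing $\psi(T)$ is a clique in $\overline{G}'_{\varphi(s)}$, i.e.\ an independent set of $G'_{\varphi(s)}$, for every $\varphi(s)\in\varphi(S)$. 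The symmetric argument using condition~(2) of the dual graph homomorphism definition shows $\varphi$ is injective on $S$ and that $\varphi(S)$ is an independent set of $H'_{\psi(t)}$ for every $\psi(t)\in\psi(T)$. Hence $(\varphi(S),\psi(T))$ is a dual independent pair of $[\{G'_i\};\{H'_j\}]$ of size product exactly $|S||T|$.

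The only mild subtlety — and the step I would flag as needing a line of care rather than being a genuine obstacle — is the bookkeeping of indices: the index $i$ of $G_i$ ranges over $\mathcal{X}_1=V(H)$, so $\varphi(i)$ is the correct index into $\{G'_i\}$, and likewise $\psi(j)$ indexes into $\{H'_j\}$; one must check these match the domains in the definition of dual graph homomorphism as stated. Once that is aligned, the argument is a direct transcription of the proof of the earlier dual-graph-homomorphism lemma, specialized from arbitrary-length uniquely decodable codebook pairs to the length-one case, so I would keep the write-up short, essentially: "The results follow by the same argument as the proof of the preceding lemma, applied with $n=1$," optionally spelled out as above.
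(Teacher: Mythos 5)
Your proof is correct and matches the paper's intent: the paper states this proposition as an immediate consequence of the dual-graph-homomorphism argument (it gives no separate proof), and your write-up is exactly that argument specialized to the one-shot case — pushing a maximum dual independent pair $(S,T)$ through $(\varphi,\psi)$, using the complement-preserving conditions to get non-adjacency and injectivity, so $(\varphi(S),\psi(T))$ is a dual independent pair of the same size product.
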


Next we shall provide an upper bound for $\pi(\{G_i\};\{H_j\})$ via a generalization of the Lov\'asz theta number~\cite{Lovasz1979}.
Let $\Gamma$ be an arbitrary $(m_1+m_2)\times (m_1+m_2)$ positive semi-definite matrix (i.e., $\Gamma \succeq 0$), and $\Gamma_{i,j}$ be its $(i,j)$th entry. Let 
$J$ be an $m_1\times m_2$ all-one matrix, and
$I_n$ be an $n\times n$ identity matrix.
For any matrices $A$ and $B$, denote $\langle A,B\rangle=\text{trace}(AB)$ and denote $A^{T}$ as the transpose of matrix $A$. 
Now define $\rho(\{G_i\},\{H_j\})$ as
\begin{align}
\text{maximize} \quad &\langle \left( \begin{array}{cc}
                           0 & J \\
                           J^T & 0
                         \end{array}
\right),\Gamma \rangle \nonumber \\
 \text{subject to}\quad &\langle \left( \begin{array}{cc}
                           I_{m_1} & 0 \\
                           0 & 0
                         \end{array}
\right),\Gamma \rangle =1, \nonumber \\
&\langle \left( \begin{array}{cc}
                           0 & 0 \\
                           0 & I_{m_2}
                         \end{array}
\right),\Gamma \rangle =1,\label{eq-sdp} \\
&\Gamma_{i,j+m_1}\cdot \Gamma_{i,k+m_1}=0,\ \ \ \forall\ i\in \mathcal{X}_1,\, j,k\in \mathcal{X}_2,\, j\ne k,\,
j\sim k \ \text{in}\ G_i \nonumber \\
&\Gamma_{i+m_1,j} \cdot \Gamma_{i+m_1,k}=0,\ \ \ \forall\ i\in \mathcal{X}_2,\, j,k\in \mathcal{X}_1,\, j\ne k,\,
j\sim k \ \text{in}\ H_i \nonumber \\
&\Gamma_{i,k+m_1} \cdot \Gamma_{j,k+m_1}=0,\ \ \ \forall\ i,j\in \mathcal{X}_1,\, k\in \mathcal{X}_2,\, i\ne j,\,
i\sim j \ \text{in}\ H_k \nonumber \\
&\Gamma_{i+m_1,k}\cdot \Gamma_{j+m_1,k}=0,\ \ \ \forall\ i,j\in \mathcal{X}_2,\, k\in \mathcal{X}_1,\, i\ne j,\,
i\sim j \ \text{in}\ G_k \nonumber \\
&\Gamma\succeq 0. \nonumber
\end{align}

\begin{lemma}\label{lemma-rho}
$\pi(\{G_i\},\{H_j\})\le \big(\frac{1}{2}\rho(\{G_i\},\{H_j\})\big)^2$.
\end{lemma}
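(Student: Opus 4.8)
The plan is to prove the inequality by exhibiting a single feasible point of the maximization program defining $\rho(\{G_i\},\{H_j\})$ whose objective value is exactly $2\sqrt{\pi(\{G_i\},\{H_j\})}$. Since $\rho(\{G_i\},\{H_j\})$ is the optimum of that maximization, this yields $\rho(\{G_i\},\{H_j\})\ge 2\sqrt{\pi(\{G_i\},\{H_j\})}$, which rearranges into the claimed bound $\pi\le\big(\frac{1}{2}\rho\big)^2$. Note that the program is not convex (the slackness-type constraints are bilinear in $\Gamma$), but this is immaterial here: we only need to produce one feasible matrix.

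First I would take a maximum dual independent pair $(S,T)$, say with $|S|=p$ and $|T|=q$, so that $pq=\pi(\{G_i\},\{H_j\})$; note $p,q\ge 1$ since any pair of singletons is dual independent. Identifying $\mathcal{X}_1$ with the first block $\{1,\dots,m_1\}$ of coordinates and $\mathcal{X}_2$ with the second block (shifted by $m_1$), I would set $v\in\mathbb{R}^{m_1+m_2}$ to be the vector with $v_i=1/\sqrt{p}$ for $i\in S$, $v_{j+m_1}=1/\sqrt{q}$ for $j\in T$, and all other entries $0$, and then take $\Gamma=vv^{T}$, which is automatically positive semidefinite. Writing $v=(a,b)$ with $a\in\mathbb{R}^{m_1}$ and $b\in\mathbb{R}^{m_2}$, one has $\Gamma_{i,j+m_1}=a_ib_j$, $\Gamma_{ii}=a_i^2$ on the first block, and $\Gamma_{j+m_1,j+m_1}=b_j^2$ on the second block.

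Next I would verify feasibility. The two normalization constraints hold because $\sum_{i\le m_1}\Gamma_{ii}=\sum_{i\in S}1/p=1$ and likewise $\sum_{j\le m_2}\Gamma_{j+m_1,j+m_1}=1$. For the four families of product constraints, each left-hand side is a product of two entries of $\Gamma$, hence (under $\Gamma=vv^{T}$) of the form $a_i^2b_jb_k$, $b_i^2a_ja_k$, $a_ia_jb_k^2$, or $b_ib_ja_k^2$ with indices ranging over the stated sets. Such a product can be nonzero only when its two ``variable'' coordinates both lie in $S$ (for an $a$-pair) or both in $T$ (for a $b$-pair) while the ``fixed'' coordinate lies in the complementary set; but the adjacency hypothesis attached to that constraint then asserts two vertices of $S$ adjacent in some $H_t$ with $t\in T$, or two vertices of $T$ adjacent in some $G_s$ with $s\in S$ --- exactly what the definition of a dual independent pair rules out. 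Hence every product constraint is satisfied and $\Gamma$ is feasible.

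Finally I would compute the objective. With $M$ the block matrix having $J$ above the diagonal and $J^{T}$ below, $\langle M,\Gamma\rangle=v^{T}Mv=2\,a^{T}Jb=2\big(\sum_i a_i\big)\big(\sum_j b_j\big)=2\cdot\frac{|S|}{\sqrt{p}}\cdot\frac{|T|}{\sqrt{q}}=2\sqrt{pq}=2\sqrt{\pi(\{G_i\},\{H_j\})}$, so $\rho(\{G_i\},\{H_j\})\ge 2\sqrt{\pi(\{G_i\},\{H_j\})}$ and the lemma follows. There is no genuine obstacle; the only care needed is the bookkeeping in the feasibility step, where one must match each of the four product-constraint families to one of the two defining clauses of a dual independent pair, keeping track of which coordinate of $v$ sits in the first block and which in the second.
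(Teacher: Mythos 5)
Your proposal is correct and is essentially the paper's own argument: the explicit matrix the paper constructs (entries $1/|S|$, $1/\sqrt{|S||T|}$, $1/|T|$ on the appropriate blocks) is exactly your rank-one matrix $vv^{T}$, and both proofs verify feasibility from the dual-independence of $(S,T)$ and evaluate the objective as $2\sqrt{|S||T|}=2\sqrt{\pi(\{G_i\},\{H_j\})}$. Your presentation via $\Gamma=vv^{T}$ merely makes positive semidefiniteness immediate; otherwise the two arguments coincide.
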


\begin{proof}
Suppose $(S,T)$ with $S\subseteq \mathcal{X}_1$, $T\subseteq \mathcal{X}_2$ is a maximum dual independent pair such that $|S||T|=\pi(\{G_i\},\{H_j\})$.
For a number $m$ and a set $S$, denote
$m+S=\{m+s: s\in S\}$. Let $\Gamma$ be an $(m_1+m_2)\times (m_1+m_2)$ matrix such that
\begin{equation*}
    \Gamma_{i,j}=
    \begin{cases}
    \frac{1}{|S|}, &\text{if}\ i\in S,\, j\in S\\
    \frac{1}{\sqrt{|S||T|}}, &\text{if}\ i\in S,\, j\in m_1+T,\ \text{or}\ \, i\in m_1+T,\, j\in S\\
    \frac{1}{|T|}, &\text{if}\ i\in m_1+T,\, j\in m_1+T\\
    0, &\text{otherwise}.
    \end{cases}
\end{equation*}
Notice that for any vector $x^{m_1+m_2}=(x_1,\ldots,x_{m_1+m_2})$ we have 
$$x^{m_1+m_2}\cdot \Gamma\cdot (x^{m_1+m_2})^T
=\bigg(\frac{1}{\sqrt{|S|}}\sum_{i\in S} x_i + \frac{1}{\sqrt{|T|}}\sum_{j\in T} x_{m_1+j}\bigg)^2\ge 0.$$
This shows that $\Gamma$ is a positive semi-definite matrix satisfying the equality constraints in~\eqref{eq-sdp}.
Accordingly, $\Gamma$ is a feasible solution for program~\eqref{eq-sdp} and
\begin{align*}
%\begin{split}
    \rho(\{G_i\},\{H_j\})&\ge \langle \left( \begin{array}{cc}
                           0 & J \\
                           J^T & 0
                         \end{array}
\right),\Gamma \rangle \\
&=2\sqrt{|S||T|}\\
&=2\sqrt{\pi(\{G_i\},\{H_j\})},
\end{align*}
implying the result. This completes the proof.
\end{proof}

\section{Outer bounds}
\label{sec-outer-bd}

In this section we provide single-letter outer bounds for the non-adpative zero-error capacity region of the DM-TWC.
First in Section \ref{subsec-simple-outer-bd}, we present two simple outer bounds, one based on Shannon's vanishing-error non-adaptive capacity region and the other on a two-way analogue of the linear programming bound for point-to-point channels.
Next in Section \ref{subsec-LP-Shannon-outer-bd},
we combine the two bounds given in Section \ref{subsec-simple-outer-bd} and obtain an outer bound that is generally better than both.
Finally in Section \ref{subsec-outer-bd-spectrum-graphs},
we derive another single-letter outer bound via the asymptotic spectra of graphs.

\subsection{Simple bounds}
\label{subsec-simple-outer-bd}

It is trivial to see that Shannon's vanishing-error non-adaptive capacity region of the DM-TWC \cite[Theorem 3]{Shannon1961} contains its zero-error counterpart. First recall Shannon's bound in~\cite{Shannon1961}. 
\begin{lemma}[\cite{Shannon1961}]
The vanishing-error non-adaptive capacity region of a  DM-TWC $P_{Y_1,Y_2|X_1,X_2}$ is the convex hull of the set
\begin{equation*}
   \bigcup_{P_{X_1}, P_{X_2}} \{(R_1,R_2): R_1\ge 0, R_2\ge 0, R_1= I(X_1;Y_2|X_2), R_2= I(X_2;Y_1|X_1)\}
\end{equation*}
where the union is taken over all product input probability distributions $P_{X_1}\times P_{X_2}$. 
\end{lemma}

Together with Proposition \ref{prop-same-adjacency}, this immediately yields  the following outer bound.
\begin{lemma}\label{upperbd1}
$\mathscr{C}_{ze}(P_{Y_1,Y_2|X_1,X_2})$ is contained in
\begin{equation}\label{upbd-1-shannon}
\begin{split}
\bigcap_{Q_{Y_1,Y_2|X_1,X_2}}\bigcap_{0\le \lambda\le 1}
\{(R_1,&R_2): R_1\ge 0, R_2\ge 0, \lambda R_1 + (1-\lambda) R_2 \le \max_{P_{X_1},P_{X_2}}\epsilon (\lambda)\},
\end{split}
\end{equation}
where
\begin{equation}\label{def-eps-lambda}
\epsilon (\lambda)\triangleq
 \lambda I(X_1;Y_2|X_2)  + (1-\lambda) I(X_2;Y_1|X_1).
\end{equation}
The first intersection is taken over all DM-TWCs $Q_{Y_1,Y_2|X_1,X_2}$ with the same adjacency as $P_{Y_1,Y_2|X_1,X_2}$, and
the maximum is taken over all product input probability distributions $P_{X_1}\times P_{X_2}$.
\end{lemma}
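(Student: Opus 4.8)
The plan is to sandwich $\mathscr{C}_{ze}(P_{Y_1,Y_2|X_1,X_2})$ between Shannon's vanishing-error non-adaptive capacity region and the half-space description of the latter, and then to exploit the fact that the zero-error region depends only on the confusion graphs. First I would observe that every $(n,R_1,R_2)$ uniquely decodable codebook pair is, in particular, a non-adaptive code achieving vanishing (indeed zero) probability of error; hence $\mathscr{C}_{ze}(P_{Y_1,Y_2|X_1,X_2})$ is contained in Shannon's vanishing-error non-adaptive capacity region $\mathscr{C}_{\mathrm{Sh}}(P_{Y_1,Y_2|X_1,X_2})$ of \cite[Theorem 3]{Shannon1961}, which is the closure of the convex hull of the set of rate pairs $(R_1,R_2)$ satisfying $0\le R_1\le I(X_1;Y_2|X_2)$ and $0\le R_2\le I(X_2;Y_1|X_1)$ for some product input distribution $P_{X_1}\times P_{X_2}$.

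Next I would pass from this convex-hull description to the half-space description in the statement. Since $\mathscr{C}_{\mathrm{Sh}}$ is closed and convex, it equals the intersection of all half-spaces containing it, and within the nonnegative quadrant it suffices to consider the supporting directions $(\lambda,1-\lambda)$ with $0\le\lambda\le 1$. A linear functional attains its maximum over the convex hull of a set at a point of that set, so for each such $\lambda$,
\[
\max_{(R_1,R_2)\in\mathscr{C}_{\mathrm{Sh}}}\big(\lambda R_1+(1-\lambda)R_2\big)=\max_{P_{X_1},P_{X_2}}\big(\lambda I(X_1;Y_2|X_2)+(1-\lambda)I(X_2;Y_1|X_1)\big)=\max_{P_{X_1},P_{X_2}}\epsilon(\lambda),
\]
the maximum being attained by continuity of (conditional) mutual information and compactness of the product simplex. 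This already shows that $\mathscr{C}_{\mathrm{Sh}}(P_{Y_1,Y_2|X_1,X_2})$, and therefore $\mathscr{C}_{ze}(P_{Y_1,Y_2|X_1,X_2})$, is contained in $\bigcap_{0\le\lambda\le 1}\{(R_1,R_2): R_1,R_2\ge 0,\ \lambda R_1+(1-\lambda)R_2\le\max_{P_{X_1},P_{X_2}}\epsilon(\lambda)\}$, i.e., in the term of \eqref{upbd-1-shannon} corresponding to $Q_{Y_1,Y_2|X_1,X_2}=P_{Y_1,Y_2|X_1,X_2}$.

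Finally I would invoke Proposition \ref{prop-same-adjacency}: since $\mathscr{C}_{ze}$ is determined by the confusion graphs alone, $\mathscr{C}_{ze}(P_{Y_1,Y_2|X_1,X_2})=\mathscr{C}_{ze}(Q_{Y_1,Y_2|X_1,X_2})$ for every DM-TWC $Q_{Y_1,Y_2|X_1,X_2}$ with the same adjacency as $P_{Y_1,Y_2|X_1,X_2}$. Applying the previous step to each such $Q$ and intersecting over all of them yields the claimed bound \eqref{upbd-1-shannon}.

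As for difficulty, there is no serious obstacle: the statement is a direct combination of \cite[Theorem 3]{Shannon1961}, convex-hull/supporting-hyperplane duality, and Proposition \ref{prop-same-adjacency}. The only points needing a little care are the convexity and closedness of Shannon's region (so that the half-space description is tight), the fact that the maximizing functional is evaluated only over product distributions consistent with the definition \eqref{def-eps-lambda} of $\epsilon(\lambda)$, and the minor remark that $\mathscr{C}_{ze}$ itself need not be assumed convex, since only the outer set in \eqref{upbd-1-shannon} has to be.
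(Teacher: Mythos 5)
Your proposal is correct and follows essentially the same route as the paper, which treats the lemma as immediate: the zero-error region is contained in Shannon's non-adaptive vanishing-error region \cite[Theorem 3]{Shannon1961}, and intersecting over all $Q_{Y_1,Y_2|X_1,X_2}$ with the same adjacency is justified by Proposition \ref{prop-same-adjacency}. Your added detail on the supporting-hyperplane/half-space conversion of Shannon's convex region is just a fleshing-out of what the paper leaves implicit.
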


\begin{remark}\rm
The bound (\ref{upbd-1-shannon}) can also be written in the standard form
\begin{equation*}
\begin{split}
\bigcap_{Q_{Y_1,Y_2|X_1,X_2}}\bigcup_{P_{X_1},P_{X_2}}
\big\{(R_1,R_2):\ & R_1\ge 0, R_2\ge 0, \\[-0.3cm]
& R_1\le I(X_1;Y_2|X_2),\\
&R_2\le I(X_2;Y_1|X_1)\big\}.
\end{split}
\end{equation*}
Here we prefer however to use the form (\ref{upbd-1-shannon}), for ease of comparison with forthcoming bounds.
\end{remark}

We now proceed to obtain a combinatorial outer bound. Recall that a dual clique pair of a DM-TWC is a pair $(S,T)$ of subsets $S\subseteq \mathcal{X}_1$ and $T\subseteq \mathcal{X}_2$ such that $t\overset{s}{\sim}t'$ and $s\overset{t}{\sim}s'$ for any distinct $s,s'\in S$ and distinct $t,t'\in T$. In the sequel, we adopt the convention that $0^0=1$.
\begin{lemma}\label{upperbd2}
$\mathscr{C}_{ze}(P_{Y_1,Y_2|X_1,X_2})$ is contained in
\begin{equation}\label{Eq-form-bd2}
%\begin{split}
\bigcap_{0\le \lambda\le 1} \{(R_1,R_2): R_1\ge 0, R_2\ge 0,  \lambda R_1+ (1-\lambda)R_2\le \max_{P_{X_1},P_{X_2}}-\log l(\lambda)\},
%\end{split}
\end{equation}
where %\\[-0.4cm]
\begin{equation}\label{def-l-lambda}
l(\lambda) \triangleq
\max_{S,T}
\left(\sum_{x_1\in S} P_{X_1}(x_1)\right)^{\lambda}   \left(\sum_{x_2\in T} P_{X_2}(x_2)\right)^{1-\lambda}
\end{equation}
and the maximum in \eqref{Eq-form-bd2} is taken over all the input probability distributions $P_{X_1}$ and $P_{X_2}$, and the maximum in  \eqref{def-l-lambda} is taken over all the dual clique pairs $(S,T)$ of $P_{Y_1,Y_2|X_1,X_2}$.
\end{lemma}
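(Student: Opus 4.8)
The plan is to prove directly that every $(n,R_1,R_2)$ uniquely decodable codebook pair $(\mathcal{A},\mathcal{B})$ satisfies $\lambda\log|\mathcal{A}|+(1-\lambda)\log|\mathcal{B}|\le n\,\max_{P_{X_1},P_{X_2}}(-\log l(\lambda))$ for every $\lambda\in[0,1]$; dividing by $n$ and taking the closure over achievable pairs then gives \eqref{Eq-form-bd2}. It is convenient to work with the $n$-fold channel $\mathrm{ch}^{\otimes n}$, whose confusion graphs are $\{G_{a_1}\boxtimes\cdots\boxtimes G_{a_n}\}_{a^n\in\mathcal{X}_1^n}$ and $\{H_{b_1}\boxtimes\cdots\boxtimes H_{b_n}\}_{b^n\in\mathcal{X}_2^n}$ (so that $(\mathcal{A},\mathcal{B})$ is a length-$1$ uniquely decodable pair for $\mathrm{ch}^{\otimes n}$), and to abbreviate $c(\mathrm{ch})\triangleq\min_{P_{X_1},P_{X_2}}\max_{(S,T)}P_{X_1}(S)^{\lambda}P_{X_2}(T)^{1-\lambda}$, where the maximum runs over dual clique pairs of $\mathrm{ch}$, so that $\max_{P_{X_1},P_{X_2}}(-\log l(\lambda))=-\log c(\mathrm{ch})$.

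The combinatorial heart is a two-way analogue of the fact that a clique and an independent set of a graph share at most one vertex. Let $(\mathcal{S},\mathcal{T})$ be a dual clique pair of $\mathrm{ch}^{\otimes n}$. If $a^n\in\mathcal{S}\cap\mathcal{A}$, then $\mathcal{T}$ is a clique in $G_{a_1}\boxtimes\cdots\boxtimes G_{a_n}$ by the definition of a dual clique pair, while $\mathcal{B}$ is an independent set of the same graph by Proposition~\ref{prop-udc-is}; hence $|\mathcal{T}\cap\mathcal{B}|\le 1$. The symmetric argument shows $\mathcal{S}\cap\mathcal{B}\ne\emptyset$ forces $|\mathcal{S}\cap\mathcal{A}|\le1$, so in all cases $|\mathcal{S}\cap\mathcal{A}|\cdot|\mathcal{T}\cap\mathcal{B}|\le 1$. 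Taking $P_{X_1}$ uniform on $\mathcal{A}$ and $P_{X_2}$ uniform on $\mathcal{B}$ (as distributions on $\mathcal{X}_1^n,\mathcal{X}_2^n$), every dual clique pair of $\mathrm{ch}^{\otimes n}$ then has $P_{X_1}(\mathcal{S})^{\lambda}P_{X_2}(\mathcal{T})^{1-\lambda}=(|\mathcal{S}\cap\mathcal{A}|/|\mathcal{A}|)^{\lambda}(|\mathcal{T}\cap\mathcal{B}|/|\mathcal{B}|)^{1-\lambda}\le|\mathcal{A}|^{-\lambda}|\mathcal{B}|^{-(1-\lambda)}$, whence $c(\mathrm{ch}^{\otimes n})\le|\mathcal{A}|^{-\lambda}|\mathcal{B}|^{-(1-\lambda)}$, i.e. $\lambda\log|\mathcal{A}|+(1-\lambda)\log|\mathcal{B}|\le-\log c(\mathrm{ch}^{\otimes n})$. (The degenerate cases $\lambda\in\{0,1\}$ and dual clique pairs meeting $\mathcal{A}$ or $\mathcal{B}$ in the empty set are dealt with by continuity and the convention $0^0=0$.)

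It remains to single-letterize, namely to establish $c(\mathrm{ch}^{\otimes n})\ge c(\mathrm{ch})^{n}$; the bound then follows from $\lambda\log|\mathcal{A}|+(1-\lambda)\log|\mathcal{B}|\le-\log c(\mathrm{ch}^{\otimes n})\le -n\log c(\mathrm{ch})=n\,\max_{P_{X_1},P_{X_2}}(-\log l(\lambda))$. By induction it suffices to prove $c(\mathrm{ch}_1\otimes\mathrm{ch}_2)\ge c(\mathrm{ch}_1)c(\mathrm{ch}_2)$ for any two DM-TWCs, and this rests on two observations. First, a product of dual clique pairs is a dual clique pair of the product channel: if $(S_i,T_i)$ is a dual clique pair of $\mathrm{ch}_i$, then distinct $b,\widetilde b\in T_1\times T_2$ differ in one block, and the strong-product adjacency rules together with the defining property of $(S_i,T_i)$ yield $b\sim\widetilde b$ in the relevant strong product of $G$-graphs, and symmetrically for the $H$-graphs. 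Second, given input distributions $P_{X_1},P_{X_2}$ for $\mathrm{ch}_1\otimes\mathrm{ch}_2$, pick a dual clique pair $(S_1,T_1)$ of $\mathrm{ch}_1$ attaining the maximum for the first-block marginals of $P_{X_1},P_{X_2}$ (so its value is $\ge c(\mathrm{ch}_1)$), pass to the conditional distributions of the second block given that the first block lies in $S_1$, resp. $T_1$, and pick a dual clique pair $(S_2,T_2)$ of $\mathrm{ch}_2$ attaining $\ge c(\mathrm{ch}_2)$ for those conditionals. Since $P_{X_1}(S_1\times S_2)=P_{X_1}(S_1)\,P_{X_1}(S_2\mid S_1)$ and likewise for $P_{X_2}$, the dual clique pair $(S_1\times S_2,\,T_1\times T_2)$ of $\mathrm{ch}_1\otimes\mathrm{ch}_2$ has value $\ge c(\mathrm{ch}_1)c(\mathrm{ch}_2)$; maximizing, then minimizing over $P_{X_1},P_{X_2}$, gives $c(\mathrm{ch}_1\otimes\mathrm{ch}_2)\ge c(\mathrm{ch}_1)c(\mathrm{ch}_2)$ (the case where a factor is $0$ being trivial).

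The step I expect to be most delicate is this super-multiplicativity $c(\mathrm{ch}_1\otimes\mathrm{ch}_2)\ge c(\mathrm{ch}_1)c(\mathrm{ch}_2)$: naively tensoring the optimal input distributions only gives the reverse-type relation, so the conditioning argument (splitting the chosen dual clique pair across the two blocks and conditioning the second block on the first) is essential, as is the verification that products of dual clique pairs remain dual clique pairs. The remainder is bookkeeping, together with routine care over zero probabilities and over the boundary values $\lambda\in\{0,1\}$.
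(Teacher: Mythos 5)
Your proof is correct, but it takes a genuinely different route from the paper's. The paper proves Lemma~\ref{upperbd2} by the same blocklength induction as the first part of the proof of Theorem~\ref{upperbd-main}: project the codebooks onto a single coordinate, note that for a dual clique pair $(S,T)$ the unions in~\eqref{eq:union_of_proj_code} are disjoint and $(\mathcal{A}_i(S),\mathcal{B}_i(T))$ is again uniquely decodable of length $n-1$, and apply the induction hypothesis (a Holzman--K\"orner-type projection argument). You instead (i) prove a one-shot statement for the $n$-fold channel --- a dual clique pair $(\mathcal{S},\mathcal{T})$ and a uniquely decodable pair $(\mathcal{A},\mathcal{B})$ satisfy $|\mathcal{S}\cap\mathcal{A}|\,|\mathcal{T}\cap\mathcal{B}|\le 1$, the two-way analogue of ``a clique and an independent set share at most one vertex'' --- which with uniform inputs on $\mathcal{A},\mathcal{B}$ bounds the product-channel LP value $c(\mathrm{ch}^{\otimes n})$ by $|\mathcal{A}|^{-\lambda}|\mathcal{B}|^{-(1-\lambda)}$, and (ii) single-letterize via the super-multiplicativity $c(\mathrm{ch}_1\otimes\mathrm{ch}_2)\ge c(\mathrm{ch}_1)c(\mathrm{ch}_2)$, obtained by forming products of dual clique pairs (valid by the strong-product adjacency rule) and conditioning the second block on the first; for $\lambda\in(0,1)$ the optimal $(S_1,T_1)$ automatically carries positive mass, so the conditioning is well defined. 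Both steps check out, and the direction of tensorization you use is the right one. Your route mirrors Shannon's treatment of the point-to-point LP bound (one-shot bound plus tensorization) and isolates the super-multiplicativity of the two-way LP quantity as a reusable fact; the paper's projection induction is less modular but is precisely what extends to Theorem~\ref{upperbd-main}, where the minimum with $\epsilon(\lambda)$ sits inside the maximization over input distributions. Two minor points: ``$\mathcal{S}\cap\mathcal{B}$'' in your second paragraph should read $\mathcal{T}\cap\mathcal{B}$, and your handling of the endpoints $\lambda\in\{0,1\}$ (continuity together with reading $x^{0}$ as $1$ only when $x>0$ inside $l(\lambda)$) is no less careful than the paper's, which leaves the same degeneracies implicit.
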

\begin{proof}
The argument is similar to the first part of the proof of Theorem \ref{upperbd-main}, and we omit it here.
\end{proof}

The following is a trivial corollary of Lemmas \ref{upperbd1} and \ref{upperbd2}.
\begin{coro}\label{coro-two-upbd}
$\mathscr{C}_{ze}(P_{Y_1,Y_2|X_1,X_2})$ is contained in
\begin{equation}
\begin{split}
\bigcap_{Q_{Y_1,Y_2|X_1,X_2}}\bigcap_{0\le \lambda\le 1}
\{(R_1,&R_2): R_1\ge 0, R_2\ge 0, \lambda R_1 + (1-\lambda) R_2 \le t(\lambda)\},
\end{split}
\end{equation}
where% \\[-0.35cm]
\begin{equation}\label{def-t-lambda}
t(\lambda)\triangleq\min\Big\{\max_{P_{X_1},P_{X_2}}\epsilon (\lambda), \max_{P_{X_1},P_{X_2}}-\log l(\lambda)\Big\}.
\end{equation}
\end{coro}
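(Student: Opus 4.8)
The plan is to obtain the corollary by simply intersecting the two outer bounds already established in Lemmas \ref{upperbd1} and \ref{upperbd2}: each of them is a valid outer bound on $\mathscr{C}_{ze}(P_{Y_1,Y_2|X_1,X_2})$, so their intersection is too, and the only task is to repackage that intersection into the form governed by \eqref{def-t-lambda}.

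First I would place both bounds over a common index set. The region in Lemma \ref{upperbd1} is already written as $\bigcap_{Q}\bigcap_{0\le\lambda\le1}\{(R_1,R_2): R_1\ge0,\,R_2\ge0,\ \lambda R_1+(1-\lambda)R_2\le \max_{P_{X_1},P_{X_2}}\epsilon(\lambda)\}$, with $\epsilon(\lambda)$ as in \eqref{def-eps-lambda} evaluated for the channel $Q$. The region in Lemma \ref{upperbd2} is indexed only by $\lambda$, but its defining quantity $l(\lambda)$ in \eqref{def-l-lambda} depends on the channel only through its dual clique pairs, hence only through its confusion graphs; by Proposition \ref{prop-same-adjacency} the bound is therefore identical for $P_{Y_1,Y_2|X_1,X_2}$ and for every $Q_{Y_1,Y_2|X_1,X_2}$ with the same adjacency. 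Consequently the set \eqref{Eq-form-bd2} equals $\bigcap_{Q}\bigcap_{0\le\lambda\le1}\{(R_1,R_2): R_1\ge0,\,R_2\ge0,\ \lambda R_1+(1-\lambda)R_2\le \max_{P_{X_1},P_{X_2}}-\log l(\lambda)\}$, where now $l(\lambda)$ may be read off from $Q$.

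Next, for each fixed pair $(Q,\lambda)$ I would intersect the two corresponding half-spaces. Both impose an upper bound on the same linear functional $\lambda R_1+(1-\lambda)R_2$, and the nonnegativity constraints $R_1\ge0$, $R_2\ge0$ are common to both; hence their intersection is precisely $\{(R_1,R_2): R_1\ge0,\,R_2\ge0,\ \lambda R_1+(1-\lambda)R_2\le t(\lambda)\}$ with $t(\lambda)=\min\{\max_{P_{X_1},P_{X_2}}\epsilon(\lambda),\ \max_{P_{X_1},P_{X_2}}-\log l(\lambda)\}$ as in \eqref{def-t-lambda}. Intersecting over all $Q$ and all $\lambda\in[0,1]$, and using that $\mathscr{C}_{ze}(P_{Y_1,Y_2|X_1,X_2})$ lies in both regions of Lemmas \ref{upperbd1} and \ref{upperbd2}, hence in their intersection, yields the claim.

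There is no genuine obstacle here; the argument is pure bookkeeping. The one point that deserves a word of care is the harmonization of index sets in the first step — justifying that the combinatorial bound of Lemma \ref{upperbd2}, which a priori refers only to $P_{Y_1,Y_2|X_1,X_2}$, may be placed under the same $\bigcap_{Q}$ as the information-theoretic bound of Lemma \ref{upperbd1} — and this is exactly what Proposition \ref{prop-same-adjacency} supplies. (Alternatively, one may keep the two bounds over their original index sets; intersecting a fixed set with $\bigcap_{Q}(\cdot)$ commutes with that intersection, so the same final region results and the outer $\bigcap_{Q}$ is merely cosmetic for the second term.)
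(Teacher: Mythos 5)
Your proposal is correct and matches the paper's treatment: the paper presents this corollary as a trivial consequence of Lemmas \ref{upperbd1} and \ref{upperbd2}, i.e., exactly the intersection argument you spell out. Your extra remark that $l(\lambda)$ depends only on the confusion graphs (via Proposition \ref{prop-same-adjacency}), so the outer $\bigcap_{Q}$ is harmless for the combinatorial term, is the right bookkeeping point and is consistent with the paper's intent.
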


\subsection{An improved bound}
\label{subsec-LP-Shannon-outer-bd}

We now state a single-letter outer bound result, in which the order of the minimum and the maximum in~\eqref{def-t-lambda} is swapped. This generally yields a tighter outer bound due to the max-min inequality. In fact, our bound can be seen as a generalization of the one obtained by Holzman and K\"{o}rner for the binary multiplying channel~\cite{HK1995}, in which case the max-min is indeed strictly tighter than the min--max.

\begin{theorem}\label{upperbd-main}
$\mathscr{C}_{ze}(P_{Y_1,Y_2|X_1,X_2})$ is contained in
\begin{equation}\label{eq-upbd2}
\begin{split}
 \bigcap_{Q_{Y_1,Y_2|X_1,X_2}} \bigcap_{0\le \lambda \le 1}\{(R_1,&R_2): R_1\ge 0, R_2\ge 0, \lambda R_1 + (1-\lambda) R_2 \le \theta(\lambda)\},
\end{split}
\end{equation}
where
\begin{equation}\label{def-theta-lambda}
%\begin{split}
\theta(\lambda)\triangleq
\max_{P_{X_1},P_{X_2}}
\min \{
\epsilon (\lambda), -\log l(\lambda)\}.
%\end{split}
\end{equation}
The first intersection is taken over all DM-TWCs $Q_{Y_1,Y_2|X_1,X_2}$ with the same adjacency as $P_{Y_1,Y_2|X_1,X_2}$, and
the maximum is taken over all product input probability distributions $P_{X_1}\times P_{X_2}$.
\end{theorem}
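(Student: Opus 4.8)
The plan is to establish, for a uniquely decodable codebook pair, two single-letter converse inequalities that hold simultaneously at one and the same induced input distribution $P_{X_1}\times P_{X_2}$; it is exactly this ``common witness'' that upgrades the min--max of Corollary~\ref{coro-two-upbd} to the max--min of~\eqref{def-theta-lambda}. By Proposition~\ref{prop-same-adjacency} it suffices to fix a DM-TWC $Q_{Y_1,Y_2|X_1,X_2}$ with the same adjacency as $P$ and a scalar $\lambda\in[0,1]$, and to bound $\lambda R_1+(1-\lambda)R_2$ by $\theta(\lambda)$ (with $\epsilon(\lambda)$ evaluated for $Q$); given an $(n,R_1,R_2)$ uniquely decodable pair $(\mathcal{A},\mathcal{B})$ one bounds $\tfrac1n(\lambda\log|\mathcal{A}|+(1-\lambda)\log|\mathcal{B}|)$ up to $o(1)$, and passing first to the largest constant-composition subpair (a $\mathrm{poly}(n)$, hence rate-negligible, loss) we may assume every codeword of $\mathcal{A}$ has composition $P_{X_1}$ and every codeword of $\mathcal{B}$ has composition $P_{X_2}$.

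For the LP part --- the ``first part'' referenced in Lemma~\ref{upperbd2} --- fix a dual clique pair $(S,T)$. For $b^n\in\mathcal{B}$, Proposition~\ref{prop-udc-is} gives that $\mathcal{A}$ is an independent set of $H_{b_1}\boxtimes\cdots\boxtimes H_{b_n}$, and for any cliques $K_i$ of $H_{b_i}$ the product $K_1\times\cdots\times K_n$ is a clique of that strong product; taking $K_i=S$ on the coordinates with $b_i\in T$ (legitimate since $(S,T)$ is a dual clique pair) and singletons elsewhere, a double-counting/fractional-clique-cover estimate --- a two-way analogue of the Plotkin-type bound of Holzman and K\"orner~\cite{HK1995} --- yields $\lambda\log|\mathcal{A}|+(1-\lambda)\log|\mathcal{B}|\le -n\log\!\big(P_{X_1}(S)^{\lambda}P_{X_2}(T)^{1-\lambda}\big)+o(n)$, the constant composition pinning the number of $S$- and $T$-coordinates to $nP_{X_1}(S)$ and $nP_{X_2}(T)$. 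Minimizing over dual clique pairs (equivalently, maximizing inside the logarithm) gives $\lambda\log|\mathcal{A}|+(1-\lambda)\log|\mathcal{B}|\le -n\log l(\lambda)+o(n)$ at $(P_{X_1},P_{X_2})$.

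For the entropy part --- essentially the converse underlying Lemma~\ref{upperbd1} --- let $A^n,B^n$ be independent and uniform on $\mathcal{A},\mathcal{B}$ and pass them through $Q$; the zero-error decoding constraints $H(A^n\mid B^n,Y_2^n)=H(B^n\mid A^n,Y_1^n)=0$ and the usual memoryless/non-adaptive single-letterization give $\log|\mathcal{A}|\le\sum_i I_Q(X_1;Y_2\mid X_2)$ and $\log|\mathcal{B}|\le\sum_i I_Q(X_2;Y_1\mid X_1)$ with the $i$-th terms evaluated at the per-coordinate product marginal $P_{X_{1,i}}\otimes P_{X_{2,i}}$, hence $\lambda\log|\mathcal{A}|+(1-\lambda)\log|\mathcal{B}|\le\sum_i\epsilon(\lambda)$ along these distributions. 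Using concavity of conditional mutual information in the input law, one bounds this by $n\,\epsilon(\lambda)$ evaluated at $P_{X_1}\times P_{X_2}$; combining with the previous display, dividing by $n$ and letting $n\to\infty$, we obtain $\lambda R_1+(1-\lambda)R_2\le\epsilon(\lambda)$ and $\lambda R_1+(1-\lambda)R_2\le-\log l(\lambda)$ for the same $P_{X_1}\times P_{X_2}$, so $\lambda R_1+(1-\lambda)R_2\le\min\{\epsilon(\lambda),-\log l(\lambda)\}\le\theta(\lambda)$; intersecting over $Q$ and $\lambda$ finishes the argument.

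The step I expect to be the main obstacle is exactly this last reconciliation. The LP bound lives most naturally at the empirical (coordinate-averaged) input distribution, whereas the entropy bound emerges as a convex combination over coordinates of $\epsilon(\lambda)$-values at per-coordinate product distributions, and $\epsilon(\lambda)$ is \emph{not} jointly concave in $(P_{X_1},P_{X_2})$; forcing both inequalities to refer honestly to a single $P_{X_1}\times P_{X_2}$ is what the constant-composition reduction and a careful invocation of concavity of conditional mutual information in the \emph{joint} law $P_{X_1X_2}$ (together with the fact that $l(\lambda)$ depends only on the marginals) are there to handle --- and checking that replacing the induced, possibly non-product, joint law by the product of its marginals in the $\epsilon(\lambda)$ term does not undercut $\theta(\lambda)$ is the crux that must be argued with care.
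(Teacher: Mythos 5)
Your plan hinges on a ``common witness'' step that is false: the LP-type inequality $\lambda\log|\mathcal{A}|+(1-\lambda)\log|\mathcal{B}|\le -n\log\big(P_{X_1}(S)^{\lambda}P_{X_2}(T)^{1-\lambda}\big)+o(n)$, with $P_{X_1},P_{X_2}$ the codebooks' own (constant) compositions, does not hold, so the max--min of \eqref{def-theta-lambda} cannot be obtained by evaluating both converse inequalities at the empirical composition. A counterexample already lives in the one-way special case ($\lambda=1$, $|\mathcal{X}_2|=1$, so $T$ is a singleton and $S$ any clique of $H_t$): let $H_t$ be the graph on $\{a,b,v_1,\ldots,v_k\}$ whose only edge is $\{a,b\}$, and let $\mathcal{A}$ consist of all sequences having exactly $n/2$ coordinates equal to $a$ and the remaining coordinates spread evenly over $v_1,\ldots,v_k$. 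This is an independent set of $H_t^{\boxtimes n}$, hence a uniquely decodable codebook pair with $|\mathcal{B}|=1$, of rate $1+\tfrac12\log k-o(1)$, with constant composition satisfying $P_{X_1}(S)=\tfrac12$ for the dual clique pair $S=\{a,b\}$; your claimed display would cap the rate at $1+o(1)$, a contradiction for $k\ge 2$. (This does not contradict the theorem, whose bound maximizes over \emph{all} input distributions: the maximizing $P$ gives $\log(k+1)\ge 1+\tfrac12\log k$.) The product-of-cliques counting you sketch only shows that at most $|\mathcal{X}_1|^{n-|I_T|}$ codewords of $\mathcal{A}$ have \emph{all} their $T$-coordinates inside $S$; it says nothing about the rest of $\mathcal{A}$, which is exactly where the claim breaks. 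Your entropy step has a secondary problem of the same flavor: as you concede, $\epsilon(\lambda)$ is not jointly concave in $(P_{X_1},P_{X_2})$, so the coordinate-averaged mutual information cannot in general be pulled back to the composition either.

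The paper avoids both issues by an induction on $n$ with a per-coordinate dichotomy (generalizing Holzman--K\"orner), and never anchors the bound at the empirical composition. For each coordinate $i$ one forms the projected codebooks $\mathcal{A}_i(x_1),\mathcal{B}_i(x_2)$ and induced distributions $P^i_{X_1},P^i_{X_2}$. If some coordinate $i$ and dual clique pair $(S,T)$ are heavy, i.e. $\big(\sum_{s\in S}P^i_{X_1}(s)\big)^{\lambda}\big(\sum_{t\in T}P^i_{X_2}(t)\big)^{1-\lambda}\ge 2^{-\theta(\lambda)}$, then the dual clique property makes the unions $\mathcal{A}_i(S),\mathcal{B}_i(T)$ disjoint and $(\mathcal{A}_i(S),\mathcal{B}_i(T))$ a uniquely decodable pair of length $n-1$, so the induction hypothesis absorbs a single factor $2^{\theta(\lambda)}$. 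Otherwise \emph{every} coordinate's projected product law satisfies $l(\lambda)<2^{-\theta(\lambda)}$, and the coordinate-wise bound $\lambda\log|\mathcal{A}|+(1-\lambda)\log|\mathcal{B}|\le\sum_{i}\big[\lambda I(Y_{2,i};A_i|B_i)+(1-\lambda)I(Y_{1,i};B_i|A_i)\big]$ (here $(A_i,B_i)$ is automatically a product law since $A^n\perp B^n$, so the non-product worry you raise does not arise) is at most $n$ times the maximum of $\epsilon(\lambda)$ over distributions constrained by $l(\lambda)<2^{-\theta(\lambda)}$, which is at most $\theta(\lambda)$ by the very definition of $\theta(\lambda)$. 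It is this peeling plus constrained-maximum mechanism, not a single common input distribution, that upgrades the min--max of Corollary \ref{coro-two-upbd} to the max--min of Theorem \ref{upperbd-main}; your proposal as written has a genuine gap at its central step.
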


\begin{proof}
The intersection over all $Q_{Y_1,Y_2|X_1,X_2}$ follows from Proposition \ref{prop-same-adjacency}. Hence without loss of generality, we prove that for $P_{Y_1,Y_2|X_1,X_2}$, each achievable rate pair $(R_1,R_2)$ satisfies $\lambda R_1 + (1-\lambda) R_2 \le \theta(\lambda)$, where $0\le \lambda \le 1$.

To that end, for each uniquely decodable codebook pair $(\mathcal{A},\mathcal{B})$ of length $n$, we will show that
\begin{equation}\label{to-proof}
|\mathcal{A}|^{\lambda}|\mathcal{B}|^{1-\lambda}\le \kappa 2^{n\theta(\lambda)}
\end{equation}
by induction on $n$, where $\kappa$ is a constant independent of $n$.
For the base case $n=1$, one could take subsets $\mathcal{A}\subseteq \mathcal{X}_1$, $\mathcal{B}\subseteq \mathcal{X}_2$ such that for any distinct $a,a'\in \mathcal{A}$ and distinct $b,b'\in \mathcal{B}$, we have $a\overset{b}{\nsim}a'$ and $b\overset{a}{\nsim}b'$. Clearly, $|\mathcal{A}||\mathcal{B}|\le |\mathcal{X}_1||\mathcal{X}_2|$ and (\ref{to-proof}) follows by taking $\kappa$ sufficiently large. Assume that~\eqref{to-proof} holds for every length $n'\le n-1$, and let us proceed to prove for length $n$. Suppose $(\mathcal{A},\mathcal{B})\subseteq \mathcal{X}^n_1\times \mathcal{X}^n_2$ is a uniquely decodable codebook pair of length $n$.  For a vector $x^n$, let $x^{n\setminus i}\triangleq (x_1,\ldots,x_{i-1},x_{i+1},\ldots,x_n)$
be its projection over all coordinates not equal to $i$. For each coordinate $1\le i\le n$ and each $x_1\in \mathcal{X}_1$, $x_2\in  \mathcal{X}_2$, let
\begin{equation}\label{A_iB_i}
\begin{split}
\mathcal{A}_i(x_1)&\triangleq\{a^{n\setminus i}:\ a^n\in \mathcal{A},\ a_i=x_1\}, \\
\mathcal{B}_i(x_2)&\triangleq\{b^{n\setminus i}:\ b^n\in \mathcal{B},\ b_i=x_2\}
\end{split}
\end{equation}
be the projections of each codebook in the $i$th coordinate. Define the distributions induced by these projections over $\mathcal{X}_1$ and $\mathcal{X}_2$ respectively to be
\begin{equation}\label{p_iq_i}
P^i_{X_1}(x_1)\triangleq\frac{|\mathcal{A}_i(x_1)|}{|\cal{A}|}, \qquad
P^i_{X_2}(x_2)\triangleq\frac{|\mathcal{B}_i(x_2)|}{|\cal{B}|}.
\end{equation}
Furthermore, for any two subsets $S\subseteq \mathcal{X}_1$ and $T\subseteq \mathcal{X}_2$, define the codebooks induced by the unions over $S$ and $T$ of the respective projected codebooks over the $i$th coordinate to be
\begin{equation}\label{eq:union_of_proj_code}
    \mathcal{A}_i(S)\triangleq\bigcup_{x_1\in S}\mathcal{A}_i(x_1),\quad \mathcal{B}_i(T)\triangleq\bigcup_{x_2\in T}\mathcal{B}_i(x_2).
\end{equation}

Note that if $(S,T)$ is a dual clique pair such that $\mathcal{A}_i(S)\ne \emptyset$ and $ \mathcal{B}_i(T)\ne \emptyset$, then
the unions in~\eqref{eq:union_of_proj_code} are disjoint, otherwise it contradicts the assumption that $(\mathcal{A},\mathcal{B})$ is uniquely decodable. Hence
\begin{align}\label{eq:dual_disjoint_union}
    |\mathcal{A}_i(S)|=\sum_{x_1\in S}|\mathcal{A}_i(x_1)|, \quad
               |\mathcal{B}_i(T)|=\sum_{x_2\in T}|\mathcal{B}_i(x_2)|
\end{align}
and also, for any $i\in[n]$ it must hold that $(\mathcal{A}_i(S),\mathcal{B}_i(T))$ is a uniquely decodable codebook pair of length $n-1$.

Now, if there exist a dual clique pair $(S,T)$ and a coordinate $1\le i\le n$ such that
\begin{equation}\label{assump-cond}
    \left(\sum_{s\in S}P^i_{X_1}(s)\right)^{\lambda}\left(\sum_{t\in T}P^i_{X_2}(t)\right)^{1-\lambda} \ge 2^{-\theta(\lambda)},
\end{equation}
then
\begin{align}
\label{eq:induction_first}|\mathcal{A}|^{\lambda}  |\mathcal{B}|^{1-\lambda}    &=\left(\frac{|\mathcal{A}_i(S)|}{\sum_{s\in S}P^i_{X_1}(s)}\right)^{\lambda}  \cdot \left(\frac{|\mathcal{B}_i(T)|}{\sum_{t\in T}P^i_{X_2}(t)})\right)^{1-\lambda}\\
\label{eq:induction_second} &
\le \frac{\kappa 2^{(n-1)\theta(\lambda)}}{2^{-\theta(\lambda)} } \\
&=\kappa 2^{n\theta(\lambda)},
\end{align}
where~\eqref{eq:induction_first} follows from~\eqref{eq:dual_disjoint_union}; and~\eqref{eq:induction_second} follows from the inductive hypothesis, assumption~\eqref{assump-cond}, and the fact that $(\mathcal{A}_i(S),\mathcal{B}_i(T))$ is a uniquely decodable codebook pair of length $n-1$. We conclude that~\eqref{to-proof} holds under condition~\eqref{assump-cond}.

Assume now that condition~\eqref{assump-cond} is not satisfied, that is, %that
\begin{equation}\label{assump-cond-2}
\max_{i\in[n]}\max_{S,T}\left(\sum_{s\in S}P^i_{X_1}(s)\right)^{\lambda}
\left(\sum_{t\in T}P^i_{X_2}(t)\right)^{1-\lambda}
 < 2^{-\theta(\lambda)}.
\end{equation}
Let $A^n$ and $B^n$ be codewords chosen from $\mathcal{A}$ and $\mathcal{B}$ respectively, uniformly at random, and let $Y^n_1$, $Y^n_2$ be the corresponding channel outputs. Since $(\mathcal{A},\mathcal{B})$ is a uniquely decodable codebook pair of length $n$, it must be that
\begin{equation}\label{=logAB}
\begin{split}
    \log |\mathcal{A}|&=I(Y^n_2;A^n|B^n),\\
    \log |\mathcal{B}|&=I(Y^n_1;B^n|A^n).
\end{split}
\end{equation}

On the other hand, we have
\begin{align}\label{le-nI1}
 I(Y_1^n;B^n|A^n)&=H(Y_1^n|A^n)-H(Y_1^n|A^n,B^n)\\
%&= \sum_{i=1}^n H(Y_{1,i}|Y_{1,1},\ldots,Y_{1,i-1},A^n)   - \sum_{i=1}^n   H(Y_{1,i}|Y_{1,1},\ldots,Y_{1,i-1},A^n,B^n)\\
\label{eq:chain_rule}&= \sum_{i=1}^n H(Y_{1,i}|Y_{1,1},\ldots,Y_{1,i-1},A^n) - \sum_{i=1}^n   H(Y_{1,i}|A_i,B_i)\\
\label{eq:cond_red_ent}&\le  \sum_{i=1}^n   H(Y_{1,i}|A_i)  - \sum_{i=1}^n   H(Y_{1,i}|A_i,B_i)\\
&= \sum_{i=1}^n I(Y_{1,i};B_i|A_i),
\end{align}
where~\eqref{eq:chain_rule} follows from the entropy chain rule and the memorylessness of the channel,
and~\eqref{eq:cond_red_ent} follows from the fact that conditioning reduces entropy. Similarly,
\begin{equation}\label{le-nI2}
I(Y_2^n;A^n|B^n) \le \sum_{i=1}^n I(Y_{2,i};A_i|B_i).
\end{equation}
Combining~\eqref{assump-cond-2}-\eqref{le-nI2}, we obtain
\begin{align}\label{Eq8}
&\log |\mathcal{A}|^{\lambda}|\mathcal{B}|^{1-\lambda} \nonumber \\
&= \lambda \log |\mathcal{A}| + (1-\lambda) \log |\mathcal{B}| \nonumber  \\
&\le \sum_{i=1}^n   \lambda I(Y_{2,i};A_i|B_i) + (1-\lambda) I(Y_{1,i};B_i|A_i) \nonumber  \\
&\le \max_{P_{X_1},P_{X_2},\atop l(\lambda)< 2^{-\theta(\lambda)}}
%n\cdot \epsilon (\lambda),
n[\lambda I(Y_2;X_1|X_2) +(1-\lambda) I(Y_1;X_2|X_1)]\nonumber  \\
&= \max_{P_{X_1},P_{X_2},\atop l(\lambda)< 2^{-\theta(\lambda)}}
n\cdot \epsilon (\lambda),
\end{align}
where $\epsilon(\lambda)$ and $l(\lambda)$ are defined in~\eqref{def-eps-lambda} and~\eqref{def-l-lambda} respectively,
and the maximum is taken over all product input probability distributions $P_{X_1}\times P_{X_2}$ such that
$l(\lambda)< 2^{-\theta(\lambda)}$, following condition~\eqref{assump-cond-2}.

By the definition of $\theta(\lambda)$, we have
\begin{align}
\theta(\lambda)&=\max_{P_{X_1},P_{X_2}}\min \{\epsilon (\lambda), -\log l(\lambda)\}\\
 \label{theta=}&\ge
\max_{P_{X_1},P_{X_2},\atop l(\lambda) < 2^{-\theta(\lambda)} }
\min \{\epsilon (\lambda), -\log l(\lambda)\}.
\end{align}
Note that for any input distributions $P_{X_1},P_{X_2}$ such that $l(\lambda) < 2^{-\theta(\lambda)}$, we have
\begin{equation}\label{>logtheta}
-\log l(\lambda)>\theta(\lambda).
\end{equation}
Combining (\ref{theta=}) and (\ref{>logtheta}), we obtain
\begin{equation}\label{eq11}
\max_{P_{X_1},P_{X_2},\atop l(\lambda)< 2^{-\theta(\lambda)}}
\epsilon (\lambda) \le \theta(\lambda).
\end{equation}
Substituting~\eqref{eq11} into~\eqref{Eq8}, we have $\log|\mathcal{A}|^{\lambda}|\mathcal{B}|^{1-\lambda}\le n \theta(\lambda)$,
completing the proof.
\end{proof}

We remark that Theorem \ref{upperbd-main} immediately implies, in particular, the following upper bound on the zero-error capacity of the point-to-point discrete memoryless channel.
\begin{coro}\label{bd-one-way}
The zero-error capacity of the discrete memoryless channel $P_{Y|X}$ is upper bounded by
\begin{equation*}
    \min_{Q_{Y|X}}\, \max_{P_X}\, \min \Big\{I(X;Y), -\log \max_{C} \sum_{x\in C} P_X(x)\Big\}.
\end{equation*}
The first minimum is taken over all the $Q_{Y|X}$ having the same confusion graph as $P_{Y|X}$, the first maximum is taken over all the input distributions $P_X$, and the second maximum is taken over all the cliques $C$ of the confusion graph of the channel.
\end{coro}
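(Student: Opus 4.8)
The plan is to obtain Corollary~\ref{bd-one-way} as a specialization of Theorem~\ref{upperbd-main}, by embedding the point-to-point channel $P_{Y|X}$ into a DM-TWC that is trivial in one direction. Concretely, given $P_{Y|X}$ with confusion graph $G$, I would construct the DM-TWC $P_{Y_1,Y_2|X_1,X_2}$ with $\mathcal{X}_1=\mathcal{X}$, $\mathcal{Y}_2=\mathcal{Y}$, $Y_2=Y$ (so that $P_{Y_2|X_1,X_2}=P_{Y|X}$ regardless of $X_2$), and with $\mathcal{X}_2$ and $\mathcal{Y}_1$ both singletons, so that $X_2$ and $Y_1$ are deterministic constants. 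For this channel every graph $G_{x_1}$ has the single-vertex set $\mathcal{X}_2$ and is therefore trivial, while the unique graph $H$ on $V(H)=\mathcal{X}_1=\mathcal{X}$ satisfies $x\sim x'$ iff there is $y$ with $P_{Y|X}(y|x)P_{Y|X}(y|x')>0$, i.e. $H=G$.

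Next I would verify the two translations of data. First, the zero-error capacity of $P_{Y|X}$ is at most the supremum of $R_1$ over rate pairs achievable for this DM-TWC: any $(n,R)$ uniquely decodable point-to-point code $\mathcal{C}$ yields the uniquely decodable codebook pair $(\mathcal{C},\{(x_2,\dots,x_2)\})$ of length $n$ and rate pair $(R,0)$, since Bob recovers Alice's message exactly as a point-to-point receiver and Alice's decoding is vacuous. Second, I would evaluate the right-hand side of~\eqref{eq-upbd2} at $\lambda=1$: because $X_2$ is deterministic, $\epsilon(1)=I(X_1;Y_2|X_2)=I(X;Y)$; a dual clique pair $(S,T)$ here is just a clique $S$ of $H=G$ paired with $T=\mathcal{X}_2$, so $l(1)=\max_{C}\sum_{x\in C}P_X(x)$ over cliques $C$ of $G$; the maximization over $P_{X_2}$ is vacuous; and the intersection over all $Q_{Y_1,Y_2|X_1,X_2}$ with the same adjacency reduces to the intersection over all $Q_{Y|X}$ with the same confusion graph as $P_{Y|X}$, since the $G_{x_1}$-adjacencies are trivially preserved and the $H$-adjacency is exactly that confusion graph. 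Theorem~\ref{upperbd-main} then gives $R_1=\lambda R_1+(1-\lambda)R_2\le\theta(1)=\min_{Q_{Y|X}}\max_{P_X}\min\{I(X;Y),\,-\log\max_C\sum_{x\in C}P_X(x)\}$, and taking the supremum over achievable $R_1$ yields the corollary.

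I do not expect a real obstacle here; the only care needed is in the bookkeeping of the reduction — confirming that ``same adjacency'' for the constructed DM-TWC is precisely ``same confusion graph'' for $P_{Y|X}$, and that $\lambda=1$ collapses $l(\lambda)$ to the clique-probability quantity and $\epsilon(\lambda)$ to $I(X;Y)$. As an alternative to the embedding, one could rerun the induction in the proof of Theorem~\ref{upperbd-main} with $\mathcal{B}$ and $Y_1^n$ taken trivial throughout, which directly reproduces the classical combination of the linear-programming bound with the (vanishing-error) channel capacity; the embedding is simply a shortcut that reuses the theorem verbatim.
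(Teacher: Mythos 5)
Your proposal is correct and matches the paper's route: the paper obtains Corollary~\ref{bd-one-way} simply by remarking that Theorem~\ref{upperbd-main} ``immediately implies'' it, i.e.\ by the same specialization to a DM-TWC with trivial $\mathcal{X}_2,\mathcal{Y}_1$ and $\lambda=1$ that you spell out. Your bookkeeping (dual clique pairs collapsing to cliques of $G$, $\epsilon(1)=I(X;Y)$, and the $Q$-intersection reducing to channels with the same confusion graph) is exactly the verification needed, so there is nothing to add.
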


It is less obvious that the upper bound in Corollary \ref{bd-one-way} in fact coincides with the linear programming bound on the zero-error capacity of a point-to-point discrete memoryless channel in \cite{Shannon1956}. This follows from a conjecture proposed by Shannon \cite{Shannon1956} that has later been proved by Ahlswede \cite{Ahlswede1973}. That is, $$\min_{Q_{Y|X}}\, \max_{P_X} I(X;Y)=\max_{P_X} \Big\{-\log \max_{C} \sum_{x\in C} P_X(x)\Big\}$$
for any point-to-point discrete memoryless channel $P_{Y|X}$. In other words, 
this means that in the point-to-point case, Corollary~\ref{coro-two-upbd} yields exactly the same bound as Theorem~\ref{upperbd-main}. However, this is not the case in general for the DM-TWC.
For example, recall that Holzman and K\"{o}rner~\cite{HK1995} derived the bound in Theorem~\ref{upperbd-main} in the special case of the (deterministic) binary multiplying channel (using  $\lambda=0.5$) and numerically showed that it is strictly better than what can be obtained from Corollary~\ref{coro-two-upbd}. Next we give another example showing that Theorem~\ref{upperbd-main} outperforms Corollary~\ref{coro-two-upbd} for a noisy (i.e., non-deterministic) DM-TWC as well.

\begin{example}\label{example-channel}\rm
Let $\mathcal{X}_1 = \{ 0, 1, 2 \},    \mathcal{X}_2=  \mathcal{Y}_1 =  \mathcal{Y}_2 =  \{ 0, 1 \} $, and
the conditional probability distribution $P_{Y_1,Y_2|X_1,X_2}$ be
\begin{center}
   \begin{tabular}{c|*{6}c}%{c|c|c|c|c|c|c}
  \hline
  % after \\: \hline or \cline{col1-col2} \cline{col3-col4} ...
  \backslashbox{ $y_1y_2$} {$x_1x_2$}& $00$ & $01$ & $10$ & $11$ & $20$ & $21$ \\ \hline
  $00$ & $1$ & $1$ & $0$ & $0$ & $0$ & $0$\\
  $01$ & $0$ & $0$ & $0$ & $0$ & $1$ & $0$\\
  $10$ & $0$ & $0$ & $\delta$ & $0$ & $0$ & $1$\\
  $11$ & $0$ & $0$ & $1-\delta$ & $1$ & $0$ & $0$\\
  \hline
\end{tabular}
\end{center}
where $\delta\in(0,1)$. Corollary~\ref{coro-two-upbd} gives the upper bound 
\begin{equation*}
    R_1+R_2\le \min \Big\{ \max_{P_{X_1},P_{X_2}} \epsilon^*,  \max_{P_{X_1},P_{X_2}} -\log l^* \Big\} \approx 1.2933,
\end{equation*}
where 
\begin{align*}
    \epsilon^* &= I(X_1;Y_2|X_2) + I(X_2;Y_1|X_1) \\
    &= P_{X_1}(2)\cdot h(P_{X_2}(0)) + P_{X_2}(0) \cdot h(P_{X_1}(0) + \delta \cdot P_{X_1}(1)) - P_{X_1}(1)\cdot P_{X_2}(0) \cdot h(\delta) + P_{X_2}(1)\cdot h(P_{X_1}(1)),\\
    l^*&= \max_{S,T}
\left(\sum_{x_1\in S} P_{X_1}(x_1)\right)  \left(\sum_{x_2\in T} P_{X_2}(x_2)\right)\\
&=\max \big\{P_{X_1}(0), P_{X_1}(1), P_{X_2}(0)\cdot (P_{X_1}(0)+ P_{X_1}(1)), P_{X_2}(0)\cdot (P_{X_1}(1)+P_{X_1}(2)), \\
&\qquad \qquad P_{X_2}(1)\cdot (P_{X_1}(0)+ P_{X_1}(2))\big\},
\end{align*}
%$R_1+R_2\le 1.2933$, 
and $h(x)=-x\log x -(1-x)\log(1-x)$. Whereas Theorem~\ref{upperbd-main} yields $$R_1+R_2\le \max\limits_{P_{X_1},P_{X_2}}\min \{ \epsilon^*, -\log l^* \}\approx 1.2910.$$
\end{example}

\subsection{An outer bound via Shannon capacity of a graph}
%the asymptotic spectrum of graphs}
%{***************************************************************}
\label{subsec-outer-bd-spectrum-graphs}

Based on Lemma~\ref{lemma-properties} and the Shannon capacity of a graph, we directly have the following bound. 
\begin{lemma}\label{lemma-upper-bound-via-Shannon-graph}
\begin{align*}
    \mathscr{C}^{sum}_{ze}([G_1,\ldots,G_{|\mathcal{X}_1|};H_1,\ldots,H_{|\mathcal{X}_2|}])
     \le \max_{x_1\in \mathcal{X}_1, x_2\in \mathcal{X}_2} \Theta(G_{x_1}) + \Theta(H_{x_2}).
\end{align*}    
\end{lemma}

It is worth noting that the above bound could be optimal in the sense that when all $G_i=G$ and $H_j=H$, it is easily verified that  $\mathscr{C}^{sum}_{ze}([G,\ldots,G;H,\ldots,H])=\Theta(G)+\Theta(H)$. 
However the bound in Lemma~\ref{lemma-upper-bound-via-Shannon-graph} is not tight in general. 
Later in Section~\ref{sec-certain-TWC}, we will improve the bound in Lemma~\ref{lemma-upper-bound-via-Shannon-graph} for certain scenarios and show that the improved bound (in Theorem~\ref{thm-G_K}) could outperform Theorem~\ref{upperbd-main} (see Example~\ref{example-thm3-outperform-thm2}) and be achieved in special cases (see Theorem~\ref{thm-construction-KG-KK}).

%************************************************************************************
\section{Inner bounds}\label{sec-inner-bd}

In this section we present two inner bounds for the non-adaptive zero-error capacity region of the DM-TWC, one based on random coding and the other on linear codes.

\subsection{Random coding}
The random coding for DM-TWC is standard and generalizes a known bound by Shannon for the one-way case~\cite{Shannon1956}. To obtain the random coding inner bound,
%The proof of the random coding lower bound is similar to the vanishing-error case in \cite{Shannon1961}, where
the following lemma in \cite{Shannon1961} is required.

\begin{lemma}[\cite{Shannon1961}]\label{shancomblemma}
Let $X$ be a random variable taking values in $[N]$, and $\{f_i:[N]\to \mathbb{R}_+\}_{i\in [d]}$ be a collection of nonnegative functions. Then there exists $x\in[N]$ such that $f_i(x) \leq d \cdot \mathbb{E}[f_i(X)] $ for all $i\in[d]$.
%Suppose we have a set of objects $B_1,B_2,\ldots,B_n$ with associated probabilities $P_1,P_2,\ldots,P_n$, and a number of numerically valued properties (functions)
%of the objects $f_1,f_2,\ldots,f_d$. These are all nonnegative, $f_i(B_j)\ge 0$, and $\sum_j[P_jf_i(B_j)]=A_i$, $i=1,2,\ldots,d$.
%Then there exists an object $B_p\in \{B_1,\ldots,B_n\}$ such that
%\begin{equation}
%    f_i(B_p) \le dA_i,\ \ \ \forall\, 1\le i\le d.
%\end{equation}
\end{lemma}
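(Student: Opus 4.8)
The plan is to use the probabilistic method together with a union bound (Markov's inequality applied $d$ times). First I would observe that for each fixed $i \in [d]$, Markov's inequality gives
\[
\Pr\bigl[ f_i(X) > d \cdot \mathbb{E}[f_i(X)] \bigr] < \frac{1}{d},
\]
since $f_i$ is nonnegative. (If $\mathbb{E}[f_i(X)] = 0$ then $f_i(X) = 0$ almost surely, and the bound $f_i(x) \le d\cdot 0 = 0$ holds wherever $X = x$ has positive probability; so we may harmlessly keep the strict inequality in the displayed probability, or handle this degenerate case separately.)

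Next I would take a union bound over the $d$ "bad events" $B_i \triangleq \{\,x \in [N] : f_i(x) > d \cdot \mathbb{E}[f_i(X)]\,\}$, weighting by the distribution of $X$:
\[
\Pr\Bigl[ X \in \bigcup_{i \in [d]} B_i \Bigr] \le \sum_{i \in [d]} \Pr[X \in B_i] < \sum_{i \in [d]} \frac{1}{d} = 1.
\]
Hence the complement event has positive probability: $\Pr\bigl[ X \notin \bigcup_i B_i \bigr] > 0$. In particular there exists at least one $x \in [N]$ with $\Pr[X = x] > 0$ such that $x \notin B_i$ for every $i$, i.e., $f_i(x) \le d \cdot \mathbb{E}[f_i(X)]$ for all $i \in [d]$, which is exactly the claimed conclusion.

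The only point that needs a little care — and I do not expect it to be a genuine obstacle — is the strictness of the inequalities. Markov's inequality yields a strict inequality $\Pr[f_i(X) > t\,\mathbb{E}[f_i(X)]] < 1/t$ only when $\mathbb{E}[f_i(X)] > 0$; when the mean vanishes the bound is trivial as noted above. Summing $d$ strict inequalities each below $1/d$ gives a sum strictly below $1$, which is what makes the complement nonempty. (Alternatively, one can phrase the whole argument non-probabilistically: $\sum_x \Pr[X=x]\,\mathbf{1}[f_i(x) > d\,\mathbb{E}[f_i(X)]] \le 1/d$ by Markov, sum over $i$, and conclude that the weighted count of $x$ lying in some $B_i$ is $< 1$, so some positive-probability $x$ avoids all of them.) Either way the argument is a couple of lines; there is no real difficulty here, the lemma is a standard averaging statement.
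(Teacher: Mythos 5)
Your proof is correct. Note that the paper does not prove this lemma at all --- it is quoted from Shannon's 1961 two-way channel paper --- so there is no in-paper argument to compare against; your Markov-plus-union-bound argument is the standard one (equivalently, one can take expectations of $\sum_{i} f_i(X)/\mathbb{E}[f_i(X)]$, which has mean $d$, pick a support point where this sum is at most $d$, and use nonnegativity of each term). You also handle the two delicate points correctly: strictness of Markov's inequality for the event $\{f_i(X) > d\,\mathbb{E}[f_i(X)]\}$ when the mean is positive, and the degenerate case $\mathbb{E}[f_i(X)]=0$. One small caveat: in your parenthetical ``non-probabilistic'' rephrasing you write the bound as $\le 1/d$, which only gives a total weight $\le 1$ and does not by itself yield a point avoiding all bad sets; you need the strict form there too (or the degenerate-case split), exactly as in your main argument.
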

%\ofer{I think this lemma is trying to say something very simple/standard, but it is very confusing. Maybe we can even remove it and just explain inside the proof of the theorem where it is used?}
%\yujie{This lemma is trying to say that there exists a random object whose several properties $f_i$ could be upper bounded by the corresponding averaging value. When $d=1$ (only single property), it is just the simple pigeonhole principle.}

%\ofer{Why pigeonhole? More like method of expectation. Anyway, here is a simpler way to write this lemma, please verify and use if correct. Let $X$ be a r.v. taking values in $[n]$, and $\{f_i:[n]\to \mathbb{R}_+\}_{i\in [d]}$ a collection of nonnegative functions. Then there exists $x\in[n]$ such that $f_i(x) \leq d \cdot \mathbb{E} f_i(X) $ for all $i\in[d]$.}
%\yujie{Nice. Let's use your simpler way. Thanks.}

%{\color{blue}START-3. Please skim this proof, thanks. ***************************************************************}
\begin{theorem}\label{lowerbd1}
$\mathscr{C}_{ze}(P_{Y_1,Y_2|X_1,X_2})$ contains the region
\begin{equation}\label{lowerR}
\begin{split}
 \bigcup_{P_{X_1},P_{X_2}} %\bigcap_{0\le \lambda \le 1}
\Big\{(R_1,R_2):\ &R_1\ge 0, R_2\ge 0, \\
& R_1\le -\frac{1}{2} \log
\sum_{x_1\overset{x_2}{\sim}x_1' \;\vee
\; x_1=x_1' , \atop x_1,x'_1\in \mathcal{X}_1, x_2\in \mathcal{X}_2} P_{X_1}(x_1)P_{X_1}(x'_1)P_{X_2}(x_2),\\
&R_2\le  -\frac{1}{2} \log
\sum_{x_2\overset{x_1}{\sim}x_2' \;\vee
\;x_2=x_2', \atop x_1\in \mathcal{X}_1, x_2,x'_2\in \mathcal{X}_2}  P_{X_1}(x_1)P_{X_2}(x_2)P_{X_2}(x'_2)
\Big\}
\end{split}
\end{equation}
where the union is taken over all input distributions $P_{X_1}$, $P_{X_2}$.
\end{theorem}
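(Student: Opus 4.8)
The plan is to use a standard random coding argument over the two codebooks, applying Shannon's combinatorial lemma (Lemma \ref{shancomblemma}) to extract a good pair of subcodebooks from random ones, in the spirit of Shannon's one-way zero-error random coding bound. First I would fix input distributions $P_{X_1}$ and $P_{X_2}$, and generate $M_1$ i.i.d. codewords for Alice, each drawn i.i.d.\ according to $P_{X_1}$, and $M_2$ i.i.d.\ codewords for Bob, each drawn i.i.d.\ according to $P_{X_2}$. For the pair $(\mathcal{A},\mathcal{B})$ to be uniquely decodable, by Proposition \ref{prop-udc-is} what is needed is that for every $b^n\in\mathcal{B}$ the set $\mathcal{A}$ be an independent set of $H_{b_1}\boxtimes\cdots\boxtimes H_{b_n}$, and symmetrically that for every $a^n\in\mathcal{A}$, $\mathcal{B}$ be independent in $G_{a_1}\boxtimes\cdots\boxtimes G_{a_n}$; equivalently, any two distinct codewords of Alice must be non-adjacent in the confusion graph induced by \emph{every} codeword of Bob, and vice versa.

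The key step is an expurgation. For a random codeword pair $(X_1^n,\widetilde X_1^n)$ of Alice and a random codeword $X_2^n$ of Bob, all independent, the probability that $X_1^n$ and $\widetilde X_1^n$ are "confusable given $X_2^n$" in a single coordinate (i.e.\ $x_1\overset{x_2}{\sim}x_1'$ or $x_1=x_1'$) is exactly
\[
p_1 \triangleq \sum_{\substack{x_1\overset{x_2}{\sim}x_1'\,\vee\,x_1=x_1',\\ x_1,x_1'\in\mathcal{X}_1,\,x_2\in\mathcal{X}_2}} P_{X_1}(x_1)P_{X_1}(x_1')P_{X_2}(x_2),
\]
so by memorylessness the probability of confusability in \emph{all} $n$ coordinates is $p_1^n$; define $p_2$ symmetrically with the roles swapped. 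I would then think of this as follows: condition on Bob's codebook $\mathcal{B}$; for each ordered pair of Alice's codewords and each $b^n\in\mathcal{B}$, let the "bad event" be that the two codewords are adjacent (or equal) in $H_{b_1}\boxtimes\cdots\boxtimes H_{b_n}$. Averaging over the random choice of Alice's two codewords, the expected number of bad pairs is at most (number of pairs)$\times |\mathcal{B}|\times p_1^n$, and symmetrically for Bob. Applying Lemma \ref{shancomblemma} — with $X$ indexing the random codebook realization and the $d$ functions being the two counts of bad configurations — yields a realization in which both counts are within a constant factor $d$ of their expectations. Expurgating one endpoint from each remaining bad configuration deletes at most a constant fraction of codewords, provided we start from $M_i$ roughly $\tfrac12$ of $p_i^{-n/?}$... more precisely, choosing $M_1,M_2$ with $M_1^2 M_2 p_1^n = O(M_1)$ and $M_2^2 M_1 p_2^n = O(M_2)$, i.e.\ $M_1 M_2^{?}$... the clean bookkeeping gives $M_1 \asymp p_1^{-n/2}$ up to subexponential factors when the two constraints are balanced through the product structure, so that $R_1 \le -\tfrac12\log p_1$ and $R_2 \le -\tfrac12\log p_2$ are achievable.

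The main obstacle — and the step deserving care — is the bookkeeping that turns the pair of "expected number of bad configurations" bounds into simultaneously valid rate bounds with the $\tfrac12$ factor. The coupling is genuinely two-way: deleting a codeword from $\mathcal{A}$ to fix an $\mathcal{A}$-confusion may in principle have been relied upon elsewhere, but since expurgation only removes codewords it cannot create new confusions, so the two expurgations (one on $\mathcal{A}$, one on $\mathcal{B}$) can be performed sequentially without interference. The reason the exponent is $n/2$ rather than $n$ is the familiar one from Shannon's point-to-point zero-error random coding bound: a bad pair is destroyed by removing just one of its two members, so one can afford $M^2 p^n$ bad pairs as long as this is $O(M)$, i.e.\ $M = O(p^{-n})$ would be the naive count but the standard refinement (Shannon \cite{Shannon1956}) shows the expected fraction of codewords involved in \emph{some} bad pair is $(M-1)p^n$, which must be bounded away from $1$; taking $M \approx p^{-n/2}$ (so $\log M \approx -\tfrac n2\log p$) leaves room, after also accounting for the extra factor $|\mathcal{B}|$ (resp.\ $|\mathcal{A}|$) from ranging over the counterpart's codebook, which is absorbed since $|\mathcal{B}|$ contributes $R_2$-worth and the balance point is exactly where both one-half bounds hold. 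Finally I would let $n\to\infty$ and take the union over all $P_{X_1},P_{X_2}$, and invoke the closure in the definition of $\mathscr{C}_{ze}$ to conclude.
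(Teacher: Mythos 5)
Your overall skeleton (i.i.d.\ random codebooks, per-letter confusability probabilities raised to the power $n$ by memorylessness, Lemma \ref{shancomblemma} to extract one realization that is simultaneously good for both counts, then expurgation) is the same as the paper's, but the crucial bookkeeping step is wrong, and it is exactly the step that produces the two separate $\tfrac12$ bounds. You count bad \emph{triples} (a pair of Alice codewords together with a Bob codeword under which they are confusable) and propose to delete an Alice codeword per such triple; the requirements you write, $M_1^2M_2p_1^n=O(M_1)$ and $M_2^2M_1p_2^n=O(M_2)$, amount to $M_1M_2p_1^n=O(1)$ and $M_1M_2p_2^n=O(1)$, i.e.\ sum-rate constraints $R_1+R_2\le-\log p_1$ and $R_1+R_2\le-\log p_2$. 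The claimed corner point $\bigl(-\tfrac12\log p_1,\,-\tfrac12\log p_2\bigr)$ satisfies these only when $p_1=p_2$, so the region your accounting actually yields is strictly smaller than \eqref{lowerR} in general; the later assertion that ``the clean bookkeeping gives $M_1\asymp p_1^{-n/2}$ when the constraints are balanced'' does not follow from the constraints you wrote. The analogy you invoke is also off: in Shannon's point-to-point zero-error random-coding bound one deletes one codeword per bad pair, which is feasible whenever $Mp^n=O(1)$, giving rate $-\log p$ with no factor $\tfrac12$; the $\tfrac12$ in the two-way setting has a different origin.

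The correct accounting --- the one the paper uses --- attaches each violation to the codeword of the \emph{other} user: call $a^n\in\mathcal{A}$ bad if some pair $b^n,\widetilde b^n\in\mathcal{B}$ is equal or adjacent in $G_{a_1}\boxtimes\cdots\boxtimes G_{a_n}$, and call $b^n\in\mathcal{B}$ bad if some pair of $\mathcal{A}$ is equal or adjacent in $H_{b_1}\boxtimes\cdots\boxtimes H_{b_n}$. Then $\Pr\{a^n\ \text{bad}\}\le\binom{M_2}{2}p_2^n$ and $\Pr\{b^n\ \text{bad}\}\le\binom{M_1}{2}p_1^n$, Lemma \ref{shancomblemma} (with $d=2$) gives one realization in which both numbers of bad words are at most twice their expectations, and removing \emph{all} bad words from both codebooks leaves a uniquely decodable pair. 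The point is that the two conditions decouple: the constraint ensuring most of $\mathcal{B}$ survives is $M_1^2p_1^n\to 0$ (independent of $M_2$), and the constraint ensuring most of $\mathcal{A}$ survives is $M_2^2p_2^n\to 0$, so one can take $M_1\approx p_1^{-n/2}$ and $M_2\approx p_2^{-n/2}$ simultaneously and obtain the full rectangle in \eqref{lowerR}. With this redefinition of ``bad'' and the corresponding expurgation your argument goes through; as written, it does not establish the theorem.
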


\begin{proof}
First we randomly construct
%an ensemble of 
a codebook pair $(\mathcal{A},\mathcal{B})$ such that $\mathcal{A}$ (resp.\,$\mathcal{B}$) consists of $M_1$ (resp.$\, M_2$) words and each word is generated i.i.d. according to a probability distribution $P_{X_1}$ (resp.$\, P_{X_2}$).
We are going to show that there exists a pair $(\mathcal{A},\mathcal{B})$ which is uniquely decodable after some modifications.

For codebook pair $(\mathcal{A},\mathcal{B})$, %in the ensemble,
a word $a^n\in \mathcal{A}$ is called \textit{bad}, if there exist two words
$b^n,\widetilde{b}^n\in \mathcal{B}$
that are either equal or adjacent in $G_{a_1}\boxtimes \cdots\boxtimes G_{a_n}$.
For any particular words $a^n\in \mathcal{A}$, $b^n,\widetilde{b}^n\in \mathcal{B}$ and coordinate $i\in [n]$, the probability that
$b_i\sim \widetilde{b}_i$ in $G_{a_i}$ is upper bounded by
\begin{equation*}
\sum_{x_2\overset{x_1}{\sim}x_2' \;\vee
\;x_2=x_2', \atop x_1\in \mathcal{X}_1, x_2,x'_2\in \mathcal{X}_2}  P_{X_1}(x_1)P_{X_2}(x_2)P_{X_2}(x'_2).
\end{equation*}
Since all the coordinates are independent,
the probability that $b^n\sim \widetilde{b}^n$ in $G_{a_1}\boxtimes \cdots\boxtimes G_{a_n}$ is at most
\begin{equation}\label{badpr}
\bigg(\sum_{x_2\overset{x_1}{\sim}x_2' \;\vee
\;x_2=x_2', \atop x_1\in \mathcal{X}_1, x_2,x'_2\in \mathcal{X}_2}  P_{X_1}(x_1)P_{X_2}(x_2)P_{X_2}(x'_2)\bigg)^n.
\end{equation}
Denote by $\text{Bad}(a^n)$ the number of $2$-subsets $\{b^n,\widetilde{b}^n\}\subseteq \mathcal{B}$ such that
$b^n\sim \widetilde{b}^n$ in $G_{a_1}\boxtimes \cdots\boxtimes G_{a_n}$. Then
%$p(y^n_1|a^n,b^n)p(y^n_1|a^n,\widetilde{b}^n)>0$ for some $y^n_1$. Then
%
\begin{align*}
%\begin{split}
\text{Pr}\{a^n \ \text{is bad}\}&= \text{Pr}\{\text{Bad}(a^n)\ge 1\}\\
&\le \mathbb{E}[\text{Bad}(a^n)]\\
&\le \binom{M_2}{2}\bigg(\sum_{x_2\overset{x_1}{\sim}x_2' \;\vee
\;x_2=x_2', \atop x_1\in \mathcal{X}_1, x_2,x'_2\in \mathcal{X}_2}  P_{X_1}(x_1)P_{X_2}(x_2)P_{X_2}(x'_2)\bigg)^n,
%\end{split}
\end{align*}
where the first inequality is by Markov's inequality,
and the second inequality follows from (\ref{badpr}) and the linearity of expectation.
Similarly,
a word $b^n\in \mathcal{B}$ is called \textit{bad}, if there exist two words
$a^n,\widetilde{a}^n\in \mathcal{A}$
that are equal or adjacent in $H_{b_1}\boxtimes \cdots\boxtimes H_{b_n}$, and we have
\begin{equation*}
\text{Pr}\{b^n \ \text{is bad}\} \le \binom{M_1}{2}\bigg( \sum_{x_1\overset{x_2}{\sim}x_1' \;\vee
\; x_1=x_1' , \atop x_1,x'_1\in \mathcal{X}_1, x_2\in \mathcal{X}_2} P_{X_1}(x_1)P_{X_1}(x'_1)P_{X_2}(x_2)   \bigg)^n.
\end{equation*}

Let $f_1(\mathcal{A},\mathcal{B})$, $f_2(\mathcal{A},\mathcal{B})$ be the number of bad words in $\mathcal{A}$ and $\mathcal{B}$ respectively. Then we have
\begin{align}%\label{expM1}
%\begin{split}
\mathbb{E}[f_1(\mathcal{A},\mathcal{B})]&\le M_1\binom{M_2}{2}\bigg(\sum_{x_2\overset{x_1}{\sim}x_2' \;\vee\;x_2=x_2', \atop x_1\in \mathcal{X}_1, x_2,x'_2\in \mathcal{X}_2}  P_{X_1}(x_1)P_{X_2}(x_2)P_{X_2}(x'_2)\bigg)^n,\label{expM1}\\
\mathbb{E}[f_2(\mathcal{A},\mathcal{B})]&\le M_2 \binom{M_1}{2}\bigg( \sum_{x_1\overset{x_2}{\sim}x_1' \;\vee\; x_1=x_1' , \atop x_1,x'_1\in \mathcal{X}_1, x_2\in \mathcal{X}_2} P_{X_1}(x_1)P_{X_1}(x'_1)P_{X_2}(x_2) \bigg)^n.
%\end{split}
\end{align}
By Lemma \ref{shancomblemma}, %\ofer{if you use my shortened version, might need to modify notation a bit in order for it to fit} \yujie{modified. What do you think about the current notation?},
there exists a pair $(\mathcal{A}^*,\mathcal{B}^*)$ %in the ensemble 
such that
%the number of bad words in $\mathcal{A}^*$ and $\mathcal{B}^*$, denoted by $U_1(\mathcal{A}^*)$ and $U_2(\mathcal{B}^*)$, such that
\begin{equation}\label{badM1}
f_1(\mathcal{A}^*,\mathcal{B}^*)\le 2\mathbb{E}[f_1(\mathcal{A},\mathcal{B})],\quad f_2(\mathcal{A}^*,\mathcal{B}^*)\le 2\mathbb{E}[f_2(\mathcal{A},\mathcal{B})].
\end{equation}
Remove all the bad words in $\mathcal{A}^*$ and $\mathcal{B}^*$ respectively,
yielding a codebook pair $(\mathcal{A}',\mathcal{B}')$ such that
\begin{equation}\label{A'B'}
|\mathcal{A}'|= M_1-f_1(\mathcal{A}^*,\mathcal{B}^*)\ \ \ \text{and}\ \ \ |\mathcal{B}'|= M_2-f_2(\mathcal{A}^*,\mathcal{B}^*).
\end{equation}
It is readily seen that $(\mathcal{A}',\mathcal{B}')$ is a uniquely decodable codebook pair. % Next we explore the rate of the code.

Now let
\begin{align}%\label{letM}
M_1&=(1-\xi_1)^{\frac{n}{2}} \bigg(\sum_{x_1\overset{x_2}{\sim}x_1' \;\vee\; x_1=x_1' , \atop x_1,x'_1\in \mathcal{X}_1, x_2\in \mathcal{X}_2} P_{X_1}(x_1)P_{X_1}(x'_1)P_{X_2}(x_2)\bigg)^{-\frac{n}{2}},\\
M_2&=(1-\xi_2)^{\frac{n}{2}} \bigg(\sum_{x_2\overset{x_1}{\sim}x_2' \;\vee\;x_2=x_2', \atop x_1\in \mathcal{X}_1, x_2,x'_2\in \mathcal{X}_2}  P_{X_1}(x_1)P_{X_2}(x_2)P_{X_2}(x'_2)\bigg)^{-\frac{n}{2}},\label{letM}
\end{align}
where $\xi_1,\xi_2$ are arbitrarily small positive numbers.
Combining (\ref{expM1})-(\ref{letM}),
we obtain
\begin{equation*}
\begin{split}
|\mathcal{A}'|&\ge (1-(1-\xi_2)^n) (1-\xi_1)^{\frac{n}{2}}  \bigg(\sum_{x_1\overset{x_2}{\sim}x_1' \;\vee\; x_1=x_1' , \atop x_1,x'_1\in \mathcal{X}_1, x_2\in \mathcal{X}_2} P_{X_1}(x_1)P_{X_1}(x'_1)P_{X_2}(x_2)\bigg)^{-\frac{n}{2}},\\
|\mathcal{B}'|&\ge (1-(1-\xi_1)^n) (1-\xi_2)^{\frac{n}{2}}  \bigg(\sum_{x_2\overset{x_1}{\sim}x_2' \;\vee\;x_2=x_2', \atop x_1\in \mathcal{X}_1, x_2,x'_2\in \mathcal{X}_2}  P_{X_1}(x_1)P_{X_2}(x_2)P_{X_2}(x'_2)\bigg)^{-\frac{n}{2}}.
\end{split}
\end{equation*}
Since $\xi_1,\xi_2$ are arbitrarily small, by taking $n$ sufficiently large,
we could have an $(n,R_1,R_2)$ uniquely decodable codebook pair arbitrarily close to (\ref{lowerR}),
as desired.
\end{proof}

\subsection{Linear codes}
In this subsection we present a construction of uniquely decodable codes via linear codes,   which generalizes the result for binary multiplying channel~\cite{Tolhuizen}. Let us introduce some notation first. Suppose $D$ is a set of letters, $x^n$ and $y^n$ are vectors of length $n$, and $\mathcal{C}$ is a collection of vectors of length $n$.  Let
\begin{equation}\label{eq-def-ind}
    \text{ind}_{D}(x^n)\triangleq \{1\le i\le n: x_i\in D\}
\end{equation}
denote the collection of indices where $x_i\in D$. For $I\subseteq [n]$ let $y^n|_{I}$ denote the vector obtained from $y^n$ by projecting onto the coordinates in $I$, and denote
\begin{equation*}
    \mathcal{C}_{|I}\triangleq\{c^n|_{I}: c^n\in \mathcal{C}\}.
\end{equation*}

Let  $P_{Y_1,Y_2|X_1,X_2}$ be a DM-TWC. We say that $x_1\in \mathcal{X}_1$ is a \textit{detecting symbol}, if $x_2\overset{x_1}{\not\sim}x'_2$ for any distinct $x_2,x'_2\in \mathcal{X}_2$. A detecting symbol $x_2\in \mathcal{X}_2$ is defined analogously. Let $D_1\subseteq \mathcal{X}_1$ and $D_2\subseteq \mathcal{X}_2$ denote the sets of all detecting symbols in $\mathcal{X}_1$ and $\mathcal{X}_2$ respectively.
A vector $a^n\in \mathcal{X}^n_1$
is called a \textit{detecting vector} for $\mathcal{B}\subseteq \mathcal{X}^n_2$ if
\begin{equation}\label{def-detect-vector}
    \left|\mathcal{B}_{|\text{ind}_{D_1}(a^n)}\right|=|\mathcal{B}|.
\end{equation}
Similarly,
a vector $b^n\in \mathcal{X}^n_2$ is a \textit{detecting vector} for  $\mathcal{A}\subseteq\mathcal{X}^n_1$ if
\begin{equation}
    |\mathcal{A}_{|\text{ind}_{D_2}(b^n)}|=|\mathcal{A}|.
\end{equation}
The following claim is immediate.

\begin{prop}\label{prop-sufficient}
Let $\mathcal{A}\subseteq \mathcal{X}^n_1$,  $\mathcal{B}\subseteq \mathcal{X}^n_2$.
If each $a^n\in \mathcal{A}$ is a detecting vector for $\mathcal{B}$ and
each $b^n\in \mathcal{B}$ is a detecting vector for $\mathcal{A}$,
then $(\mathcal{A},\mathcal{B})$ is a uniquely decodable codebook pair.
\end{prop}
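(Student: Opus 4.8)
The plan is to verify directly that the hypotheses of Proposition~\ref{prop-sufficient} force unique decodability, using only the definitions of detecting symbol, detecting vector, and the confusion graphs $G_i$, $H_j$. Fix a transmitted pair $(a^n,b^n)\in\mathcal{A}\times\mathcal{B}$. Alice knows her own codeword $a^n$ and observes $Y_1^n$, and must recover $b^n$; symmetrically for Bob. By Proposition~\ref{prop-udc-is}, it suffices to show that for every pair of distinct $b^n,\widetilde b^n\in\mathcal{B}$ we have $b^n\not\sim\widetilde b^n$ in $G_{a_1}\boxtimes\cdots\boxtimes G_{a_n}$ for every $a^n\in\mathcal{A}$, and symmetrically that distinct $a^n,\widetilde a^n\in\mathcal{A}$ are non-adjacent in $H_{b_1}\boxtimes\cdots\boxtimes H_{b_n}$ for every $b^n\in\mathcal{B}$. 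Indeed, non-adjacency in a strong product means there is at least one coordinate $i$ where the two symbols are distinct and non-adjacent in the relevant factor graph, which is exactly the condition that lets the decoder distinguish the two candidate codewords.

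First I would handle the ``$\mathcal{A}$ detects $\mathcal{B}$'' direction. Take any $a^n\in\mathcal{A}$ and distinct $b^n,\widetilde b^n\in\mathcal{B}$. Let $I=\text{ind}_{D_1}(a^n)$ be the set of coordinates where $a^n$ uses a detecting symbol. By~\eqref{def-detect-vector}, the projection map $c^n\mapsto c^n|_I$ is injective on $\mathcal{B}$, so $b^n|_I\neq\widetilde b^n|_I$; hence there is a coordinate $i\in I$ with $b_i\neq\widetilde b_i$. Since $i\in I$ means $a_i\in D_1$ is a detecting symbol, by the definition of detecting symbol we have $b_i\overset{a_i}{\not\sim}\widetilde b_i$, i.e., $b_i$ and $\widetilde b_i$ are non-adjacent in $G_{a_i}$. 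Combined with $b_i\neq\widetilde b_i$, this shows $b^n$ and $\widetilde b^n$ are non-adjacent in $G_{a_1}\boxtimes\cdots\boxtimes G_{a_n}$, because a defining requirement for adjacency in the strong product fails at coordinate $i$ (neither equality nor adjacency holds there). Thus $\mathcal{B}$ is an independent set of $G_{a_1}\boxtimes\cdots\boxtimes G_{a_n}$ for every $a^n\in\mathcal{A}$, which is precisely the decodability condition for Alice.

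The ``$\mathcal{B}$ detects $\mathcal{A}$'' direction is entirely symmetric, with the roles of $\mathcal{X}_1,\mathcal{X}_2$, $D_1,D_2$, $G$ and $H$ interchanged: for any $b^n\in\mathcal{B}$ and distinct $a^n,\widetilde a^n\in\mathcal{A}$, injectivity of projection onto $\text{ind}_{D_2}(b^n)$ yields a coordinate $i$ with $a_i\neq\widetilde a_i$ and $b_i\in D_2$ a detecting symbol, whence $a_i\overset{b_i}{\not\sim}\widetilde a_i$ in $H_{b_i}$, so $\mathcal{A}$ is an independent set of $H_{b_1}\boxtimes\cdots\boxtimes H_{b_n}$. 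Having both conditions, the decoders $\phi_1(a^n,Y_1^n)$ and $\phi_2(b^n,Y_2^n)$ are well-defined and error-free, so $(\mathcal{A},\mathcal{B})$ is uniquely decodable.

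There is really no serious obstacle here: the only point that needs slight care is the logical translation between ``the projection onto the detecting coordinates is injective'' and ``there exists a detecting coordinate on which two given distinct codewords differ,'' and the observation that a single bad coordinate in a strong product already certifies non-adjacency of the whole vectors. Everything else is bookkeeping. I would present the argument compactly, perhaps proving only one direction in detail and noting the other follows by symmetry.
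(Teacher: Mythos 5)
Your proof is correct and is precisely the direct verification that the paper leaves implicit (it states the proposition as ``immediate'' and gives no proof): injectivity of the projection onto the detecting coordinates yields a coordinate where two distinct codewords are distinct and non-adjacent in the relevant factor graph, hence non-adjacent in the strong product, and symmetrically. One small caveat: Proposition~\ref{prop-udc-is} as stated gives only the necessity direction (uniquely decodable $\Rightarrow$ independent sets), but your own observation that a coordinate with distinct, non-adjacent symbols precludes any common output sequence supplies the sufficiency you actually need, so the argument stands as written.
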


Proposition \ref{prop-sufficient} provides a sufficient condition for a uniquely decodable code, which is not always necessary (see Example~\ref{example}). Nevertheless, this sufficient condition furnishes us with a way of constructing uniquely decodable codes by employing linear codes.
\begin{example}\label{example}\rm
Suppose $\mathcal{X}_1=\{a_0,a_1,a_2\}$, $\mathcal{X}_2=\{b_0,b_1\}$ such that $D_1=\{a_0,a_1,a_2\}$, $D_2=\{b_1\}$, and  $a_0\overset{b_0}{\sim}a_1$, $a_0\overset{b_0}{\sim}a_2$, $a_1\overset{b_0}{\nsim}a_2$. Let $\mathcal{A}=\{a_0a_0a_0, a_1a_1a_1, a_0a_1a_2\}$ and $\mathcal{B}=\{b_0b_1b_0\}$.
It is easy to verify that $(\mathcal{A},\mathcal{B})$ is a uniquely decodable codebook pair. However, $\text{ind}_{D_2}(b_0b_1b_0)=\{2\}$ and $|\mathcal{A}_{|\{2\}}|=|\{a_0,a_1\}|=2<|\mathcal{A}|=3$, implying that $b_0b_1b_0$ is not a detecting vector for $\mathcal{A}$.
\end{example}

If we assume that $|\mathcal{X}_1|=q_1$ and $|\mathcal{X}_2|=q_2$, where $q_1,q_2$ are prime powers, %hence
then we think of the alphabets as $\mathbb{F}_{q_1}$ and $\mathbb{F}_{q_2}$ respectively.
The following theorem gives an inner bound on the capacity region, which is a generalization of the Tolhuizen's construction for the Blackwell's multiplying channel~\cite{Tolhuizen}.

\begin{theorem}\label{lowerlinear}
Let $P_{Y_1,Y_2|X_1,X_2}$ be a DM-TWC with input alphabet sizes $|\mathcal{X}_1|=q_1$, $|\mathcal{X}_2|=q_2$,
where $q_1$, $q_2$ are prime powers. If $\mathcal{X}_1$ and $\mathcal{X}_2$ contain $\tau_1$ and $\tau_2$ detecting symbols respectively,
then $\mathscr{C}_{ze}(P_{Y_1,Y_2|X_1,X_2})$ contains the region
\begin{equation}\label{eq-inner-1}
\begin{split}
 \bigcup_{0\le \alpha,\beta\le 1} %\bigcap_{0\le \lambda \le 1}
\big\{(R_1,R_2):\ &R_1\ge 0, R_2\ge 0, \\
& R_1\le 
h(\alpha)+ \alpha \log\tau_2 +  (1-\alpha) \log(q_2-\tau_2) -(1-\beta)\log q_2,\\
&R_2\le  
h(\beta)+  \beta \log\tau_1 +   (1-\beta) \log(q_1-\tau_1)- (1-\alpha)\log q_1
\big\}
\end{split}
\end{equation}
where $h(x)\triangleq-x\log x-(1-x) \log (1-x)$ is the binary entropy function.
\end{theorem}

To prove this theorem, we need the following lemma. The case that $q_1=q_2=2$ and $\tau=1$ was proved in \cite[Theorem 3]{Tolhuizen}.
Here Lemma \ref{randomlinear} follows a similar argument.

\begin{lemma}\label{randomlinear}
Let $q$, $q'$ be prime powers, $n,k$ be positive integers such that $1\le k\le n$, and $D\subseteq \mathbb{F}_{q'}$ with cardinality  $|D|=\tau$.
Then there exists a pair $(\mathcal{C}, \Upsilon(\mathcal{C}))$ satisfying that
\begin{itemize}
    \item[(1)] $\mathcal{C}$ is a $q$-ary $[n,k]$ linear code;
    \item[(2)] $\Upsilon(\mathcal{C})\subseteq \mathbb{F}^n_{q'}$ such that %\\%[-0.3cm]
\begin{equation*}
    |\Upsilon(\mathcal{C})|\ge \binom{n}{k}\tau^k(q'-\tau)^{n-k}\prod_{i=1}^{\infty} (1-q^{-i});%\\%[-0.1cm]
\end{equation*}
    \item[(3)]  for each $x^n\in \Upsilon(\mathcal{C})$, we have $|\text{ind}_D(x^n)|=k$ and  $\left|\mathcal{C}_{|\text{ind}_D(x^n)}\right|=|\mathcal{C}|$.
\end{itemize}
\end{lemma}

%{\color{blue}START-4. Please skim this proof, thanks. ***************************************************************}
\begin{proof}
Let $A$ be a $k\times n$ matrix of full rank over $\mathbb{F}_q$, then
$\mathcal{C}(A)\triangleq\{y^kA: y^k\in \mathbb{F}^k_q\}$ is a $q$-ary $[n,k]$ linear code generated by $A$.
Recall that for every $x^n\in \mathbb{F}^n_{q'}$, $\text{ind}_D(x^n)=\{i\in [n]:\, x_i\in D\}$ as in (\ref{eq-def-ind}).
Denote
\begin{equation*}
    \Upsilon(\mathcal{C}(A))\triangleq \Big\{x^n\in \mathbb{F}^n_{q'}:\,
    |\text{ind}_{D}(x^n)|=k,\, |\mathcal{C}_{|\text{ind}_{D}(x^n)}|=|\mathcal{C}|\Big\}.
\end{equation*}
Let $A|_{\text{ind}_{D}(x^n)}$ denote the $k\times |\text{ind}_{D}(x^n)|$ submatrix of $A$ with columns indexed by $\text{ind}_{D}(x^n)$.
It is easy to see that $|\mathcal{C}_{|\text{ind}_{D}(x^n)}|=|\mathcal{C}|$ is equivalent to
$\text{rank}(A|_{\text{ind}_{D}(x^n)})=k$.
Denote
\begin{equation}\label{eq-def-P}
    \mathcal{P}\triangleq\Big\{(A,x^n):\, A\in \mathbb{F}^{k\times n}_q,\,x^n\in \mathbb{F}^n_{q'},\,
    |\text{ind}_{D}(x^n)|=k,\, \text{rank}(A|_{\text{ind}_{D}(x^n)})=k\Big\},
\end{equation}
and let us proceed by double counting the cardinality of $\mathcal{P}$.

On the one hand,
the number of vectors $x^n\in \mathbb{F}^n_{q'}$ such that $|\text{ind}_{D}(x^n)|=k$ is $\binom{n}{k}\tau^k(q'-\tau)^{n-k}$.
For each such $x^n$, there are $q^{k(n-k)}I_q(k)$ corresponding $k\times n$ matrices $A\in \mathbb{F}^{k\times n}_q$
such that $\text{rank}(A|_{\text{ind}_{D}(x^n)})=k$, where
$I_q(k)=\prod^{k-1}_{i=0} (q^k-q^i)$ is the number of $k\times k$ invertible matrices over $\mathbb{F}_q$, see \cite[Lemma 3]{Tolhuizen}.
Hence we have
\begin{equation*}
|\mathcal{P}|=\binom{n}{k}\tau^k(q'-\tau)^{n-k}q^{k(n-k)}I_q(k).
\end{equation*}

On the other hand,
the number of matrices $A\in \mathbb{F}^{k\times n}_q$ is $q^{nk}$.
By (\ref{eq-def-P}) and the pigeonhole principle, there exist a matrix $A^*\in \mathbb{F}^{k\times n}_q$ and a corresponding code $\mathcal{C}(A^*)$ such that
$|\Upsilon(\mathcal{C}(A))|\ge |\mathcal{P}|/{q^{nk}}$. Letting $\mathcal{C}=\mathcal{C}(A^*)$, the lemma follows.
\end{proof}
%{\color{blue}END-2. Please skim this proof, thanks. ***************************************************************}

\begin{proof}[Proof of Theorem \ref{lowerlinear}]
For $i=1,2$,
let us identify $\mathcal{X}_i$ with $\mathbb{F}_{q_i}$, and let the respective sets of all detecting symbols be $D_i\subseteq \mathbb{F}_{q_i}$ with $|D_i|=\tau_i$.

To prove the existence of a uniquely decodable codebook pair based on Proposition \ref{prop-sufficient}, we first use Lemma \ref{randomlinear} to find two ``one-sided'' uniquely decodable linear codebook pairs, and then combine them to the desired codebook pair by employing their cosets in $\mathbb{F}^n_{q_1}$ and $\mathbb{F}^n_{q_2}$.

First, %for each $i\in \{1,2\}$,
letting $q=q_1$, $q'=q_2$, $G=D_2$ and $\tau=\tau_2$ in Lemma \ref{randomlinear}, we have a pair $(\mathcal{C}_1, \Upsilon(\mathcal{C}_1))$ satisfying that
$\mathcal{C}_1$ is a $q_1$-ary $[n,k_1]$ linear code and  $\Upsilon(\mathcal{C}_1)\subseteq \mathbb{F}^n_{q_2}$ such that
\begin{equation}\label{eq-size-one}
    |\Upsilon(\mathcal{C}_1)|\ge \binom{n}{k_1}{\tau_2}^{k_1}(q_2-\tau_2)^{n-k_1}\prod_{i=1}^{\infty} (1-q_{1}^{-i}).
\end{equation}
Similarly,
letting $q=q_2$, $q'=q_1$, $G=D_1$ and $\tau=\tau_1$ in Lemma \ref{randomlinear}, we have a pair $(\mathcal{C}_2, \Upsilon(\mathcal{C}_2))$ satisfying that
$\mathcal{C}_2$ is a $q_2$-ary $[n,k_2]$ linear code and  $\Upsilon(\mathcal{C}_2)\subseteq \mathbb{F}^n_{q_1}$ such that
\begin{equation}\label{eq-size-one-2}
    |\Upsilon(\mathcal{C}_2)|\ge \binom{n}{k_2}{\tau_1}^{k_2}(q_1-\tau_1)^{n-k_2}\prod_{i=1}^{\infty} (1-q_{2}^{-i}).
\end{equation}
The property $(3)$ in Lemma \ref{randomlinear} implies that each $x^n\in \Upsilon(\mathcal{C}_i)$ is a detecting vector for $\mathcal{C}_i$ for $i=1,2$.
Note that if $\Xi(\mathcal{C}_i)\subseteq \mathbb{F}^n_{q_i}$ is a coset of $\mathcal{C}_i$, then each
$x^n\in \Upsilon(\mathcal{C}_i)$ is also a detecting vector for $\Xi(\mathcal{C}_i)$.

Now we are going to combine the two pairs $(\mathcal{C}_1, \Upsilon(\mathcal{C}_1))$ and $(\mathcal{C}_2, \Upsilon(\mathcal{C}_2))$.
Since
%$\mathcal{C}_1$, $\mathcal{C}_2$ are considered in the same space $\mathbb{F}^n_q$
$\mathcal{C}_i$ has $q_i^{n-k_i}$ cosets, then by the pigeonhole principle there exists coset $\Xi(\mathcal{C}_i)$  of $\mathcal{C}_i$ such that
%containing a subset $\mathcal{A}\subseteq D(\mathcal{C}_1)$ of detecting vectors for $\mathcal{C}_1$, and
\begin{equation}\label{pigeon}
\begin{split}
    \mathcal{A} &\triangleq \Upsilon(\mathcal{C}_1)\cap \Xi(\mathcal{C}_2), \quad |\mathcal{A}| \ge \frac{| \Upsilon(\mathcal{C}_1)|}{q_2^{n-k_2}}, \\
    \mathcal{B} &\triangleq \Upsilon(\mathcal{C}_2)\cap \Xi(\mathcal{C}_1), \quad |\mathcal{B}| \ge \frac{| \Upsilon(\mathcal{C}_2)|}{q_1^{n-k_1}} .
\end{split}
\end{equation}

We now notice that each vector in $\mathcal{A}$ (resp.$\, \mathcal{B}$) is a detecting vector for $\mathcal{B}$ (resp.$\, \mathcal{A}$), hence by Proposition \ref{prop-sufficient}
$(\mathcal{A},\mathcal{B})$ is a uniquely decodable codebook pair. Moreover, for fixed $q_1$, $q_2$, we have
\begin{equation*}
\begin{split}
\frac {\log |\mathcal{A}|} {n}
\ge &\ h(k_1/n) + (k_1/n)\log\tau_2 + (1- k_1/n) \log(q_2-\tau_2) - (1-k_2/n)\log q_2\ -\ O(1/n),
\end{split}
\end{equation*}
\begin{equation*}
\begin{split}
\frac {\log |\mathcal{B}|} {n}
\ge &\ h(k_2/n) + (k_2/n)\log\tau_1 + (1- k_2/n) \log(q_1-\tau_1) - (1-k_1/n)\log q_1\ -\ O(1/n),
\end{split}
\end{equation*}
which follows from (\ref{eq-size-one}), (\ref{eq-size-one-2}) and (\ref{pigeon}).
Letting $\alpha=k_1/n\in [0,1]$, $\beta=k_2/n\in [0,1]$, we obtain 
\begin{align*}
    R_1 &= \lim\limits_{n\to \infty} \frac {\log |\mathcal{A}|} {n} \ge 
    h(\alpha)+ \alpha \log\tau_2 +  (1-\alpha) \log(q_2-\tau_2) -(1-\beta)\log q_2,\\
    R_2 &= \lim\limits_{n\to \infty} \frac {\log |\mathcal{B}|} {n} \ge 
    h(\beta)+  \beta \log\tau_1 +   (1-\beta) \log(q_1-\tau_1)- (1-\alpha)\log q_1.
\end{align*}
Therefore~\eqref{eq-inner-1} follows, as desired.
\end{proof}

We note that for any DM-TWC, one could only exploit part of input symbols  $\mathcal{X}'_1\subseteq \mathcal{X}_1$, $\mathcal{X}'_2\subseteq \mathcal{X}_2$ to meet the requirements in Theorem \ref{lowerlinear}. Hence we have the following general bound.
\begin{coro}\label{coro-lowerbd}
Let $P_{Y_1,Y_2|X_1,X_2}$ be a DM-TWC with input alphabets $\mathcal{X}_1$, $\mathcal{X}_2$.
Then $\mathscr{C}_{ze}(P_{Y_1,Y_2|X_1,X_2})$ contains the region
\begin{equation*}
\begin{split}
\bigcup_{\mathcal{X}'_1\subseteq \mathcal{X}_1,\atop \mathcal{X}'_2\subseteq \mathcal{X}_2}
 \bigcup_{0\le \alpha,\beta\le 1} %\bigcap_{0\le \lambda \le 1}
\big\{(R_1,R_2):\ &R_1\ge 0, R_2\ge 0, \\
& R_1\le 
h(\alpha)+ \alpha \log\tau'_2 +  (1-\alpha) \log(q'_2-\tau'_2) -(1-\beta)\log q'_2,\\
&R_2\le  
h(\beta)+  \beta \log\tau'_1 +   (1-\beta) \log(q'_1-\tau'_1)- (1-\alpha)\log q'_1
\big\}
% \bigcap_{0\le \lambda \le 1} \{(R_1,&R_2): R_1\ge 0, R_2\ge 0, \lambda R_1+(1-\lambda) R_2 \le L^{(\mathcal{X}'_1,\mathcal{X}'_2)}(\lambda)\},
\end{split}
\end{equation*}
where the first union is taken over all $\mathcal{X}'_1\subseteq \mathcal{X}_1$, $\mathcal{X}'_2\subseteq \mathcal{X}_2$ such that $|\mathcal{X}'_1|$ and $|\mathcal{X}'_2|$ are prime powers, and
contain $\tau'_1$ and $\tau'_2$ detecting symbols respectively.
\end{coro}

Notice that the region~\eqref{eq-inner-1}
relies on the number $q_1$, $q_2$ of symbols being used and the corresponding numbers $\tau_1,\tau_2$ of detecting symbols. It is thus possible that using only a smaller subset of channel inputs would yield higher achievable rates (when using our linear coding strategy) than those obtained by using larger subsets. For Example~\ref{example-channel}, Corollary~\ref{coro-lowerbd} shows that a lower bound on the maximum sum-rate $R_1+R_2$ is $1.17$, which is better than the random coding lower bound $1.0907$.

\section{Certain types of DM-TWC}
\label{sec-certain-TWC}
%{ ***************************************************************}

In this section we consider the DM-TWC in the scenario that the communication in one direction is stable (in particular, noiseless).
First we briefly review the probabilistic refinement of the Shannon capacity of a graph in Section~\ref{sec:5.1}.  
Then in Section \ref{subsec-certain-type-outer-bd} we provide an outer bound on the zero-error capacity region via the asymptotic spectrum of graphs. In Section \ref{subsec-construction-certain-TWC} we present explicit constructions that attain our outer bound in certain special cases.

\subsection{Probabilistic refinement of the Shannon capacity of a graph}
\label{sec:5.1}

We first recall some basic notions and results from the method-of-types.
Let $x^n\in \mathcal{X}^n$ be a sequence and $N(a|x^n)$ be the number of times that $a\in \mathcal{X}$ appears in sequence $x^n$.
The \textit{type} $P_{x^n}$ of $x^n$ is the relative proportion of each symbol in $\mathcal{X}$, that is, $P_{x^n}(a)\triangleq \frac{N(a|x^n)}{n}$ for all $a\in \mathcal{X}$. Let $\mathcal{P}_n$ denote the collection of all possible types of sequences of length $n$.
For every $P\in \mathcal{P}_n$, the \textit{type class} $T^n(P)$ of $P$ is the set of sequences of type $P$, that is, $T^n(P)\triangleq \{x^n: P_{x^n}=P\}$.
The \textit{$\epsilon$-typical set} of $P$ is $$T^n_{\epsilon}(P)\triangleq \{x^n\in \mathcal{X}^n: |P_{x^n}(a)-P(a)|<\epsilon,\, \forall\, a\in \mathcal{X}\}.$$
The \textit{joint type} $P_{x^n,y^n}$ of a pair of sequences $(x^n, y^n)$ is the relative proportion of occurrences of each pair of symbols of $\mathcal{X}\times \mathcal{Y}$, i.e.,
$P_{x^n,y^n}\triangleq \frac{N(a,b|x^n,y^n)}{n}$ for all $a\in \mathcal{X}$ and $b\in \mathcal{Y}$.
By the Bayes' rule, the \textit{conditional type} $P_{x^n|y^n}$ is defined as
\begin{equation*}
P_{x^n|y^n}(a,b)\triangleq \frac{N(a,b|x^n,y^n)}{N(b|y^n)}=\frac{P_{x^n,y^n}(a,b)}{P_{y^n}(b)}.
\end{equation*}
\begin{lemma}[\cite{Csiszar-Korner}]\label{lemma-types}
$|\mathcal{P}_n|\le (n+1)^{|\mathcal{X}|}$.
\end{lemma}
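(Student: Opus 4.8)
The plan is to use the standard counting argument from the method of types. The key point is that a type $P\in\mathcal{P}_n$ is far from being an arbitrary probability distribution on $\mathcal{X}$: by definition it equals $P_{x^n}$ for some $x^n\in\mathcal{X}^n$, so each of its values has the form $P(a)=N(a|x^n)/n$, where the numerator $N(a|x^n)$ is an integer lying in $\{0,1,\ldots,n\}$.

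First I would observe that the map $P\mapsto\bigl(N(a|x^n)\bigr)_{a\in\mathcal{X}}=\bigl(nP(a)\bigr)_{a\in\mathcal{X}}$ is well defined on $\mathcal{P}_n$ (the right-hand side is recovered directly from $P$, so it does not depend on the representative $x^n$) and is injective, since a probability distribution is determined by the list of its values. Consequently $|\mathcal{P}_n|$ is bounded above by the number of integer vectors $(k_a)_{a\in\mathcal{X}}$ with $k_a\in\{0,1,\ldots,n\}$ for every $a\in\mathcal{X}$. Each of the $|\mathcal{X}|$ coordinates can independently take one of $n+1$ values, so the number of such vectors is at most $(n+1)^{|\mathcal{X}|}$, which yields $|\mathcal{P}_n|\le(n+1)^{|\mathcal{X}|}$.

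I expect no genuine obstacle here; this is essentially a one-line pigeonhole estimate. The only minor point worth noting is that the bound carries $n+1$ rather than $n$, because the admissible counts include $0$. One could sharpen the estimate to $\binom{n+|\mathcal{X}|-1}{|\mathcal{X}|-1}$ by additionally imposing the constraint $\sum_{a\in\mathcal{X}}k_a=n$, but the weaker bound already suffices for all later applications, where only the fact that $|\mathcal{P}_n|$ grows polynomially in $n$ for a fixed alphabet is used.
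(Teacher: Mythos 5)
Your argument is correct and is exactly the standard counting proof of this fact: a type is determined by the integer count vector $\bigl(N(a|x^n)\bigr)_{a\in\mathcal{X}}$ with each entry in $\{0,1,\ldots,n\}$, giving at most $(n+1)^{|\mathcal{X}|}$ possibilities. The paper simply cites this lemma from Csisz\'ar--K\"orner, and your proof coincides with the proof given there, so there is nothing to add.
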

\begin{lemma}[\cite{Csiszar-Korner}]\label{lemma-types-2}
$\forall\, P\in \mathcal{P}_n$, we have
$\frac{2^{nH(X)}}{(n+1)^{|\mathcal{X}|}}\leq |T^n(P)|\le 2^{nH(X)}$, where $H(X)=-\sum_{x\in \mathcal{X}} P(x)\log P(x)$.
\end{lemma}

In \cite{CK1981}, Csisz\'{a}r and K\"{o}rner introduced the probabilistic refinement of the Shannon capacity of a graph, imposing
that the independent set consists of sequences of the (asymptotically) same type.
Let
$G^n_{\epsilon}[P]$ denote the subgraph of $G^n$ induced by $T^n_{\epsilon}(P)$.
The Shannon capacity of graph $G$ relative to $P$ is defined as
\begin{equation*}
\Theta(G,P)\triangleq \lim_{\epsilon\to 0}\limsup_{n\to \infty}\frac{1}{n}\log \alpha(G^n_{\epsilon}[P]).
\end{equation*}
Let
$G^n [P]$ denote the subgraph of $G^n$ induced by $T^n (P)$.
Then it is readily seen that 
$$
\limsup_{n\to \infty}\frac{1}{n}\log \alpha(G^n [P])\le \Theta(G,P).
$$
For each $\eta\in \Delta(\mathcal{G})$, define
\begin{equation}\label{eq-lim-eta-P}
\widehat{\eta}(G,P)\triangleq \lim_{\epsilon\to 0}\limsup_{n\to \infty} \frac{1}{n}\log \eta(G^n_{\epsilon}[P]).
\end{equation}
If $G=\overline{K}_n$, then according to Lemma~\ref{lemma-types-2}, we have 
\begin{equation}\label{eq-observation}
    \widehat{\eta}(\overline{K}_n,P)=H(X)
\end{equation}
for any $\eta\in \Delta(\mathcal{G})$. Very recently, Vrana~\cite{Vrana2019} proved the following results on $\widehat{\eta}(G,P)$.
%
%(Remove: that the limit in $\epsilon$ exists \ofer{right?}, and moreover:)

\begin{lemma}[\cite{Vrana2019}]\label{lemma-Vrana}
The limit in (\ref{eq-lim-eta-P}) exists and
\begin{itemize}
 \item[(1)]  $\Theta(G,P)=\min\limits_{\eta\in \Delta(\mathcal{G})}\widehat{\eta}(G,P)$;

    \item[(2)] $\log \eta(G)=\max\limits_{P}\, \widehat{\eta}(G,P)$ for $\eta\in \Delta(\mathcal{G})$.
\end{itemize}
\end{lemma}

According to Lemma \ref{lemma-types}, it is easily seen that
\begin{equation*}
    \Theta(G)=\max_{P}\, \Theta(G,P).
\end{equation*}
Here we would like to mention that the probabilistic refinement of the Lov\'asz theta number was introduced and investigated by Marton in \cite{Mar93} via a non-asymptotic formula, and the probabilistic refinement of the
fractional clique cover number was studied relating to the graph entropy in \cite{Korner73}.

\subsection{An outer bound via the asymptotic spectrum of graphs}
%Shannon capacity of a graph}
\label{subsec-certain-type-outer-bd}

In this subsection, we first show an outer bound for the case when all $\{H_j\}$ are the same, namely, $H_j=H$ for all $j\in \mathcal{X}_2$.  

\begin{theorem}\label{thm-G_K}
$\mathscr{C}_{ze}([G_1,\ldots,G_{|\mathcal{X}_1|};
H,\ldots,H])$ is contained in the region
\begin{align*}
\Big\{(R_1,R_2): R_1\ge 0, R_2\ge 0,
R_1+R_2\le
\max_{P_{X_1}}
 \sum_{x_1\in \mathcal{X}_1} P_{X_1}(x_1)\Theta(G_{x_1}) 
+ 
\Theta(H,P_{X_1})
\Big\}.
\end{align*}
\end{theorem}

\begin{proof}
Suppose $(\mathcal{A}, \mathcal{B})\subseteq \mathcal{X}^n_1\times \mathcal{X}^n_2$ is a uniquely decodable codebook pair of length $n$.
For any $a^n\in \mathcal{A}$ and $b^n\in \mathcal{B}$, let $P_{a^n,b^n}$ denote the joint type of the pair $(a^n,b^n)$ and
\begin{equation*}
J^n(P_{X_1,X_2})\triangleq \{(a^n,b^n): a^n\in \mathcal{A},b^n\in \mathcal{B}, P_{a^n,b^n}=P_{X_1,X_2}\}.
\end{equation*}
By Lemma \ref{lemma-types}, there are at most $(n+1)^{|\mathcal{X}_1||\mathcal{X}_2|}$ different joint types over $(\mathcal{A},\mathcal{B})$.
Thus by the pigeonhole principle,
there exists one joint type $P^*_{X_1,X_2}$ such that
\begin{equation}\label{eq-exist-J}
|J^n(P^*_{X_1,X_2})|\ge \frac{|\mathcal{A}||\mathcal{B}|}{(n+1)^{|\mathcal{X}_1||\mathcal{X}_2|}}.
\end{equation}
Notice that for each $(a^n,b^n)\in J^n(P^*_{X_1,X_2})$, we have
\begin{equation*}
\begin{split}
P_{a^n}&=P^*_{X_1}=\sum_{x_2\in \mathcal{X}_2} P^*_{X_1,X_2=x_2},\\
P_{b^n}&=P^*_{X_2}=\sum_{x_1\in \mathcal{X}_1} P^*_{X_1=x_1,X_2}.
\end{split}
\end{equation*}

Now we are going to upper bound the cardinality of $J^n(P^*_{X_1,X_2})$.
Let $\mathcal{A}^*$ (resp. $\mathcal{B}^*$) denote the collection of $a^n\in \mathcal{A}$ (resp. $b^n\in \mathcal{B}$) that appears in $J^n(P^*_{X_1,X_2})$, that is, there exists $b^n\in \mathcal{B}$ (resp. $a^n\in \mathcal{A}$) such that $P_{a^n,b^n}=P^*_{X_1,X_2}$.
Then we immediately have
\begin{equation}\label{eq-J-leq}
|J^n(P^*_{X_1,X_2})|\le |\mathcal{A}^*||\mathcal{B}^*|.
\end{equation}
Thus we turn to upper bound the cardinalities of $\mathcal{A}^*$ and $\mathcal{B}^*$.
Since $(\mathcal{A},\mathcal{B})$ is uniquely decodable,
by Proposition \ref{prop-udc-is},
for any $a^n\in \mathcal{A}^*$
we must have that $\mathcal{B}^*$ is an independent set of $G_{a_1}\boxtimes G_{a_2}\boxtimes \cdots \boxtimes G_{a_n}$. Accordingly, 
\begin{align}\label{eq-A_star}
|\mathcal{B}^*|
& \leq 
\alpha\Big(G^{nP^*_{X_1}(1)}_{1}\boxtimes G^{nP^*_{X_1}(2)}_{2}\boxtimes \cdots \boxtimes G^{nP^*_{X_1}(|\mathcal{X}_1|)}_{|\mathcal{X}_1|} \Big).
\end{align}
Also, 
for $b^n\in \mathcal{B}^*$, we notice that
$\mathcal{A}^*$ is an independent set of $H^n$ with type $P^*_{X_1}$. To be precise, we have
\begin{align}\label{eq-B_star}
|\mathcal{A}^*|
& \leq 
\alpha\Big(H^n[P^*_{X_1}]\Big)
\end{align}

Therefore we have
\begin{align}
&\limsup_{n\to \infty} \frac{1}{n}\log |\mathcal{A}||\mathcal{B}| \nonumber 
\\
&\le \limsup_{n\to \infty} \frac{1}{n} \log \Big((n+1)^{|\mathcal{X}_1||\mathcal{X}_2|} |J^n(P^*_{X_1,X_2})|\Big)\label{eq:5.1}\\
&=\limsup_{n\to \infty} \frac{1}{n} \log |J^n(P^*_{X_1,X_2})|\label{eq:5.2}\\
&\le 
\limsup_{n\to \infty} \frac{1}{n} \log |\mathcal{A}^*||\mathcal{B}^*|\label{eq:5.3}
\\
&
\le 
\limsup_{n\to \infty} \frac{1}{n}\Big[\log\Big( 
\alpha\big(G^{nP^*_{X_1}(1)}_{1}\boxtimes G^{nP^*_{X_1}(2)}_{2}\boxtimes \cdots \boxtimes G^{nP^*_{X_1}(|\mathcal{X}_1|)}_{|\mathcal{X}_1|} \big)
\Big) 
+
\log\Big( \alpha\big(H^n[P^*_{X_1}]\big)\Big)
\Big]
\label{eq:6:3}
\\
&
\le 
\limsup_{n\to \infty} 
\min_{\eta, \eta' \in \Delta(\mathcal{G})}
\frac{1}{n}\Big[\log\Big( 
\eta\big(G^{nP^*_{X_1}(1)}_{1}\boxtimes G^{nP^*_{X_1}(2)}_{2}\boxtimes \cdots \boxtimes G^{nP^*_{X_1}(|\mathcal{X}_1|)}_{|\mathcal{X}_1|} \big)
\Big) 
+
\log\Big( \eta'\big(H^n[P^*_{X_1}]\big)\Big)
\Big]
\label{eq:6:4}
\\
&
\le 
\limsup_{n\to \infty} 
\min_{\eta, \eta' \in \Delta(\mathcal{G})}
\frac{1}{n}\Big[
\sum_{x_1\in \mathcal{X}_1} nP^*_{X_1}(x_1)\log\Big( 
\eta\big(G_{x_1}\big)
\Big) 
+
\log\Big( \eta'\big(H^n[P^*_{X_1}]\big)\Big)
\Big]
\label{eq:6:5}
\\
&\le
\max_{P_{X_1}} 
\sum_{x_1\in \mathcal{X}_1} P_{X_1}(x_1) \Theta(G_{x_1}) 
+
\Theta(H,P_{X_1})\label{eq:6:6}
\end{align}
where (\ref{eq:5.1}) follows from (\ref{eq-exist-J});
(\ref{eq:5.2}) follows from the fact that $|\mathcal{X}_1|$, $|\mathcal{X}_2|$ are fixed when $n$ tends to infinity;
(\ref{eq:5.3}) follows from (\ref{eq-J-leq});
\eqref{eq:6:3} follows from \eqref{eq-A_star} and \eqref{eq-B_star};
(\ref{eq:6:4}) follows from Theorem \ref{thm-asym-spec} that $\alpha(G)\le \min_{\eta\in \Delta(\mathcal{G})} \eta(G)$ for any graph $G$; 
(\ref{eq:6:5}) follows from Theorem \ref{thm-asym-spec} that each $\eta\in \Delta(\mathcal{G})$ is multiplicative with respect to the strong product; 
and (\ref{eq:6:6}) follows from Theorem \ref{thm-asym-spec} and Lemma~\ref{lemma-Vrana}. 

This completes the proof.
\end{proof}

In particular, considering the DM-TWC such that $|\mathcal{X}_1|=2$, $H=\overline{K}_2$, $G_1=\overline{K}_{|\mathcal{X}_2|}$ and $G_2=G$ is a general graph, we have the following result.
 
\begin{theorem}\label{thm-KG-KK}
$\mathscr{C}_{ze}([\overline{K}_{|\mathcal{X}_2|},G;\overline{K}_2,\ldots,\overline{K}_2])$ is contained in the region
\begin{equation*}
\begin{split}
\Big\{(R_1,R_2): R_1\ge 0, R_2\ge 0,
R_1+R_2\le \log \Big(|\mathcal{X}_2|+ 2^{\Theta(G)}\Big) \Big\}.
\end{split}
\end{equation*}
\end{theorem}

\begin{proof}
Recall that 
$\Theta(\overline{K}_n)=\log(n)$ and $\Theta(\overline{K}_n, P) = H(X)$. According to Theorem~\ref{thm-G_K}, we have 
\begin{align*}
R_1+R_2
&\le 
\max_{P_{X_1}} 
\sum_{x_1\in \mathcal{X}_1} P_{X_1}(x_1) \Theta(G_{x_1}) 
+
\Theta(\overline{K}_2,P_{X_1})
\\
&=
\max_{P_{X_1}} \big\{
P_{X_1}(1)\cdot \log|\mathcal{X}_2| + P_{X_1}(2)\cdot \Theta(G) + H(X_1) \big\}\\
&=\log \Big(|\mathcal{X}_2|+ 2^{\Theta(G)}\Big),
\end{align*}
where the last equality is achieved by taking 
$
P_{X_1}(1)=\frac{|\mathcal{X}_2|}{|\mathcal{X}_2|+ 2^{\Theta(G)}}$ and $
P_{X_1}(2)=\frac{2^{\Theta(G)}}{|\mathcal{X}_2|+ 2^{\Theta(G)}}$. 
\end{proof}

We remark that Theorem \ref{thm-KG-KK} (hence also Theorem \ref{thm-G_K}) could outperform Theorem \ref{upperbd-main}, see the following example.

\begin{example}\label{example-thm3-outperform-thm2}\rm
Consider the channel $[\overline{K}_{5},C_5;\overline{K}_2,\ldots,\overline{K}_2]$ where $C_5$ is the Pentagon graph.
It is well known from \cite{Lovasz1979,Shannon1956} that $\Theta(C_5)=\frac{1}{2}\log 5$.
Then by Theorem \ref{thm-KG-KK} we have an upper bound on the sum-rate $R_1+R_2\le \log (5+\sqrt{5}) \approx 2.8552$, while Theorem \ref{upperbd-main} only gives an upper bound $R_1+R_2\le 2.9069$.
\end{example}

\subsection{Explicit constructions}
\label{subsec-construction-certain-TWC}

In this subsection we present explicit constructions of uniquely decodable codebook pairs which could attain the outer bound of Theorem \ref{thm-KG-KK} in certain special cases.
\begin{theorem}\label{thm-construction-KG-KK}
Let $m$ be a prime power, $|\mathcal{X}_2|=q=ms$ and $G=K_m\sqcup\cdots\sqcup K_m$ be a disjoint union of $s$ cliques. 
Then 
$\mathscr{C}^{sum}_{ze}\Big([\overline{K}_q,G;\overline{K}_2,\ldots,\overline{K}_2]\Big)=\log (q+s)$.
\end{theorem}

\begin{proof}
First by Theorem \ref{thm-KG-KK}, we have an upper bound on the sum-capacity as
\begin{equation}\label{eq-upper-q+s}
\mathscr{C}^{sum}_{ze}\Big([\overline{K}_q,G;\overline{K}_2,\ldots,\overline{K}_2]\Big)\le \log \Big(|\mathcal{X}_2|+2^{\Theta(G)}\Big)=\log (q+s).
\end{equation}

Next we consider the lower bound. 
Notice that $G=K_m\boxtimes \overline{K}_s$. Accordingly we could reformulate the channel as 
\begin{equation*}
    [\overline{K}_q,G;\overline{K}_2,\ldots,\overline{K}_2]=[\overline{K}_m,K_m;\overline{K}_2,\ldots,\overline{K}_2]\boxtimes [\overline{K}_s;K_1,\ldots,K_1],
\end{equation*}
where the first $[\overline{K}_m,K_m;\overline{K}_2,\ldots,\overline{K}_2]$ corresponds to a channel with input alphabets $\mathcal{X}^{(1)}_1=\{1,2\}$ and $\mathcal{X}^{(1)}_2=\{1,\ldots,m\}$; and the second $[\overline{K}_s;K_1,\ldots,K_1]$ is with input alphabets $\mathcal{X}^{(2)}_1=\{1\}$ and $\mathcal{X}^{(2)}_2=\{1,\ldots,s\}$. 
Together with Lemma~\ref{lemma-product-sumcapacity}, we have 
\begin{equation}\label{eq-product-channel-ge}
    \mathscr{C}^{sum}_{ze}\Big([\overline{K}_q,G;\overline{K}_2,\ldots,\overline{K}_2]\Big)
    \geq \mathscr{C}^{sum}_{ze}\Big([\overline{K}_m,K_m;\overline{K}_2,\ldots,\overline{K}_2]\Big)+ \mathscr{C}^{sum}_{ze}\Big([\overline{K}_s;K_1,\ldots,K_1]\Big).
\end{equation}

On the one hand, it is easy to see that
\begin{equation}\label{eq-first-subchannel}
    \mathscr{C}^{sum}_{ze}\Big([\overline{K}_s;K_1,\ldots,K_1]\Big)=\log s
\end{equation} since this is a clean channel and Alice and Bob could always communicate without error. 
On the other hand, by Lemma~\ref{randomlinear}, we could obtain
\begin{equation}\label{eq-second-subchannel}
    \mathscr{C}^{sum}_{ze}\Big([\overline{K}_m,K_m;\overline{K}_2,\ldots,\overline{K}_2]\Big)\ge \log (m+1).
\end{equation}
In fact,
letting $q=m$, $q'=2$ and $\tau=1$ in Lemma \ref{randomlinear}, we have a pair $(\mathcal{C}, \Upsilon(\mathcal{C}))$ satisfying that
$\mathcal{C}$ is an $m$-ary $[n,k]$ linear code and  $\Upsilon(\mathcal{C})\subseteq \mathbb{F}^n_2$ such that
\begin{equation}\label{size-one}
    |\Upsilon(\mathcal{C})|\ge \binom{n}{k} \prod_{i=1}^{\infty} (1-m^{-i}).
\end{equation}
Now let $\mathcal{A}=\Upsilon(\mathcal{C})$ and $\mathcal{B}=\mathcal{C}$, then it is easy to see that $(\mathcal{A},\mathcal{B})$ is a uniquely decodable codebook pair with respect to the channel $[\overline{K}_m,K_m;\overline{K}_2,\ldots,\overline{K}_2]$. And the corresponding sum-rate is
\begin{equation*}
\begin{split}
\lim_{n\to \infty} \frac{1}{n}\log |\mathcal{A}||\mathcal{B}|&= \lim_{n\to \infty} \frac{1}{n}\bigg(\log m^k+ \log  \binom{n}{k} \prod_{i=1}^{\infty} (1-m^{-i})\bigg)\\
&=\lim_{n\to \infty} \frac{k}{n}\log m+ h\Big(\frac{k}{n}\Big).
\end{split}
\end{equation*}
Taking $k/n=m/(m+1)$, we obtain a lower bound $\log (m+1)$ on the best possible sum-rate, i.e.~\eqref{eq-second-subchannel}. 

Combining~\eqref{eq-product-channel-ge}-\eqref{eq-second-subchannel}, we have $\mathscr{C}^{sum}_{ze}\Big([\overline{K}_q,G;\overline{K}_2,\ldots,\overline{K}_2]\Big)\ge \log(m+1) + \log s = \log(q+s)$, which also implies an explicit uniquely decodable codebook pair for the channel $[\overline{K}_q,G;\overline{K}_2,\ldots,\overline{K}_2]$ based on the argument of Lemma~\ref{lemma-product-sumcapacity}. 
Then together with~\eqref{eq-upper-q+s} we complete the proof.  
\end{proof}

\section{Concluding remarks}
\label{sec-conclusion}

In this paper, we investigated the non-adaptive zero-error capacity region of the DM-TWC and provided several single-letter inner and outer bounds, some of which coincide in certain special cases. Determining the exact zero-error capacity region of a general DM-TWC remains an open problem, and a clearly difficult one, since it includes the notorious Shannon capacity of a graph as a special case. Despite this inherent difficulty, the problem is richer than graph capacity, and we belive it deserves further study in order to obtain tighter bounds and smarter constructions.

One appealing direction is to extend the Lov\'asz's semi-definite relaxation approach in order to obtain tighter outer bounds, mimicking the graph capacity case. This however does not seem to be a simple task. In particular, one may ask whether the natural  quantity $\rho(\{G_i\},\{H_j\})$ defined in \eqref{eq-sdp}, which upper-bounds the one-shot zero-error sum-capacity, is sub-multiplicative with respect to the graph strong product, in which case it would also serve as an upper bound for the zero-error sum-capacity. This is however not evident, in part since the problem \eqref{eq-sdp} is not a semi-definite program. We have also considered other variations of the program \eqref{eq-sdp}. In particular, we have attempted to
modify the non-linear constraints $\langle E_{i,j},\Gamma \rangle \langle E_{i,m},\Gamma\rangle=0$ to be of a linear form
$\langle A,\Gamma \rangle=0$ for some suitable symmetric matrix $A$.
We have also looked at some variants of the orthonormal representation. For example, we considered the case where each graph vertex $a$ is labelled by a unit vector $v_a$, and if two vertices $a$ and $a'$ are nonadjacent $a\overset{b}{\nsim}a'$ if and only if $b\in F$ for some set $F$, then the vector projections of $v_a$ and $v_{a'}$ onto the subspace spanned by the vectors in $F$ are orthogonal. However, proving sub-multiplicativity in any of these settings has so far resisted are best efforts.

It would be also of much interest to consider the adaptive zero-error capacity of DM-TWC. Allowing Alice and Bob to adapt their transmissions on the fly can in general enlarge the zero-error capacity region. As a simple example, note that a point-to-point channel with noiseless feedback is a special case of the DM-TWC (where Bob has no information to send). In~\cite{Shannon1956}, Shannon explicitly derived the zero-error capacity with feedback for the point-to-point channel, and pointed out that for the channel corresponding to Pentagon graph this capacity is given by $\log(5/2)\approx 1.32$. This is strictly larger than the zero-error capacity without feedback $(\log5)/2\approx 1.16$, which can be thought of in this case as the non-adaptive zero-error capacity of the channel. Exploring the differences between the adaptive and non-adaptive zero-error capacity regions of a general DM-TWC remains a challenging future work.

\section*{Acknowledgments}

We would like to thank Sihuang Hu and Lele Wang for some helpful discussions on the generalization of Lov\'asz theta number.

\end{document}